\keywords{ recursive definitions, coinduction, inference systems, fixed points, proof trees}
\newif\ifsubmit%
\newcommand{\EZ}[1]{{#1}}
\newcommand{\FD}[1]{{#1}}
\newcommand{\EZComm}[1]{}
\newcommand{\DAComm}[1]{}
\newcommand{\FDComm}[1]{}
\newcommand{\FD}[1]{ {\color{red}#1} }
\newcommand{\EZ}[1]{ {\color{blue}#1} }
\newcommand{\EZComm}[1]{{\scriptsize\color{blue}[\textbf{Elena: }#1]}}
\newcommand{\DAComm}[1]{{\scriptsize\color{magenta}[\textbf{Davide: }#1]}}
\newcommand{\FDComm}[1]{{\scriptsize\color{red} [\textbf{Francesco: }#1]}}
\newif\ifrev%
\newcommand{\Rev}[1]{{\color{orange} #1}}
\newcommand{\Rev}[1]{#1}
\newcommand{\Space}{\hskip 0.7em}
\newcommand{\BigSpace}{\hskip 1.5em}
\newcommand{\refToFigure}[1]{Figure~\ref{fig:#1}}
\newcommand{\refToSection}[1]{Section~\ref{sect:#1}}
\newcommand{\refToTheorem}[1]{Theorem~\ref{theo:#1}}
\newcommand{\refToCorollary}[1]{Corollary~\ref{cor:#1}}
\newcommand{\refToLemma}[1]{Lemma~\ref{lem:#1}}
\newcommand{\refToDefinition}[1]{Definition~\ref{def:#1}}
\newcommand{\refToProposition}[1]{Proposition~\ref{prop:#1}}
\newcommand{\refToFact}[1]{Fact~\ref{fact:#1}}
\DeclareMathOperator*{\argmin}{arg\,min}
\newcommand{\fun}[3]{{#1}:{#2}\rightarrow{#3}}
\newcommand{\Id}[1]{\textsf{id}_{#1}}
\newcommand{\dom}{\mathit{dom}}
\newcommand{\Tuple}[1]{\left({#1}\right)}
\newcommand{\Pair}[2]{\Tuple{{#1},\,{#2}}}
\newcommand{\Triple}[3]{\Tuple{{#1},\,{#2},\,{#3}}}
\newcommand{\N}{\mathbb{N}}
\newcommand{\Z}{\mathbb{Z}}
\newcommand{\judg}{\mathit{j}}
\newcommand{\prem}{\textit{Pr}}
\newcommand{\cons}{\textit{c}}
\newcommand{\is}{\mathcal{I}}
\newcommand{\cois}{\is^{\textsf{co}}}
\newcommand{\universe}{\mathcal{U}}
\newcommand{\myrule}{\Rule{\prem}{\cons}}
\newcommand{\Rule}[2]{
	\displaystyle%
	\frac{#1}{#2}
}
\newcommand{\CoRule}[2]{
	\displaystyle%
	\genfrac{}{}{1pt}{}{#1}{#2}
}
\newcommand{\CoAxiom}[1]{
	\CoRule{  }{#1}
}
\newcommand{\Op}[1]{{\textit{F}_{#1}}}
\newcommand{\IterOp}[2]{\textit{F}^{#2}_{#1}}
\newcommand{\Ind}[1]{\textit{Ind}(#1)} 
\newcommand{\CoInd}[1]{\textit{CoInd}(#1)} 
\newcommand{\Generated}[2]{\textit{Gen}(#1,#2)} 
\newcommand{\coaxioms}{\gamma}
\newcommand{\DefSet}{\mathcal{D}}
\newcommand{\Spec}{\mathcal{S}}
\newcommand{\CoIndPrinciple}{\textsc{\small (CoInd)}\xspace}
\newcommand{\IndPrinciple}{\textsc{\small (Ind)}\xspace}
\newcommand{\SpecAllPos}{\Spec^\textit{allPos}}
\newcommand{\EList}{\Lambda}
\newcommand{\Cons}[2]{{#1}{::}{#2}}
\newcommand{\LSet}{\mathbb{L}}
\newcommand{\LInfSet}{\LSet^\infty}
\newcommand{\Bool}{\mathbb{B}}
\newcommand{\True}{\textsf{T}}
\newcommand{\False}{\textsf{F}}
\newcommand{\xs}{\mathit{xs}}
\newcommand{\elemsName}{\textit{elems}}
\newcommand{\memberName}{\textit{member}}
\newcommand{\allPosName}{\textit{allPos}}
\newcommand{\maxElemName}{\textit{maxElem}}
\newcommand{\elems}[2]{\elemsName{\Pair{#1}{#2}}}
\newcommand{\member}[3]{\memberName{\Triple{#1}{#2}{#3}}}
\newcommand{\memberPred}[2]{\memberName{\Pair{#1}{#2}}}
\newcommand{\allPos}[2]{\allPosName{\Pair{#1}{#2}}}
\newcommand{\allPosPred}[1]{\allPosName{\Tuple{#1}}}
\newcommand{\maxElem}[2]{\maxElemName{\Pair{#1}{#2}}}
\newcommand{\Sub}[1]{\textsf{Sub}(#1)} 
\newcommand{\GroupGen}[1]{\left\langle{#1}\right\rangle}
\newcommand{\node}{\textit{v}}
\newcommand{\anode}{\textit{u}}
\newcommand{\nodeset}{\mathcal{N}}
\newcommand{\Visit}[2]{#1{\stackrel{\star}{\rightarrow}}#2}
\newcommand{\adj}{\textit{adj}}
\newcommand{\Nodes}{\textit{V}}
\newcommand{\Labels}{\mathcal{L}}
\newcommand{\T}{\mathcal{T}}
\newcommand{\CTree}[1]{\T^{\textsf{ci}}(#1)} 
\newcommand{\Tree}[2][{}]{\T_{#1}{(#2)}}
\newcommand{\Path}{\textsf{P}}
\newcommand{\Paths}[1]{\textsf{path}{(#1)}}
\newcommand{\dsub}{\textit{dsub}}
\newcommand{\chl}{\textit{chl}}
\newcommand{\dist}[3]{\textit{dist}{\Triple{#1}{#2}{#3}}}
\newcommand{\minPath}[4]{\textit{spath}{\Tuple{{#1},\,{#2},\,{#3},\,{#4}}}}
\newcommand{\Edges}{\textit{E}}
\newcommand{\TOrder}{\triangleleft}
\newcommand{\ApproxTOrder}[1]{\TOrder_{#1}}
\newcommand{\TLub}{\bigvee}
\newcommand{\ApproxEq}[1]{\bowtie_{#1}}
\newcommand{\String}[1]{{#1}^\star}
\newcommand{\EString}{\varepsilon}
\newcommand{\Len}[1]{\left | {#1} \right |}
\newcommand{\nonterminal}{\textit{A}}
\newcommand{\firstset}{\mathcal{F}}
\newcommand{\first}[1]{\textit{first}(#1)} 
\newcommand{\First}[2]{\textit{first}(#1,#2)} 
\newcommand{\produzioneinline}[2]{#1::=#2}
\newcommand{\simb}{\sigma} 
\newcommand{\isin}{\ensuremath{\mathit{is\_in}}}
\newcommand{\isinZero}{\ensuremath{\mathit{is\_in0}}}
\newcommand{\pathZero}{\ensuremath{\mathit{path0}}}
\newcommand{\tree}{\ensuremath{\mathit{tree}}}
\newcommand{\add}{\ensuremath{\mathit{add}}}
\newcommand{\sem}[1]{\ensuremath{\left\llbracket{#1}\right\rrbracket}}
\newcommand{\infv}{\ensuremath{v_\infty}}
\newcommand{\eval}[2]{\ensuremath{{#1}\Rightarrow{#2}}}
\newcommand{\subs}[3]{\ensuremath{{#1}[{#2}\leftarrow{#3}]}}
\newcommand{\RuleName}[3]{
\displaystyle%
\mbox{({#1})}\,\frac{#2}{#3}
}
\newcommand{\CoAxiomName}[2]{
\displaystyle%
\mbox{({#1})}\, \CoAxiom{#2}
}
\newcommand{\order}{\sqsubseteq}
\newcommand{\LowSet}[1]{{\downarrow}\,{#1}}
\newcommand{\UpSet}[1]{{\uparrow}\,{#1}}
\newcommand{\lub}{\bigsqcup}
\newcommand{\glb}{\bigsqcap}
\newcommand{\function}{\mathit{F}}
\newcommand{\afunction}{\mathit{G}}
\newcommand{\lattice}{\mathit{L}}
\newcommand{\meet}{\sqcap}
\newcommand{\join}{\sqcup}
\newcommand{\Pre}[1]{\textsf{pre}(#1)} 
\newcommand{\Post}[1]{\textsf{post}(#1)} 
\newcommand{\Fix}[1]{\textsf{fix}(#1)} 
\newcommand{\lfp}{\mu}
\newcommand{\gfp}{\nu}
\newcommand{\Iterate}[2]{\mathit{I}_{{#1}, {#2}} }
\newcommand{\CSys}{\mathit{C}}
\newcommand{\KSys}{\mathit{K}}
\newcommand{\closure}{\nabla}
\renewcommand{\ker}{\Delta}
\newcommand{\bound}{\beta}
\newcommand{\Restricted}[2]{{#1_{{\sqcap}#2}}}
\newcommand{\Extended}[2]{{#1_{{\sqcup}#2}}}
\newcommand{\FJ}{\textsc{FJ}\xspace}
\newcommand{\coFJ}{\textsc{coFJ}\xspace}
\newcommand{\LambdaCalculus}{$\lambda$-calculus\xspace}
\newcommand{\Restrict}[1]{_{\mid #1}} 
\theoremstyle{plain} 
\begin{document}

\title{Coaxioms: flexible coinductive definitions by inference systems}

\author[F.~Dagnino]{Francesco Dagnino}	
\address{DIBRIS, University of Genova}	
\email{francesco.dagnino@dibris.unige.it}  

%





\begin{abstract}
  We introduce a generalized notion of inference system to support more flexible interpretations of recursive definitions.
Besides axioms and inference rules with the usual meaning,  we allow also  \emph{coaxioms}, which are, intuitively, axioms which can only be applied ``at infinite depth'' in a proof tree.
Coaxioms allow us to interpret recursive definitions as fixed points which are not necessarily the least, nor the greatest one, \FD{whose existence is guaranteed by a smooth extension of classical results. }
This notion nicely subsumes standard inference systems and their inductive and coinductive interpretation, thus allowing formal reasoning in cases where the inductive and coinductive interpretation do not provide the intended meaning, \FD{but are rather} mixed together.

This is a corrected version of the paper (\url{https://arxiv.org/abs/1808.02943v4}) published originally on 12 March 2019.

\end{abstract}

\maketitle


\section{Introduction}\label{sect:intro}

Recursive definitions are everywhere in computer science.
They allow very compact and intuitive definition of several types of objects: data types, predicates and functions.
Furthermore, they are also essential in programming languages, especially for declarative paradigms, to write non-trivial programs.

Assigning a {formal} semantics to recursive definitions  is not an easy task;
usually, a recursive definition is associated with a monotone function on a partially ordered set, or, more generally, with a functor on a category, and the semantics is defined to be  a \emph{fixed point} of such function/functor~\cite{JacobsRutten97}.
However, in general, a monotone function (a functor) has several fixed points, hence the problem is how to choose the right one, that is, the fixed point that matches the intended semantics.

The most {widely known} semantics for recursive definitions  is the \emph{inductive} one~\cite{Aczel77}, which corresponds to the least fixed point/{initial algebra}.
This interpretation works perfectly in all cases where we can reach a base case in a finite number of steps, this is the case, for instance, when we deal with well-founded (algebraic) objects (such as natural numbers, finite lists, finite trees etc.). 

Nevertheless, in some cases the inductive interpretation is not {appropriate}.
This is the case, for instance, when we deal with circular, or more generally non-well-founded (coalgebraic) objects (graphs, infinite lists, infinite trees, etc.), where clearly we are not guaranteed to reach a base case.
Here a possibility is to choose the dual to induction: the \emph{coinductive} semantics~\cite{Aczel88, Rutten00, Jacobs16}, corresponding to the greatest fixed point/{final coalgebra}.

Therefore we have two strongly opposite options to interpret recursive definitions: the inductive (least) semantics or the coinductive (greatest) semantics.
However, as we will see, there are cases where neither of these two dual solutions is suitable, hence the need of more flexibility to choose the desired fixed point. 

On the programming language side, the {most widely adopted} semantics for recursive definition is again the inductive one.
However, the support for coinductive semantics have been provided both in logic programming~\cite{SimonMBG06, SimonBMG07,KomendantskayaJ15} and in functional programming~\cite{Hagino87,BirdWadler88}.
In both cases, the same dichotomy described above emerges, hence, recently, some operational models to support  more flexible definitions on non-well-founded structures have been proposed: in the logic  paradigm~\cite{Ancona13}, in the functional paradigm~\cite{JeanninKS13, JeanninKS17} and in the {object-oriented} paradigm~\cite{AnconaZucca12, AnconaZucca13}
%

In this paper, we propose a framework to interpret recursive definitions as fixed points that are not necessarily the least, nor the greatest one.
More precisely, we will extend the standard and well-known framework of \emph{inference systems}, where recursive definitions are represented as sets of \emph{rules}, as we will formally define in the next section.

In order to illustrate the complexity of interpreting recursive definitions especially in presence of non-well-founded structures, and to introduce the idea behind our proposal, let us consider some examples on lists of integers.
In the following, $l$ will range over finite or infinite lists and  $x, y, z$ {over} integers, $\EList$ is the empty list and $\Cons{-}{-}$ is the list constructor.
We start with the simple predicate $\memberPred{x}{l}$ stating that the element $x$ belongs to $l$, {defined as follows}:
\begin{mathpar}
\Rule{}{ \memberPred{x}{\Cons{x}{l}} }
\and
\Rule{
	\memberPred{x}{l}
}{ \memberPred{x}{\Cons{y}{l}} } x\ne y
\end{mathpar}
The standard way to interpret an inference system is the inductive one, {which} consists of the set of judgements having a finite derivation.
For the above definition, the inductive interpretation works perfectly in all cases, also for infinite lists.
Intuitively, this is due to the fact that in all cases, in order to establish that $\memberPred{x}{l}$ holds, we have to {find} $x$ in $l$, and{,} if $x$ actually belongs to $l$, we find it in finitely many steps.

Let us consider another example: the predicate $\allPosPred{l}$ {stating that} $l$ contains only strictly positive natural numbers.
\begin{mathpar}
\Rule{}{ \allPosPred{\EList} }
\and
\Rule{
	\allPosPred{l}
}{ \allPosPred{\Cons{x}{l}} } x>0
\end{mathpar}
Here the inductive interpretation still works well on finite lists, but fails on infinite lists, since, intuitively, to establish whether $l$ contains only positive  elements, we need to inspect the whole list, and this cannot be done with a finite derivation for an infinite list.

Therefore, we have to switch to the coinductive interpretation, considering as semantics the set of judgements having an arbitrary (finite or infinite) derivation.
This is indeed the correct way to get the expected semantics also on infinite lists.

We now consider a slight variation of these two examples.
Let $\Bool = \{\True, \False\}$ be the set of truth values,  consider the judgements $\member{x}{l}{b}$ and $\allPos{l}{b}$ with $b \in \Bool$ such that
\begin{itemize}
\item $\member{x}{l}{\True}$ holds iff $\memberPred{x}{l}$ holds, and otherwise $\member{x}{l}{\False}$ holds
\item $\allPos{l}{\True}$ holds iff $\allPosPred{l}$ holds, and otherwise $\allPos{l}{\False}$ holds
\end{itemize}
We can define these judgements by means of the following inference systems
\begin{small}
\begin{mathpar}
{\Rule{}{ \member{x}{\EList}{\False} }  }
\and
\Rule{}{ \member{x}{\Cons{x}{l}}{\True} }
\and
\Rule{ \member{x}{l}{b} }{ \member{x}{\Cons{y}{l}}{b} } x \ne y
\\
\Rule{}{ \allPos{\EList}{\True} }
\and
\Rule{}{ \allPos{\Cons{x}{l}}{\False} } x\le 0
\and
\Rule{ \allPos{l}{b} }{ \allPos{\Cons{x}{l}}{b} }x>0
\end{mathpar}
\end{small}
For both definitions, neither the inductive interpretation, nor the coinductive one works well on infinite lists.
For the judgement $\member{x}{l}{b}$, with the inductive interpretation we cannot derive any judgement of shape $\member{x}{l}{\False}$ where $l$ is an infinite list and $x$ does not belong to $l$, while with the coinductive interpretation we get both $\member{x}{l}{\False}$ and $\member{x}{l}{\True}$.
For the judgement $\allPos{l}{b}$, with the inductive interpretation we cannot derive any judgement of shape $\allPos{l}{\True}$ where $l$ is an infinite list containing only positive elements, while with the coinductive interpretation we get both $\allPos{l}{\True}$ and $\allPos{l}{\False}$.

We consider now a last example, defining the  predicate $\maxElem{l}{x}$ stating that $x$ is the maximum of the list $l$.
The definition is given by the following inference system
\begin{mathpar}
\Rule{}{ \maxElem{\Cons{x}{\EList}}{x} }
\and
\Rule{
	\maxElem{l}{y}
}{ \maxElem{\Cons{x}{l}}{z} }\ z = \max\{x, y\}
\end{mathpar}
The inductive interpretation works well on finite list{s}, but does not allow to derive any judgement {on} infinite lists, again, because, to compute a maximum, we need to inspect the whole list.
The coinductive interpretation still works well on finite lists, but, again,  we can derive too many judgements regarding infinite lists: for instance{,} if $l$ is the infinite list of 1s, we can derive both $\maxElem{l}{1}$, which is correct, and $\maxElem{l}{2}$, that is clearly wrong, since $2$ does not belong to $l$.

All these examples point out that the inductive interpretation {cannot} properly deal with non-well-founded structure, while the coinductive one allows {the derivation of} too many judgements.
Hence we need a way to {``}filter out'' some infinite derivations, in order to restrict the coinductive interpretation to the intended semantics.
We make this possible by introducing  \emph{coaxioms}.

Coaxioms are special rules that need to be {provided} together with {standard rules} in order to control {their} semantics.
Intuitively, they are axioms that can be only applied ``at infinite depth'' in a derivation.
From a model-theoretic perspective, coaxioms allow one to choose as interpretation a fixed point that is not necessarily  either the least or the greatest one.
For instance{,} in the last three examples{,} the intended semantics is always a fixed point that lies between the least, that is undefined on infinite lists, and the greatest one, that is undetermined on them.
In addition, we will also show that inductive and coinductive interpretations are particular cases of our extension, {thus} proving that it is actually a generalization.
Another important feature is that in this framework we can interpret also inference systems where judgements that should be defined inductively and coinductively are mixed together in the same definition.

The notion of coaxiom has been inspired by  some of the operational models mentioned above~\cite{AnconaZucca12, AnconaZucca13, Ancona13} and, in our intention, this generalization of inference systems will serve as an abstract framework for a better understanding of these operational models, allowing formal reasoning on them.

The rest of the paper is organized as follows.
In \refToSection{coaxioms} we will recall some basic concepts regarding inference systems and we will introduce  \emph{inference systems with coaxioms}, informally explaining their semantics with a bunch of examples.
The fixed point semantics for inference systems with coaxioms is formally defined in \refToSection{coaxioms-model}.
{There} we will present closure and kernel systems, which are well-known notions on the power-set, in the more general setting of complete lattices, getting the definition of \emph{bounded fixed point}, that will represent the semantics induced by coaxioms for an inference system.
In \refToSection{coaxioms-trees} we will first formalize the notion of proof tree, which is the object representing a derivation,  then we will  introduce several equivalent semantics based {on proof trees, that is, proof-theoretic semantics.}
Particularly interesting are the two characterizations exploiting the new concept of \emph{approximated proof tree}, that will allow us to provide the semantics in terms of sequences of well-founded trees, without considering non-well-founded derivations.
Proof techniques for coaxioms to prove both completeness and soundness of definitions will be  discussed in \refToSection{coaxioms-reasoning}.
In particular{,} we will introduce the \emph{bounded coinduction principle}{,} that is a generalization of the standard coinduction principle, aimed to show the completeness of a definition expressed in terms of an inference system with coaxioms.
In \refToSection{coaxioms-examples}, we will illustrate weaknesses and strengths of our framework, using various, more involved, examples.
A straightforward and further extension of the framework is presented in \refToSection{corules}, where we {introduce}  \emph{corules}.
Finally, in \refToSection{related} related work is summarized and \refToSection{conclu} conclude{s} the work.

This paper is extracted from {my} master thesis~\cite{Dagnino17}{, and}  presents in more detail the work we have done in~\cite{AnconaDZ17esop}.
Notably, here we discuss closures and kernels from a more general point of view (see \refToSection{ck}), in order to better frame the bounded fixed point in lattice theory.
Furthermore, thanks to a more formal treatment of proof trees, we introduce an additional proof-theoretic characterization, using approximated proof trees (see \refToTheorem{approx-sequence}).
We also present another example of application of coaxioms to graphs (see \refToSection{graph-example}).
With respect to~\cite{Dagnino17}, here we briefly present a straightforward further extension of the framework, considering also \emph{corules} (see \refToSection{corules}).
We only briefly mention corules, because there are still few and quite involved examples where they seem to be really needed (one can be found in~\cite{AnconaDZ18}), hence restricting ourselves to coaxioms allows us to provide a clearer presentation.


\section{Inference systems with coaxioms}\label{sect:coaxioms}

First of all, we recall some standard notions about inference systems~\cite{Aczel77,Sangiorgi11}.
In the following, assume a set $\universe$, called \emph{universe}, whose elements are named \emph{judgements}.
An \emph{inference system} $\is$ consists of a set of \emph{inference rules}, which are pairs $\frac{\prem}{\cons}$, with $\prem\subseteq\universe$ the set of \emph{premises}, $\cons\in\universe$ the \emph{consequence} (a.k.a. \emph{conclusion}).

The intuitive interpretation of a rule is that if the premises in $\prem$ hold then the consequence $\cons$ should hold as well.
In particular, an \emph{axiom} is (the consequence of) a rule with empty set of premises, which necessarily holds.

{A \emph{proof tree}\footnote{See \refToSection{coaxioms-trees} for a more formal account of proof trees.} is a tree whose nodes are (labeled by) judgements in $\universe$, and there is a node $\cons$ with set of children $\prem$ only if there exists a rule $\frac{\prem}{\cons} \in \is$.
We say that a judgement $\judg \in \universe$ \emph{has a proof tree} if it is the root of some proof tree.
The \emph{inductive interpretation} of $\is$ is the set of judgements having a well-founded proof tree, while the \emph{coinductive interpretation} of $\is$ is the set of judgements having an arbitrary (well-founded or not) proof tree.
It can be \EZ{shown} that these definitions are equivalent to standard fixed point semantics, \EZ{which} we will discuss later. }

\subsection{A gentle introduction}
We introduce now our generalization of inference systems.
An \emph{inference system with coaxioms} is a pair $\Pair{\is}{\coaxioms}$ where $\is$ is an inference system and $\coaxioms \subseteq \universe$ is a set of \emph{coaxioms}{.}
A coaxiom $\cons \in \coaxioms$ will be written as $\CoAxiom{\cons}$, very much like an axiom, and, analogously to an axiom, it can be {used} as initial assumption to derive other judgements.
However, coaxioms will be used in a special way, that is, intuitively they can be used only ``at infinite depth'' in a derivation.
This will allow us to impose an initial assumption also to infinite proof trees, that otherwise  are not required to have {a base case}.
We will make precise this notion in next sections, now we will present some examples to illustrate  how to use coaxioms  to govern the semantics of an inference system.

As we are used to doing for rules,  we will express  sets of coaxioms by means of  \emph{meta-coaxioms} with side conditions.

Let us start with an introductory example concerning graphs, that are a widely used non-well-founded data type.
Consider a graph $\Pair{\Nodes}{\adj}$ where $\Nodes$ is the set of nodes and $\fun{\adj}{\Nodes}{\wp(\Nodes)}$ is the adjacency function, that is, for each node $\node \in \Nodes$, $\adj(\node)$ is the set of nodes adjacent to $\node$.
We want to define the judgement $\Visit{\node}{\nodeset}$ stating that {$\nodeset$ is the set of nodes reachable  from $\node$.}

We define this judgement with the following (meta-)rule and (meta-)coaxiom: 
\begin{mathpar}
\Rule{
	\Visit{\node_1}{\nodeset_1} \Space \ldots \Space \Visit{\node_k}{\nodeset_k}
}{ \Visit{\node}{\{\node\} \cup \nodeset_1 \cup \ldots \cup \nodeset_k} }
\adj(\node) = \{\node_1, \ldots, \node_k\}
\and
\CoAxiom{\Visit{\node}{\emptyset}} \node \in \Nodes
\end{mathpar}
{To be more concrete, we consider the graph drawn in \refToFigure{concrete-graph}, whose corresponding rules are reported in the same figure. }
\begin{figure}
\begin{center}
\begin{tikzcd}
a \ar[r, bend left] & b \ar[l, bend left] & c
\end{tikzcd}
\end{center}
\vspace{1ex}
\hrule
\vspace{1ex}
\begin{mathpar}
\Rule{\Visit{b}{\nodeset}}{\Visit{a}{\{a\}\cup\nodeset}}
\and
\Rule{\Visit{a}{\nodeset}}{\Visit{b}{\{b\}\cup\nodeset}}
\and
\Rule{}{\Visit{c}{\{c\}}}
\and
\CoAxiom{\Visit{a}{\emptyset}}
\and
\CoAxiom{\Visit{b}{\emptyset}}
\and
\CoAxiom{\Visit{c}{\emptyset}}
\end{mathpar}
\caption{Concrete graph example}\label{fig:concrete-graph}
\end{figure}

Let us ignore for a moment coaxioms and reason about the standard interpretations.
It is clear that, if we interpret the system inductively, we will only prove the judgement $\Visit{c}{\{c\}}$, because it is the only axiom and other rules do not depend on it.
In other words, the judgement $\Visit{\node}{\nodeset}$, like other judgements on graphs,  cannot be defined inductively by structural recursion, since the structure is not well-founded.
In particular, the problem are cycles, where the proof may be trapped, continuously unfolding the structure of the graph without ever reaching a base case.
{Usual implementations of visits on graphs rely on imperative features and correct this issue by marking already visited nodes.
In this way, they avoid visiting twice the same node, actually breaking cycles. }

On the other hand, if we interpret the meta-rules coinductively (excluding again the coaxioms), then we get the correct judgements $\Visit{a}{\{a,b\}}$ and $\Visit{b}{\{a,b\}}$, but we also get the wrong judgements $\Visit{a}{\{a,b,c\}}$ and $\Visit{b}{\{a,b,c\}}$, as shown by the following derivations
\begin{small}
\begin{mathpar}
\Rule{
	\Rule{
		\Rule{
			\vdots
		}{ \Visit{a}{\{a, b\}} }
	}{ \Visit{b}{\{a, b\}} }
}{ \Visit{a}{\{a, b\}} }
\and
\Rule{
	\Rule{
		\Rule{
			\vdots
		}{ \Visit{b}{\{a, b\}} }
	}{ \Visit{a}{\{a, b\}} }
}{ \Visit{b}{\{a, b\}} }
\and
\Rule{
	\Rule{
		\Rule{
			\vdots
		}{ \Visit{a}{\{a, b, c\}} }
	}{ \Visit{b}{\{a, b, c\}} }
}{ \Visit{a}{\{a, b, c\}} }
\and
\Rule{
	\Rule{
		\Rule{
			\vdots
		}{ \Visit{b}{\{a, b, c\}} }
	}{ \Visit{a}{\{a, b, c\}} }
}{ \Visit{b}{\{a, b, c\}} }
\end{mathpar}
\end{small}

{As already said, coaxioms allow us to impose additional conditions on proof trees to be considered valid:
the semantics of an inference system with coaxioms $\Pair{\is}{\coaxioms}$ is the set of the judgements having
\begin{enumerate}
\item an arbitrary (well-founded or not) proof tree $t$ in $\is$ such that
\item each node of $t$ has a well-founded proof tree in $\Extended{\is}{\coaxioms}$
\end{enumerate}
where $\Extended{\is}{\coaxioms}$ is the inference system obtained enriching $\is$ by judgements in $\coaxioms$ considered as axioms.
Hence, we can use coinduction thanks to 1, but we use coaxioms to restrict its usage, by \emph{filtering out} undesired proof trees. }

Note that for nodes in $t$, which are roots of a well-founded subtree, {the second condition always holds} (a well-founded proof tree in $\is$ is a well-founded proof tree in $\Extended{\is}{\coaxioms}$ as well), hence it is only significant for nodes which are roots of an infinite path in the proof tree.

For instance, in the example {in \refToFigure{concrete-graph}},  the judgement $\Visit{a}{\{a,b\}}$ has an infinite proof tree in $\is$ where each node has a finite proof tree in $\Extended{\is}{\coaxioms}$, as shown below.
\begin{small}
\[
\begin{array}{c|c} 
\multicolumn{1}{l|}{\is}  & \multicolumn{1}{l}{\Space \Extended{\is}{\coaxioms}} \\
\Rule{
	\Rule{
		\Rule{
			\vdots
		}{ \Visit{a}{\{a, b\}} }
	}{ \Visit{b}{\{a, b\}} }
}{ \Visit{a}{\{a, b\}} }
\Space & \Space
\Rule{
	\Rule{
		\Rule{ }{ \Visit{a}{\emptyset} }
	}{ \Visit{b}{\{b\}} }
}{ \Visit{a}{\{a, b\}} }
\BigSpace
\Rule{
	\Rule{
		\Rule{ }{ \Visit{b}{\emptyset} }
	}{ \Visit{a}{\{a\}} }
}{ \Visit{b}{\{a, b\}} }
\end{array}
\]
\end{small}
{On the other hand, the judgement $\Visit{a}{\{a, b, c\}}$ has no finite proof tree in $\Extended{\is}{\coaxioms}$, because $c$ is not reachable from $a$; hence such judgement is  not derivable in $\Pair{\is}{\coaxioms}$, as expected. }

\Rev{We mentioned before an alternative view of the condition imposed by coaxioms on infinite proof trees: in an infinite derivation coaxioms can only be used ``at infinite depth''.
The formal counterpart of this sentence is that the infinite proof tree can be approximated, for all $n\geq 0$, by a well-founded proof tree in $\Extended{\is}{\coaxioms}$ where coaxioms can only be used at depth greater than $n$.
Hence, in a sense, the infinite proof tree is obtained by ``pushing'' coaxioms to infinity.

For instance, the infinite proof tree in $\is$ for the judgement $\Visit{a}{\{a,b\}}$ shown above can be approximated, for any $n\ge 0$, by a finite proof tree in $\Extended{\is}{\coaxioms}$ where coaxioms are used at depth greater than $n$, as shown below.
\begin{small}
\begin{mathpar}
\Rule{
	\Rule{
		\Rule{ }{ \Visit{a}{\emptyset} }
	}{ \Visit{b}{\{b\}} }
}{ \Visit{a}{\{a, b\}} }
\and
\Rule{
	\Rule{
		\Rule{
			\Rule{ }{ \Visit{b}{\emptyset} }
		}{ \Visit{a}{\{a\}} }
	}{ \Visit{b}{\{a, b\}} }
}{ \Visit{a}{\{a, b\}} }
\and
\Rule{
	\Rule{
		\Rule{
			\Rule{
				\Rule{ }{ \Visit{a}{\emptyset} }
			}{ \Visit{b}{\{b\}} }
		}{ \Visit{a}{\{a, b\}} }
	}{ \Visit{b}{\{a, b\}} }
}{ \Visit{a}{\{a, b\}} }
\and
\cdots
\end{mathpar}
\end{small}
This does not hold, instead, for $\Visit{a}{\{a, b, c\}}$, since it has no finite derivation using coaxioms.
}

As a second example, we consider the definition of the \textit{first} sets in a grammar. Let us represent a context-free grammar by its set of terminals $T$, its set of non-terminals $N$, and all the productions $\produzioneinline{\nonterminal}{\beta_1\mid\ldots\mid\beta_n}$ with left-hand side $\nonterminal$, for each non-terminal $\nonterminal$.
Recall that, for each $\alpha\in {(T\cup N)}^{+}$, we can define the set
$\first{\alpha} = \{ \simb \mid \simb\in T, \alpha{\rightarrow^\star}\simb\beta\}$.
Informally, $\first{\alpha}$ is the set of the initial terminal symbols of the strings which can be derived from a string $\alpha$ in $0$ or more steps.

We defines the judgement $\First{\alpha}{\firstset}$ by the following inference system with coaxioms, where $\firstset\subseteq T$.
\begin{small}
\begin{mathpar}
\Rule{}{ \First{\simb\alpha}{\{\sigma\}} }\sigma\in T
\and
\Rule{
	\First{\nonterminal}{\firstset}
}{ \First{\nonterminal\alpha}{\firstset} }
\begin{array}{l}
\nonterminal\in N\\
\nonterminal{\not\rightarrow^\star}\epsilon
\end{array}
\and
\Rule{
	\First{\nonterminal}{\firstset}
	\\
	\First{\alpha}{\firstset'}
}{ \First{\nonterminal\alpha}{\firstset\cup\firstset'} }
\begin{array}{l}
\nonterminal\in N\\
\nonterminal{\rightarrow^\star}\epsilon
\end{array}
\\
\Rule{}{ \First{\epsilon}{\emptyset} }
\and
\Rule{
	\First{\beta_1}{\firstset_1}
	\\ \cdots \\ 
	\First{\beta_n}{\firstset_n}
}{ \First{\nonterminal}{\firstset_1\cup\dots\cup\firstset_n} }
\produzioneinline{\nonterminal}{\beta_1\mid\ldots\mid\beta_n}
\and
\CoAxiom{ \First{\nonterminal}{\emptyset} }\nonterminal\in N 
\end{mathpar}
\end{small}

The rules of the inference system correspond to the natural recursive definition of \textit{first}.
Note, in particular, that in a string of shape $\nonterminal\alpha$, if the non-terminal $\nonterminal$ is \emph{nullable}, that is, we can derive from it the empty string, then the \textit{first} set for $\nonterminal\alpha$ should also include the \textit{first} set for $\alpha$.

As in the previous example on graphs, the problem with this recursive definition is that, since the non-terminals in a grammar can mutually refer to each other, the function defined by the inductive interpretation can be undefined, since it may never reach a base case.
That is, a naive top-down implementation might not terminate.
For this reason, \textit{first} sets are typically computed by an imperative bottom-up algorithm, or the top-down implementation is corrected by marking already encountered non-terminals, analogously to what is done for visiting graphs.
Again as in the previous example, the coinductive interpretation may fail to be a function, whereas, with the coaxioms, we get the expected result.

Let us now consider some examples of judgements concerning lists.
We consider arbitrary (finite or infinite) lists of integers {and} denote {by} $\LInfSet$ the set of such lists.
We first consider the judgement $\maxElem{l}{x}$, with $l \in \LInfSet$ and $x \in \Z$,  stating that $x$ is the maximum element that {occurs in}  $l$.
This judgement has a natural definition by structural recursion we have discussed in \refToSection{intro} where we have shown that neither inductive nor coinductive interpretations are able to capture the expected semantics.
Therefore, in the following definition we have {added a coaxiom} to the inference system from \refToSection{intro} in order to restrict the coinductive interpretation{.}
\begin{mathpar}
\Rule{}{ \maxElem{\Cons{x}{\EList}}{x} }
\and
\Rule{ \maxElem{l}{y} }{ \maxElem{\Cons{x}{l}}{z} } z = \max\{x, y\}
\and
\CoAxiom{ \maxElem{\Cons{x}{l}}{x} }
\end{mathpar}
Recall that the problem with the coinductive interpretation is that it accepts all judgements $\maxElem{l}{x}$ where $x$ is an upper bound of $l$, even if it does not {occur} in $l$.
The coaxiom, thanks to the way it is used, imposes that $\maxElem{l}{x}$ may hold only if $x$ appears somewhere in the list, hence undesired proofs are filtered out.

A similar example is given by the judgement $\elems{l}{\xs}$ where $l \in \LInfSet$ and $\xs \subseteq \Z$, stating that $\xs$ is the carrier of the list $l$, that is, the set of all elements appearing in $l$.
This judgement can be defined using coaxioms as follows{:}
\begin{mathpar}
\Rule{}{ \elems{\EList}{\emptyset} }
\and
\Rule{ \elems{l}{\xs} }{ \elems{\Cons{x}{l}}{\{x\} \cup \xs} }
\and
\CoAxiom{ \elems{l}{\emptyset} }
\end{mathpar}
If we ignore the coaxiom {and} interpret the system coinductively{, then} we can prove $\elems{l}{\xs}$ for any {superset} $\xs$ of the carrier of $l$ if $l$ is infinite.
The coaxioms again {allow} us to filter out undesired derivations.
For instance, for $l$ the infinite list of {1s}, any judgement $\elems{l}{\xs}$ with $1 \in\xs$ can be derived.
Indeed, for any such judgement we can construct an infinite proof tree which is a chain of applications of the last {meta-rule}.
With the coaxioms, we only consider the infinite trees where the node $\elems{l}{\xs}$ has a finite proof tree in the inference system enriched by the coaxioms.
This is only true for $\xs=\{1\}$.

Using coaxioms, we can get the right semantics also for other examples on lists discussed in \refToSection{intro}, in particular definitions for predicates $\member{x}{l}{b}$ and $\allPos{l}{b}$ are reported below.
\begin{small}
\begin{mathpar}
\Rule{}{ \member{x}{\EList}{\False} }
\and
\Rule{}{ \member{x}{\Cons{x}{l}}{\True} }
\and
\Rule{ \member{x}{l}{b} }{ \member{x}{\Cons{y}{l}}{b} } x \ne y
\and
\CoAxiom{ \member{x}{l}{\False} }
\\
\Rule{}{ \allPos{\EList}{\True} }
\and
\Rule{}{ \allPos{\Cons{x}{l}}{\False} } x\le 0
\and
\Rule{ \allPos{l}{b} }{ \allPos{\Cons{x}{l}}{b} }x>0
\and
\CoAxiom{ \allPos{l}{\True} }
\end{mathpar}
\end{small}
In \refToSection{intro} we said that the standard coinductive interpretation allows us to prove too many judgements.
For instance, if $l$ is the infinite list of 1s, hence $l = \Cons{1}{l}$\footnote{It is well-known that an infinite term can be represented by a set of recursive equations, see, e.g.,~\cite{AdamekMV06b}.}, the following are valid infinite derivations, obtained repeatedly applying the only rule with non-empty premises
\begin{small}
\begin{mathpar}
\Rule{
	\Rule{
		\vdots
	}{ \member{2}{l}{\True} }
}{\member{2}{l}{\True} }
\and
\Rule{
	\Rule{
		\vdots
	}{ \member{2}{l}{\False} }
}{\member{2}{l}{\False} }
\and
\Rule{
	\Rule{
		\vdots
	}{ \allPos{l}{\True} }
}{ \allPos{l}{\True} }
\and
\Rule{
	\Rule{
		\vdots
	}{ \allPos{l}{\False} }
}{ \allPos{l}{\False} }
\end{mathpar}
\end{small}
In the semantics induced by coaxioms, only the second and the third proof trees are valid, since their nodes  are derivable {by a finite proof tree using also coaxioms}, while this fact is not true for the other derivations.

\subsection{Semantics}\label{sect:sem-coaxioms}

We now define the model-theoretic semantics for inference systems with coaxioms.
First of all we recall some notions for standard inference systems.
Consider an inference system $\is$;
the \emph{(one step) inference operator} $\fun{\Op{\is}}{\wp(\universe)}{\wp(\universe)}$ associated with $\is$  is defined by
\[
\Op{\is}(S)= \left\{\cons \in \universe  \mid \prem\subseteq S, \Rule{\prem}{\cons}\in\is \right\}
\]
That is, $\Op{\is}(S)$ is the set of judgements that can be inferred (in one step) from the judgements in $S$ using the inference rules in $\is$.
Note that this set always includes axioms.

A set $S$ is \emph{closed} if $\Op{\is}(S)\subseteq S$, and \emph{consistent} if $S\subseteq\Op{\is}(S)$.
That is, no new judgements can be inferred from a closed set, and all judgements in a consistent set can be inferred from the set itself.

The \emph{inductive interpretation} of $\is$, denoted $\Ind{\is}$, is the smallest closed set, that is, the intersection of all closed sets, and the \emph{coinductive interpretation} of $\is$, denoted $\CoInd{\is}$, is the largest consistent set, that is, the union of all consistent sets.
 Both interpretations are well-defined and can be equivalently expressed as the least (respectively, the greatest) fixed point of the inference operator.
These definitions can be \EZ{shown} to be equivalent to the proof-theoretic characterizations introduced \EZ{before}, see~\cite{LeroyGrall09,Dagnino17}.

For particular inference systems, we can also compute $\Ind{\is}$ and $\CoInd{\is}$ iteratively, see e.g.~\cite{Sangiorgi11}.
More precisely, if all rules in $\is$ have a finite set of premises, then
$\Ind{\is}    =\bigcup \{ \IterOp{\is}{n}(\emptyset) \mid n \in \N \}$,
and, if for each judgement $\cons$ there is a finite set of rules having $\cons$ as conclusion, then
$\CoInd{\is}=\bigcap \{ \IterOp{\is}{n}(\universe) \mid n \in  \N\}$.
This happens because, under the former condition, $\Op{\is}$  is upward continuous, and, under the latter condition, $\Op{\is}$ is downward continuous (see page~\pageref{page:continuity} for a formal definition of upward/downward continuity).

Given an inference systems with coaxioms $\Pair{\is}{\coaxioms}$, we can construct the \emph{interpretation generated by coaxioms}, denoted by $\Generated{\is}{\coaxioms}$, by the following two steps:
\begin{enumerate}
\item First,  we consider the inference system ${\Extended{\is}{\coaxioms}}$ obtained enriching $\is$ by judgements in $\coaxioms$ considered as axioms, and we take its inductive interpretation $\Ind{{\Extended{\is}{\coaxioms}}}$.
\item Then, we take the coinductive interpretation of the inference system obtained from $\is$ by {keeping only rules with consequence in} $\Ind{{\Extended{\is}{\coaxioms}}}$, that is, we define
\[
\Generated{\is}{\coaxioms}=\CoInd{\Restricted{\is}{\Ind{{\Extended{\is}{\coaxioms}}}}}
\]
\end{enumerate}
where $\Restricted{\is}{S}$, with $\is$ inference system and $S\subseteq\universe$, denotes the inference system obtained from $\is$ by keeping only rules with
consequence in $S$, that is, $\Restricted{\is}{S} = \{\frac{\prem}{\cons} \in \is \mid \cons \in S\}$.

If we consider again the example of the graph in \refToFigure{concrete-graph}, since the universe is finite, every monotone function is continuous, hence we can compute fixed points iteratively.
Therefore, in the first phase, we obtain the following judgements (the number at the beginning of each line indicates  the iteration step):
\begin{small}
\begin{enumerate}[(1),leftmargin=1cm]
    \item $\Visit{a}{\emptyset}$, $\Visit{b}{\emptyset}$, $\Visit{c}{\emptyset}$, $\Visit{c}{\{c\}}$
    \item $\Visit{a}{\emptyset}$, $\Visit{b}{\emptyset}$, $\Visit{c}{\emptyset}$, $\Visit{c}{\{c\}}$, $\Visit{a}{\{a\}}$, $\Visit{b}{\{b\}}$
    \item $\Visit{a}{\emptyset}$, $\Visit{b}{\emptyset}$, $\Visit{c}{\emptyset}$, $\Visit{c}{\{c\}}$, $\Visit{a}{\{a\}}$, $\Visit{b}{\{b\}}$, $\Visit{a}{\{a,b\}},\Visit{b}{\{a,b\}}$
\end{enumerate}
\end{small}
The last set is closed, hence it is $\Ind{{\Extended{\is}{\coaxioms}}}$.

For the second phase, first of all we have to construct the inference system $\Restricted{\is}{\Ind{\Extended{\is}{\coaxioms}}}$, whose rules are those of $\is$ (in \refToFigure{concrete-graph})  with conclusion in \EZ{$\Ind{\Extended{\is}{\coaxioms}}$}, computed above.
Hence, they are the following:
\begin{small}
\begin{mathpar}
\Rule{\Visit{b}{\nodeset}}{\Visit{a}{\{a\} \cup \nodeset}}\ c \notin \nodeset
\and
\Rule{\Visit{a}{\nodeset}}{\Visit{b}{\{b\} \cup \nodeset}}\ c \notin \nodeset
\and
\Rule{}{\Visit{c}{\{c\}}}
\end{mathpar}
\end{small}
These rules have to be interpreted coinductively, hence
each iteration of the inference operator removes judgements which cannot be inferred from the set obtained from the previous iteration step, that is, we get:
\begin{small}
\begin{enumerate}[(1),leftmargin=1cm]
    \item $\Visit{c}{\{c\}}$, $\Visit{a}{\{a\}}$, $\Visit{b}{\{b\}}$, $\Visit{a}{\{a,b\}},\Visit{b}{\{a,b\}}$
    \item $\Visit{c}{\{c\}}$, $\Visit{a}{\{a,b\}}$, $\Visit{b}{\{a,b\}}$
\end{enumerate}
\end{small}
This last set is consistent, hence it is $\Generated{\is}{\coaxioms}$, and it is indeed the expected result.

In next sections, we will study properties of $\Generated{\is}{\coaxioms}$ in a more formal way, notably, we will show that it is actually a fixed point of the inference operator $\Op{\is}$ as expected (see \refToSection{coaxioms-model}).
Such a fixed point will be constructed by taking the greatest consistent subset of the smallest closed superset of the set of coaxioms.
Then, we will also prove that such semantics corresponds to the proof-theoretic characterization informally introduced at the beginning of  this section (see \refToSection{coaxioms-trees}).



\section{Fixed point semantics for coaxioms}\label{sect:coaxioms-model}

\Rev{As mentioned in \refToSection{sem-coaxioms}, we can associate with an inference system $\is$ a monotone function $\Op{\is}$ on the power-set lattice. We always require the semantics of $\is$ to be a fixed point of $\Op{\is}$,  hence we aim to show that this property indeed holds for $\Generated{\is}{\coaxioms}$.

In this section, we will develop the theory needed for this result and some important consequences.
In order to construct the fixed point we need, we work in the general framework of \emph{lattice theory}~\cite{Nation98,DaveyPriestley02}, so that we can highlight only the essential structure.
More precisely, in \refToSection{ck} we discuss closure and kernel operators,
presenting almost standard results, for which, however, we have not found a complete enough discussion in literature~\cite{Nation98,DaveyPriestley02}.
Then, in \refToSection{bfp} and \refToSection{correspondence},
we define the \emph{bounded fixed point}, showing it corresponds to the interpretation generated by coaxioms and it subsumes both inductive and coinductive interpretations.   }

Let us start by recalling some basic definitions about lattices.
{A \emph{complete lattice} is a partially ordered set $\Pair{\lattice}{\order}$ where all subsets $A \subseteq \lattice$ have a least upper bound (a.k.a. \emph{join}), denoted  by $\lub A$.
In particular, in $\lattice$  there are both a top element $\top = \lub \lattice$ and a bottom element $\bot = \lub \emptyset$.
Furthermore, it can be proved that in $\lattice$  all subsets $A \subseteq \lattice$ have also a greatest lower bound (a.k.a. \emph{meet}), denoted by $\glb A$.
In the following, for all $x, y \in \lattice$,  we will write $x \join y$ for the binary join and $x \meet y$ for the binary meet, that is, respectively, $\lub \{x, y\}$ and $\glb\{x, y\}$, respectively. }

Given a function $\fun{\function}{\lattice}{\lattice}$ and an element $x \in \lattice$, we say that
\begin{itemize}
\item $x$ is a \emph{pre-fixed point} if $\function(x) \order x$
\item $x$ is a \emph{post-fixed point} if $x \order \function(x)$
\item $x$ is a \emph{fixed point} if $x = \function(x)$
\end{itemize}
We will denote by $\Pre{\function}$, $\Post{\function}$ and $\Fix{\function}${,} respectively{,} the sets of pre-fixed, post-fixed and fixed points of $\function$.

We also say that $\function$ is \emph{monotone} if{,} for all $x, y \in \lattice$, if $x \order y$ then $\function(x) \order \function(y)$.
Monotone functions over a complete lattice are particularly interesting since, thanks to the Knaster-Tarski theorem~\cite{Tarski55, LassezKS82}, we know that they have both the least and the greatest fixed point, {that} we denote by $\lfp\function$ and $\gfp\function${,} respectively.

In the following we will assume a complete lattice $\Pair{\lattice}{\order}$ and a monotone function $\fun{\function}{\lattice}{\lattice}$.

\subsection{Closures and kernels}\label{sect:ck}
We start by  introducing some notions which are slight generalizations of concepts that can be found in~\cite{AbramskiJung94, Nation98}.

\begin{defi}\label{def:ck-system}
Let $\Pair{\lattice}{\order}$  be a complete lattice. Then
\begin{enumerate}
\item a subset $\CSys \subseteq \lattice$ is a \emph{closure system} if{,} for any subset $X \subseteq \CSys$, $\glb X \in \CSys$
\item a subset $\KSys\subseteq \lattice$ is a \emph{kernel system} if{,} for any subset $X \subseteq \KSys$, $\lub X \in \KSys$
\end{enumerate}
\end{defi}

\noindent
Note that{,} with the usual convention that $\lub \emptyset = \bot$ and $\glb \emptyset = \top$, we have that{,} for all closure systems $\CSys \subseteq \lattice$, $\top \in \CSys$, and{,} for {all} kernel system{s} $\KSys \subseteq \lattice$, $\bot \in \KSys$.

This definition provides a general order-theoretic account of a kind of structures that are very frequent in mathematics, in particular considering the complete lattice carried by the power-set.
For instance, given a group $G$, the set $\Sub{G} \subseteq \wp(G)$ of all subgroups of $G$ is closed under arbitrary intersections, that is, under the meet operation in the  power-set lattice $\Pair{\wp(G)}{\subseteq}$.
{Hence, $\Sub{G}$ is a closure system in the complete lattice $\Pair{\wp(G)}{\subseteq}$, according to the above definition. }
It is easy to see that this fact  also holds for any algebraic structure.
Another example comes from topology.
Indeed, given a topological space $\Pair{X}{\tau}$, by definition $\tau \subseteq \wp(X)$ and is closed under arbitrary unions, hence $\tau$ is a kernel system with respect to the complete lattice $\Pair{\wp(X)}{\subseteq}$.
Moreover{,} the set of closed {sets} in the topological space $\Pair{X}{\tau}$, that is, the set $\{X\setminus A \mid A \in \tau\}$, is closed under arbitrary intersections, hence it is a closure system in $\Pair{\wp(X)}{\subseteq}$.
Actually this is a general fact: if $\KSys \subseteq \wp(X)$ is a kernel system, then $\{X\setminus A \mid A \in \KSys\}$ is a closure system.
Also the converse is true.

It is quite easy to check that the following proposition holds

\begin{prop}\label{prop:pre-post-ck}
Let $\Pair{\lattice}{\order}$ be a complete lattice and $\fun{\function}{\lattice}{\lattice}$ a monotone function.
Then
\begin{enumerate}
\item $\Pre{\function}$ is a closure system
\item $\Post{\function}$ is a kernel system
\end{enumerate}
\end{prop}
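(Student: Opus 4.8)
The plan is to prove the two statements by a direct appeal to the monotonicity of $\function$ together with the universal properties of meets and joins. The two parts are order-theoretic duals of each other, so I would establish one in full detail and obtain the other by dualization.

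For part~(1), I would fix an arbitrary subset $X \subseteq \Pre{\function}$ and set $m = \glb X$, which exists because $\Pair{\lattice}{\order}$ is a complete lattice; the goal is to show $m \in \Pre{\function}$, that is, $\function(m) \order m$. Since $m$ is a lower bound of $X$, for every $x \in X$ we have $m \order x$, whence monotonicity gives $\function(m) \order \function(x)$. Using that each $x \in X$ is a pre-fixed point, namely $\function(x) \order x$, transitivity yields $\function(m) \order x$ for all $x \in X$. This exhibits $\function(m)$ as a lower bound of $X$, and since $m = \glb X$ is the \emph{greatest} lower bound, we conclude $\function(m) \order m$, as required.

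For part~(2), I would dualize the previous argument: fix $X \subseteq \Post{\function}$, set $j = \lub X$, and aim to show $j \order \function(j)$. For each $x \in X$ we have $x \order j$, so monotonicity gives $\function(x) \order \function(j)$; combining this with $x \order \function(x)$ yields $x \order \function(j)$ for every $x \in X$, so $\function(j)$ is an upper bound of $X$, and hence $j = \lub X \order \function(j)$, completing the proof.

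I expect no genuine obstacle here: in each case the argument reduces to a short chain of inequalities whose only ingredients are monotonicity and the defining property of $\glb$ and $\lub$. The sole point worth remarking is the boundary case $X = \emptyset$, where $\glb \emptyset = \top \in \Pre{\function}$ and $\lub \emptyset = \bot \in \Post{\function}$ hold trivially; these are precisely the top/bottom observations already recorded after \refToDefinition{ck-system}, so the empty subset needs no separate treatment.
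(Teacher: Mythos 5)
Your proof is correct and follows essentially the same argument as the paper: monotonicity of $\function$ plus the universal property of $\glb$ (resp.\ $\lub$) shows the meet of pre-fixed points is pre-fixed, with the second part obtained by duality. The paper's proof is identical in substance (it only writes out part 1 and invokes duality, whereas you spell out the dual argument and the empty-set case explicitly).
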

\begin{proof}
We only prove 1, since 2 can be obtained by duality.\\
Let $A\subseteq \Pre{\function}$ be a set of pre-fixed points of $\function$.
We have that{,} for all $x \in A$, $\glb A \order x$ (by definition of greatest lower bound),
 then $\function(\glb A) \order \function(x) \order x$ (since $\function$  is monotone and $x$ is pre-fixed),
  hence, finally, $\function(\glb A) \order \glb A$ (by definition of greatest lower bound).
\end{proof}

This observation provides us with a canonical way to associate a closure and a kernel system with a monotone function.
Let us introduce another notion.

\begin{defi}\label{def:ck-op}
Let $\Pair{\lattice}{\order}$ be a complete lattice. Then
\begin{enumerate}
\item A monotone function $\fun{\closure}{\lattice}{\lattice}$ is a \emph{closure operator}  if it satisfies the following conditions{:}
\begin{itemize}
\item for all $x \in \lattice$, $x \order \closure(x)$
\item for all $x \in \lattice$, $\closure(\closure(x)) = \closure(x)$
\end{itemize}
\item A monotone function $\fun{\ker}{\lattice}{\lattice}$ is a \emph{kernel operator}  if it satisfies the following conditions{:}
\begin{itemize}
\item for all $x \in \lattice$, $\ker(x) \order x$
\item for all $x \in \lattice$, $\ker(\ker(x)) = \ker(x)$
\end{itemize}
\end{enumerate}
\end{defi}

\noindent
Note that, since a closure operator $\fun{\closure}{\lattice}{\lattice}$  is a monotone function, we can associate with it both a closure and a kernel system, $\Pre{\closure}$ and $\Post{\closure}$.
However, by the first condition of the definition of closure operator, we get that $\Post{\closure} = \lattice$, hence it is  not interesting, and $\Pre{\closure}=\Fix{\closure}$.
Dually, for a kernel operator $\fun{\ker}{\lattice}{\lattice}$, only $\Post{\ker} = \Fix{\ker}$ is interesting, because $\Pre{\ker} = \lattice$.
Therefore, we can say that every closure operator naturally induces a closure system   and every kernel operator naturally induces a kernel system.

The next result shows how we can build, in a canonical way, a closure/kernel operator {from a closure/kernel system}.

\begin{thm}\label{theo:ck-sys-op}
Let $\Pair{\lattice}{\order}$ be a complete lattice.  Then
\begin{enumerate}
\item given a closure system $\CSys \subseteq \lattice$ the function
\[
\closure_\CSys(x) = \glb \{y \in \CSys \mid x \order y\}
\]
is a closure operator such that $\Fix{\closure_\CSys} = \CSys$
\item given a kernel system $\KSys \subseteq \lattice$ the function
\[
\ker_\KSys (x) = \lub \{y \in \KSys \mid y \order x\}
\]
is a kernel operator such that $\Fix{\ker_\KSys} = \KSys$
\end{enumerate}
\end{thm}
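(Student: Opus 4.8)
The plan is to prove item~1 in full and to obtain item~2 by order duality. The crux of the whole argument is the single observation that $\closure_\CSys$ always takes its values inside $\CSys$, so the first thing I would establish is well-definedness together with this fact. Fix $x \in \lattice$ and consider the set $\{y \in \CSys \mid x \order y\}$. It is a subset of $\CSys$ and it is nonempty, since $\top \in \CSys$ (as recorded right after \refToDefinition{ck-system}) and $x \order \top$. Because $\CSys$ is a closure system, it is closed under arbitrary meets, hence $\closure_\CSys(x) = \glb \{y \in \CSys \mid x \order y\} \in \CSys$.

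With this in hand, the three defining conditions of a closure operator fall out quickly. For extensivity, $x$ is a lower bound of $\{y \in \CSys \mid x \order y\}$ by the very definition of that set, so $x \order \glb \{y \in \CSys \mid x \order y\} = \closure_\CSys(x)$, the meet being the \emph{greatest} lower bound. For monotonicity, if $x \order x'$ then $\{y \in \CSys \mid x' \order y\} \subseteq \{y \in \CSys \mid x \order y\}$ by transitivity, and taking the meet is antitone with respect to inclusion, which gives $\closure_\CSys(x) \order \closure_\CSys(x')$.

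The fixed-point characterization is where I would spend the most words, though it is still routine, and idempotence then comes for free. For the inclusion $\CSys \subseteq \Fix{\closure_\CSys}$, take $z \in \CSys$: then $z \in \{y \in \CSys \mid z \order y\}$, so $\closure_\CSys(z) \order z$, and combined with extensivity this yields $\closure_\CSys(z) = z$. For the reverse inclusion, any $x \in \Fix{\closure_\CSys}$ satisfies $x = \closure_\CSys(x) \in \CSys$ by the opening observation. Hence $\Fix{\closure_\CSys} = \CSys$. Idempotence is now immediate, since $\closure_\CSys(x) \in \CSys = \Fix{\closure_\CSys}$ gives $\closure_\CSys(\closure_\CSys(x)) = \closure_\CSys(x)$.

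I do not expect a genuine obstacle: everything rests on the two facts that $\closure_\CSys$ lands in $\CSys$ and that $\glb$ reverses inclusions. The only point deserving a moment's care is the empty-set corner case in the definition of $\closure_\CSys$, which is harmlessly handled by the convention $\glb \emptyset = \top \in \CSys$ noted earlier. Item~2 is then the exact order-dual statement: replacing $\order$ by $\dualOrder$, $\glb$ by $\lub$, closure systems by kernel systems, and $\top$ by $\bot$ turns the argument above verbatim into a proof that $\ker_\KSys$ is a kernel operator with $\Fix{\ker_\KSys} = \KSys$.
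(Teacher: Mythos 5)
Your proposal is correct and follows essentially the same route as the paper's proof: both rest on the observation that $\closure_\CSys(x)\in\CSys$ (since closure systems are closed under meets), prove monotonicity via antitonicity of $\glb$, extensivity via $x$ being a lower bound, and reduce idempotence and the fixed-point equality $\Fix{\closure_\CSys}=\CSys$ to showing every element of $\CSys$ is fixed, with item~2 obtained by duality. The only difference is cosmetic ordering — you derive idempotence from the fixed-point characterization, while the paper proves them in the reverse order — and your extra remark on nonemptiness is harmless but unnecessary given the convention $\glb\emptyset=\top$.
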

\begin{proof}
We prove only point 1, {the} point 2 can be obtained by duality.
We first prove that $\closure_\CSys$ is monotone.
Consider $x, y \in \lattice$ such that $x\order y$, hence $\{z \in {\CSys} \mid y \order z\}\subseteq \{z \in \CSys \mid x \order z\}$, thus  $\closure_\CSys (x) \order \closure_\CSys(y)$, since the greatest lower bound is a monotone operator.\footnote{We are considering the function $A\mapsto \glb A$ from $\wp(\lattice)$ to $\lattice$} \\
The fact that $x \order \closure_\CSys(x)$ for all $x \in \lattice$ follows from the fact that $x$ is a lower bound of the set $\{y \in \CSys \mid x \order y \}$.\\
Finally, note that by definition{,} for all $x \in \lattice$, $\closure_\CSys(x) \in \CSys$, because $\CSys$ is a closure system, hence in order to show that $\closure_\CSys(\closure_\CSys(x)) = \closure_\CSys(x)$ {it} is enough to show that{,} for all $z \in \CSys$, $\closure_\CSys(z) = z$, namely, $\CSys \subseteq \Fix{\closure_\CSys}$.
So consider $z \in \CSys$, we have already shown that $z \order \closure_\CSys(z)$, thus we have only to show  the other inequality.
Since $z\in\CSys$,  $z \in \{y \in \CSys \mid z \order y\}$, and this implies that $\closure_\CSys(z) \order z$.

This shows that $\closure_\CSys$ is a closure operator.
Actually we have also proved that $\CSys \subseteq \Fix{\closure_\CSys}$.
Therefore, to conclude the proof it remains to show that $\Fix{\closure_\CSys} \subseteq  \CSys$, but this is trivial, since if $z = \closure_\CSys(z)$, then $z \in \CSys$ by definition.
\end{proof}

The above theorem, considered for instance for closure systems, states that each closure system induces a closure operator having as (pre-)fixed points exactly the elements in the closure system. 
Actually, we can say even more: each closure system induces a unique closure operator, that is, each closure operator is uniquely determined by its (pre-)fixed points. 

\begin{thm}\label{theo:ck-op-unique}
Let $\Pair{\lattice}{\order}$ be a complete lattice. Then
\begin{itemize}
\item if $\fun{\closure}{\lattice}{\lattice}$ is a closure operator then $\closure_{\Fix{\closure}} = \closure$
\item if $\fun{\ker}{\lattice}{\lattice}$ is a kernel operator, then $\ker_{\Fix{\ker}} = \ker$.
\end{itemize}
\end{thm}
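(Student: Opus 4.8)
The plan is to prove the first statement, namely that $\closure_{\Fix{\closure}} = \closure$ for any closure operator $\closure$; the kernel case follows by duality, exactly as in the previous theorems. The key observation that makes this work is that \refToTheorem{ck-sys-op} already does most of the heavy lifting: by \refToProposition{pre-post-ck} (or directly from the idempotency and extensivity axioms) the set $\Fix{\closure}$ is a closure system, so the construction $\closure_{\Fix{\closure}}(x) = \glb\{y \in \Fix{\closure} \mid x \order y\}$ is a well-defined closure operator. What remains is to check that this induced operator agrees pointwise with the original $\closure$.

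First I would fix an arbitrary $x \in \lattice$ and prove the two inequalities $\closure_{\Fix{\closure}}(x) \order \closure(x)$ and $\closure(x) \order \closure_{\Fix{\closure}}(x)$. For the first, I would note that $\closure(x) \in \Fix{\closure}$ (by idempotency, $\closure(\closure(x)) = \closure(x)$) and that $x \order \closure(x)$ (by extensivity), so $\closure(x)$ is a member of the set $\{y \in \Fix{\closure} \mid x \order y\}$; since $\closure_{\Fix{\closure}}(x)$ is the greatest lower bound of that set, it is below any of its members, giving $\closure_{\Fix{\closure}}(x) \order \closure(x)$. For the reverse inequality, I would take an arbitrary $y \in \Fix{\closure}$ with $x \order y$; applying monotonicity of $\closure$ gives $\closure(x) \order \closure(y) = y$, so $\closure(x)$ is a lower bound of $\{y \in \Fix{\closure} \mid x \order y\}$, whence $\closure(x) \order \closure_{\Fix{\closure}}(x)$ by the defining property of the greatest lower bound.

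Combining the two inequalities by antisymmetry of $\order$ yields $\closure_{\Fix{\closure}}(x) = \closure(x)$ for every $x$, which is the claim. The kernel statement is obtained by dualizing throughout: replace meets by joins, $\order$ by $\dualOrder$, extensivity $x \order \closure(x)$ by contractivity $\ker(x) \order x$, and the argument goes through unchanged, so I would simply invoke duality rather than rewrite it.

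I do not anticipate a genuine obstacle here, since the result is essentially a bookkeeping consequence of \refToTheorem{ck-sys-op} together with the two defining axioms of a closure operator. The only point requiring mild care is making sure that $\Fix{\closure}$ really is a closure system before applying \refToTheorem{ck-sys-op}: this is where idempotency is used to guarantee that $\closure(x)$ lands in $\Fix{\closure}$, and one should confirm that $\Fix{\closure}$ is closed under arbitrary meets (equivalently, that $\Fix{\closure} = \Pre{\closure}$, which was already observed in the discussion following \refToDefinition{ck-op}). Once that membership fact is in hand, the two-inequality argument is entirely routine.
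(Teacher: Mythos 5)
Your proof is correct and follows essentially the same route as the paper's: both establish that $\closure(x)$ belongs to the set $\{y \in \Fix{\closure} \mid x \order y\}$ (via extensivity and idempotency) and is a lower bound of it (via monotonicity), differing only in that the paper phrases this as ``$\closure(x)$ is the least element of the set'' while you split it into two inequalities combined by antisymmetry. The kernel case is dispatched by duality in both.
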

\begin{proof}
We prove only point 1, the point 2 can be obtained by duality. \\
We have to show that $\closure(x) = \closure_{\Fix{\closure}}(x)$ for all $x \in \lattice$.
By definition, $\closure_{\Fix{\closure}} = \glb A$ with $A=\{y \in \Fix{\closure} \mid x\order y\}$, hence, since $x \order \closure(x)$, $\closure(x) \in A$.
In order to conclude the proof we have to show that $\closure(x)$ is the least element of $A$.
To this aim, consider $y = \closure(y) \in A$ and prove that it is above $\closure(x)$.
Note that $x\order y$, hence, by monotonicity of $\closure$, $\closure(x) \order \closure(y) = y$, as needed.
\end{proof}

In other words{,} the above theorem {states} that to define a closure or kernel operator {it} is enough to specify a closure or a kernel system.
{F}or instance, the closure system $\Sub{G}$, where $G$ is a group, induces the closure operator $\fun{\GroupGen{-}}{\wp(G)}{\wp(G)}$, that computes for any set $X \subseteq G$  the subgroup generated by $X$.
For a topological space $\Pair{X}{\tau}$ we have that the topology $\tau$ induces a kernel operator that, for any set $A\subseteq X$, computes {its} interior, and the set of closed sets $\{X\setminus A \mid A \in \tau\}$ induces {the} topological closure operator.

\subsection{The bounded fixed point}\label{sect:bfp}
Let us now consider a monotone function $\fun{\function}{\lattice}{\lattice}$.
As we have seen, we can associate with $\function$ both a closure and a kernel system, $\Pre{\function}$ and $\Post{\function}$ respectively.
Thanks to \refToTheorem{ck-sys-op} and \refToTheorem{ck-op-unique} we know that these systems induce a unique closure and kernel operator respectively, defined below
\begin{align*}
\closure_\function(x) = \closure_{\Pre{\function}} &= \glb \{y \in \Pre{\function} \mid x \order y\}\\
\ker_\function(x) = \ker_{\Post{\function}} 	         &= \lub \{y \in \Post{\function} \mid y \order x \}
\end{align*}
We call $\closure_\function$ the \emph{closure} of $\function$ and $\ker_\function$ the \emph{kernel} of $\function$.
Intuitively, $\closure_\function(x)$ is the best pre-fixed approximation of $x$ (the least pre-fixed point above $x$), while $\ker_\function(x)$ is the best post-fixed approximation of $x$ (the greatest post-fixed point below $x$).
In this part of the section we will study some properties of these operators related to fixed points constructions.

First of all, we note that from the definitions of the closure and the kernel of $\function$ we can immediately derive a generalization of both the induction and the coinduction principles.
Given $\coaxioms, \bound \in \lattice$, for all $x \in \lattice$ we have
\begin{itemize}[align=left]
\item[\IndPrinciple] if $\function(x) \order x$ ($x$ pre-fixed) and $\coaxioms \order x$, then $\closure_\function(\coaxioms) \order x$
\item[\CoIndPrinciple] if $x \order \function(x)$ ($x$ post-fixed) and $x \order \bound$, then $x \order \ker_\function(\bound)$
\end{itemize}
These two principles are a generalization of standard induction and coinduction principles, because we can retrieve them through particular choices for $\coaxioms$ and $\bound$.
Indeed, if $\coaxioms = \bot$, the condition  $\coaxioms \order x$ is trivially always true, and we have $\closure_\function(\bot) = \glb \Pre{\function} = \lfp \function$ by Knaster-Tarski  fixed point theorem~\cite{Tarski55}, hence \IndPrinciple allows us to conclude $\lfp\function \order x$ like standard induction, requiring the same hypothesis.
Dually, if $\bound = \top$, again the condition $x\order \bound$ is trivially always true, and $\ker_\function(\top) = \lub \Post{\function} = \gfp \function$, again by Knaster-Tarski, hence \CoIndPrinciple allows us to conclude $x \order \gfp \function$ like standard coinduction, requiring the same hypothesis.

We now prove a result ensuring us that under suitable hypotheses we can use the closure and the kernel of a monotone function to build {a} fixed point of that function.
We will denote by $\LowSet{x}$ and $\UpSet{x}$ respectively the set of lower bounds of $x$ and the set of upper bounds of $x$.

\begin{prop}\label{prop:ck-fp}
Let $\coaxioms, \bound \in \lattice$.  Then
\begin{enumerate}
\item if $\bound$ is a pre-fixed point of $\function$, then $\ker_\function(\bound)$ is a fixed point
\item if $\coaxioms$ is  post-fixed point of $\function$, then $\closure_\function(\coaxioms)$ is a fixed point
\end{enumerate}
\end{prop}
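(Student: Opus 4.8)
The plan is to prove point~1 and obtain point~2 by duality, exactly as the paper does for its earlier results. Write $k = \ker_\function(\bound)$, so that by definition $k = \lub \{y \in \Post{\function} \mid y \order \bound\}$. I would establish $k \in \Fix{\function}$ by proving the two inequalities $k \order \function(k)$ and $\function(k) \order k$ separately.

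For the first inequality I would observe that $k$ is itself a post-fixed point. Indeed, $k$ is the join of a subset of $\Post{\function}$, and by \refToProposition{pre-post-ck} the set $\Post{\function}$ is a kernel system, hence closed under arbitrary joins; therefore $k \in \Post{\function}$, which is precisely $k \order \function(k)$.

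For the reverse inequality the key idea is to show that $\function(k)$ already belongs to the set $\{y \in \Post{\function} \mid y \order \bound\}$ over which the join defining $k$ is taken, so that $\function(k) \order k$. Two checks are needed. First, $\function(k) \in \Post{\function}$: applying monotonicity of $\function$ to $k \order \function(k)$ gives $\function(k) \order \function(\function(k))$. Second, $\function(k) \order \bound$: since $k \order \bound$ (a kernel operator is decreasing, or directly because $\bound$ is an upper bound of the joined set) and $\function$ is monotone, we get $\function(k) \order \function(\bound)$, and the hypothesis that $\bound$ is pre-fixed yields $\function(\bound) \order \bound$, whence $\function(k) \order \bound$. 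Thus $\function(k)$ lies in the set, so $\function(k) \order k$. Combining the two inequalities gives $\function(k) = k$, as required.

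I expect no genuine obstacle here: once one notices that closure of $\Post{\function}$ under joins forces $k$ to be post-fixed, and that the pre-fixedness of $\bound$ together with monotonicity keeps $\function(k)$ below $\bound$, the argument is a short chain of monotonicity steps. The only point to keep straight is the two directions in which monotonicity is used and, crucially, that it is exactly the pre-fixedness of $\bound$ (not an arbitrary bound) that allows one to close the second check and thereby conclude $\function(k) \order k$.
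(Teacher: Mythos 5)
Your proof is correct, but it follows a different route from the paper's. The paper's proof is a reduction: it observes that, because $\bound$ is pre-fixed, $\function$ restricts to a well-defined monotone function on the complete lattice $\LowSet{\bound}$, and that $\ker_\function(\bound)$ is precisely the join of all post-fixed points of this restricted function; Knaster--Tarski then immediately yields that this join is a (indeed the greatest) fixed point. You instead inline the argument: you verify directly that $k = \ker_\function(\bound)$ satisfies both inequalities, using \refToProposition{pre-post-ck} (the kernel-system property of $\Post{\function}$) to get $k \order \function(k)$, and then showing $\function(k)$ itself lies in the set $\{y \in \Post{\function} \mid y \order \bound\}$ — which is exactly where the pre-fixedness of $\bound$ enters, via $\function(k) \order \function(\bound) \order \bound$ — to get $\function(k) \order k$. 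In effect you have re-proved the relevant half of Knaster--Tarski in the bounded setting. What the paper's route buys is brevity and a conceptual reading ($\ker_\function(\bound)$ \emph{is} the greatest fixed point of $\function$ restricted below $\bound$), at the cost of the side checks that $\LowSet{\bound}$ is a complete lattice and that the restriction is well-defined; what your route buys is a self-contained elementary argument that never leaves the ambient lattice and makes explicit, step by step, exactly where each hypothesis (monotonicity, kernel-system closure, pre-fixedness of $\bound$) is used. Both are sound, and both handle point~2 by the same duality remark.
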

\begin{proof}
We will prove only point 1, the point 2 can by obtained by duality. \\
Note that $\LowSet{\bound}$ is a complete lattice and   the function $\fun{\function}{\LowSet{\bound}}{\LowSet{\bound}}$ (obtained by restricting $\function$ to $\LowSet{\bound}$)  is well-defined and monotone, since $\bound$ is a pre-fixed point.
Therefore, $\ker_\function(\bound)$ is the join of all post-fixed point{s} of $\function$ in the complete lattice $\LowSet{\bound}$, hence by Knaster-Tarski  it is a fixed point.
\end{proof}

Therefore, we now know that if $\bound$ is pre-fixed $\ker_\function(\bound)$ is the greatest fixed point below $\bound$, and, if $\coaxioms$ is post-fixed, then $\closure_\function(\coaxioms)$ is the least fixed point above $\coaxioms$.

We are now able to define the \emph{bounded fixed point}.

\begin{defi}[Bounded fixed point]\label{def:bfp}
Let $\coaxioms \in \lattice$.
The \emph{bounded fixed point of $\function$ generated by $\coaxioms$}, denoted by $\Generated{\function}{\coaxioms}$,  is the greatest fixed point of $\function$ below the closure of $\coaxioms$, that is, $\Generated{\function}{\coaxioms} = \ker_\function(\closure_\function(\coaxioms))$.
\end{defi}

The bounded fixed point is well-defined since, thanks to \refToProposition{ck-fp}, there exists the greatest fixed point below $\bound$, provided that the bound $\bound$  is a pre-fixed point.
Since in general $\coaxioms$ might not be pre-fixed, we need to construct a pre-fixed point from $\coaxioms$, and this is done by the closure operator $\closure_\function$.
Note that the first step of this construction \emph{cannot} be expressed as the least fixed point of $\function$ on the complete lattice $\UpSet{\coaxioms}$, since in general $\function$ may fail to be well-defined (e.g., if $\function$ is the function which maps any element to $\bot \order  \coaxioms$ with $\coaxioms \ne \bot$).
Indeed, $\closure_\function(\coaxioms)$ is \emph{not} a fixed point in general, but only a pre-fixed point: we need the two steps to obtain a fixed point.

Note also that the definition of bounded fixed point is asymmetric, that is, we take the greatest fixed point bounded from above by a least pre-fixed point, rather than the other way round.
This is motivated by the intuition, explained in \refToSection{coaxioms}, that we essentially need a greatest fixed point, since we want to deal with {non-}well-founded structures, but we want to ``constrain'' in some way such greatest fixed point.
Investigating the dual construction ($\closure_\function(\ker_\function(\coaxioms))$) is a matter of further work.

The following proposition states some immediate properties of the bounded fixed point{.}
\begin{prop}\label{prop:bfp-fun}\hfill
\begin{enumerate}
\item If $z \in \lattice$ is a fixed point of $\function$, then $\Generated{\function}{z}=z${.}
\item For all $\coaxioms_1, \coaxioms_2 \in \lattice$, if $\coaxioms_1\order \coaxioms_2$, then $\Generated{\function}{\coaxioms_1} \order \Generated{\function}{\coaxioms_2}${.}
\end{enumerate}
\end{prop}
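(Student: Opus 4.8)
The plan is to observe that both statements are immediate consequences of the characterization of $\closure_\function$ and $\ker_\function$ as operators whose fixed points are exactly the closure system $\Pre{\function}$ and the kernel system $\Post{\function}$, together with their monotonicity. Recall that by construction $\closure_\function = \closure_{\Pre{\function}}$ and $\ker_\function = \ker_{\Post{\function}}$, so \refToTheorem{ck-sys-op} gives $\Fix{\closure_\function} = \Pre{\function}$ and $\Fix{\ker_\function} = \Post{\function}$. These two facts are really all that the proof needs.

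For point 1, I would start from the fact that a fixed point $z = \function(z)$ is simultaneously pre-fixed and post-fixed, i.e. $z \in \Pre{\function}$ and $z \in \Post{\function}$. Since $z \in \Pre{\function} = \Fix{\closure_\function}$, the first stage of the construction leaves $z$ unchanged, that is $\closure_\function(z) = z$. (Concretely, $z$ belongs to the set $\{y \in \Pre{\function} \mid z \order y\}$ over which the meet defining $\closure_\function(z)$ is taken, and being its least element it equals that meet.) Then $\Generated{\function}{z} = \ker_\function(\closure_\function(z)) = \ker_\function(z)$, and since $z \in \Post{\function} = \Fix{\ker_\function}$ the second stage is likewise the identity, giving $\ker_\function(z) = z$. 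Hence $\Generated{\function}{z} = z$.

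For point 2, I would use that $\closure_\function$ and $\ker_\function$ are monotone, being a closure and a kernel operator respectively (\refToDefinition{ck-op}). Their composition $\Generated{\function}{-} = \ker_\function \circ \closure_\function$ is therefore monotone. Thus $\coaxioms_1 \order \coaxioms_2$ yields $\closure_\function(\coaxioms_1) \order \closure_\function(\coaxioms_2)$, and applying $\ker_\function$ gives $\ker_\function(\closure_\function(\coaxioms_1)) \order \ker_\function(\closure_\function(\coaxioms_2))$, which is exactly $\Generated{\function}{\coaxioms_1} \order \Generated{\function}{\coaxioms_2}$.

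There is no real obstacle here: the only point requiring care is invoking the right earlier results, namely that the fixed points of the closure (resp. kernel) operator coincide with the pre-fixed (resp. post-fixed) points of $\function$, so that a genuine fixed point of $\function$ is fixed by each of the two stages of the bounded fixed point construction. Everything else is routine monotonicity bookkeeping.
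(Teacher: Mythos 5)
Your proof is correct and follows essentially the same route as the paper: point 1 uses that a fixed point of $\function$ is both pre-fixed and post-fixed, hence fixed by $\closure_\function$ and $\ker_\function$ (the paper leaves implicit the appeal to $\Fix{\closure_\function}=\Pre{\function}$ and $\Fix{\ker_\function}=\Post{\function}$ that you spell out), and point 2 is the same observation that $\Generated{\function}{-}$ is a composition of monotone operators.
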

\begin{proof}\hfill
\begin{enumerate}
\item If $z$ is a fixed point, {then} it is both pre-fixed and post-fixed, hence $\closure_\function(z)=z$ and $\ker_\function(z)=z$.
Thus, we get that $\Generated{\function}{z} = \ker_\function(\closure_\function(z)) = \ker_\function(z) = z$.
\item The statement can be rephrased saying that the function $\fun{\Generated{\function}{-}}{\lattice}{\lattice}$ is monotone, and this  trivially holds since it is a composition of the monotone function $\closure_\function$ and $\ker_\function$.
\qedhere
\end{enumerate}
\end{proof}

\noindent
Therefore, by \refToProposition{ck-fp} we already know that $\Generated{\function}{\coaxioms}$ is a fixed point for any $\coaxioms \in \lattice$; the first point  of the above proposition says that all fixed point{s} of $\function$ can be generated as bounded fixed points.
In other words, considering $\Generated{\function}{-}$ as a function from $\lattice$ into itself, the first point implies that the range of this function is exactly $\Fix{\function}$.
Moreover, the second point states that $\Generated{\function}{-}$ is a monotone function on $\lattice$.

An important fact is that bounded fixed points are a generalization of both least and  greatest fixed points, since they can be obtained by taking particular generators, as stated in the following proposition.

\begin{prop}\label{prop:bfp-lfp-gfp}\hfill
\begin{enumerate}
\item $\Generated{\function}{\top}$ is the greatest fixed point of $\function$
\item $\Generated{\function}{\bot}$ is the least fixed point of $\function$
\end{enumerate}
\end{prop}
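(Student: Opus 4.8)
The plan is to unfold the definition $\Generated{\function}{\coaxioms} = \ker_\function(\closure_\function(\coaxioms))$ and to observe that, at the two extreme generators $\top$ and $\bot$, one of the two operators acts trivially, so that the two-phase construction collapses onto a single, already-known fixed point. The two identities I will lean on are exactly those recorded just below the \IndPrinciple and \CoIndPrinciple principles, namely $\closure_\function(\bot) = \glb \Pre{\function} = \lfp\function$ and $\ker_\function(\top) = \lub \Post{\function} = \gfp\function$, both instances of Knaster--Tarski.

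For point~1, I would first compute $\closure_\function(\top)$. By definition this is $\glb\{y \in \Pre{\function} \mid \top \order y\}$, and since $\top$ is the greatest element the only $y$ with $\top \order y$ is $\top$ itself; moreover $\top \in \Pre{\function}$, because $\function(\top) \order \top$ holds trivially. Hence $\closure_\function(\top) = \top$, and therefore $\Generated{\function}{\top} = \ker_\function(\top) = \gfp\function$ by the identity recalled above.

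For point~2, I would dually read off $\closure_\function(\bot) = \lfp\function$ directly from the recalled identity, and then show that the kernel leaves it unchanged. Since $\lfp\function$ is a fixed point, it is in particular post-fixed, i.e.\ $\lfp\function \in \Post{\function}$; and by \refToTheorem{ck-sys-op} we have $\Fix{\ker_\function} = \Post{\function}$, so $\ker_\function(\lfp\function) = \lfp\function$. This yields $\Generated{\function}{\bot} = \ker_\function(\closure_\function(\bot)) = \ker_\function(\lfp\function) = \lfp\function$.

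There is essentially no hard step here: the entire content is the observation that the closure is idle at $\top$ and the kernel is idle at $\lfp\function$ (dually for the other point). The only thing to be careful about is not to conflate the two phases of the bounded-fixed-point construction; one must check that the operator applied \emph{second} genuinely does nothing on the element produced by the \emph{first}, which is precisely what the facts ``$\top$ is pre-fixed'' and ``$\lfp\function$ is post-fixed'' guarantee.
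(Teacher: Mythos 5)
Your proof is correct and follows essentially the same route as the paper's: it establishes $\closure_\function(\top)=\top$ since $\top$ is the only (pre-fixed) point above $\top$, giving $\ker_\function(\top)=\gfp\function$, and uses that $\closure_\function(\bot)=\lfp\function$ is post-fixed, hence left unchanged by $\ker_\function$. Your appeal to $\Fix{\ker_\function}=\Post{\function}$ (\refToTheorem{ck-sys-op}) is just a slightly more explicit justification of the same final step.
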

\begin{proof}\hfill
\begin{enumerate}
\item Note that $\closure_\function(\top)=\top$, since the only pre-fixed point above $\top$ is $\top$ itself, hence we get $\Generated{\function}{\top} = \ker_\function(\top) = \lub \Post{\function} = \gfp \function$,
\item As already noted  $\closure_\function(\bot) = \lfp \function$, in particular $\closure_\function(\bot)$ is post-fixed, therefore we get $\Generated{\function}{\bot} = \ker_\function(\closure_\function(\bot)) = \closure_\function(\bot)$, namely it is the least fixed point of $\function${.}
\qedhere
\end{enumerate}
\end{proof}

\noindent
{An alternative proof for the above proposition is possible by exploiting \refToProposition{bfp-fun}.
We preferred to give the above} proof, since  this follows the asymmetry of the definition of the bounded fixed point.

We now present a result that will be particularly useful to develop proof techiques {for} the bounded fixed point (see \refToSection{coaxioms-reasoning}).

We first recall some standard notions.\label{page:continuity}
A \emph{chain} $C$, is a totally ordered sequence ${(x_i)}_{i \in \N}$, we say that $C$ is \emph{ascending} if for all $i \in \N$, $x_i \order x_{i+1}$, and that $C$ is \emph{descending} if for all $i \in \N$, $x_{i+1} \order x_i$.
A function $\fun{\function}{\lattice}{\lattice}$ is said to be \emph{upward continuous} if for any chain $C$, $\function(\lub C) = \lub \function(C)$ and \emph{downward continuous} if for any chain $C$, $\function(\glb C) = \glb \function(C)$.

We will denote by $\Iterate{\function}{x}$ the set $\{ \function^n(x) \mid n \in \N\}$ where $\function^0 = \Id{\lattice}$ and $\function^{n+1} = \function \circ \function^n$.
It is  easy to check that if $x$ is either pre-fixed or post-fixed, $\Iterate{\function}{x}$ is a chain  and in particular {a descending chain if $x$ is pre-fixed.}

\begin{prop}\label{prop:ker-iterate-bounds}
Let $\bound \in \lattice$ be a pre-fixed point of $\function$. Then
\begin{enumerate}
\item for all $n \in \N$, $\ker_\function(\bound) = \ker_\function(\function^n(\bound))$
\item $\ker_\function(\bound) = \ker_\function(\glb \Iterate{\function}{\bound})$
\end{enumerate}
\end{prop}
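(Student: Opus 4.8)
The plan is to prove both points by exploiting the fact that $\bound$ is pre-fixed, so that $\Iterate{\function}{\bound}$ is a descending chain, and that the kernel operator $\ker_\function$ picks out the greatest post-fixed point below its argument. The central observation I would use throughout is that $\ker_\function(x)$ is monotone in $x$ (being a kernel operator, hence a monotone function by \refToDefinition{ck-op}), so inequalities between arguments transfer to inequalities between kernels. Since $\ker_\function(x) \order x$ always, the whole game is to show the reverse-direction inequalities.

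For point 1, I would argue that $\function^n(\bound)$ is again a pre-fixed point for every $n$: this follows by induction on $n$, using monotonicity of $\function$ together with $\function(\bound) \order \bound$, which gives the descending chain $\ldots \order \function^{n+1}(\bound) \order \function^n(\bound) \order \ldots \order \bound$. Now for the key equality, one inequality is free: since $\function^n(\bound) \order \bound$, monotonicity of $\ker_\function$ yields $\ker_\function(\function^n(\bound)) \order \ker_\function(\bound)$. For the other direction, I would show that $\ker_\function(\bound)$ is itself a post-fixed point lying below $\function^n(\bound)$, so that it is a candidate in the join defining $\ker_\function(\function^n(\bound))$. By \refToProposition{ck-fp}, $\ker_\function(\bound)$ is in fact a fixed point, hence post-fixed; and since it is a fixed point, $\ker_\function(\bound) = \function^n(\ker_\function(\bound)) \order \function^n(\bound)$, where the inequality comes from $\ker_\function(\bound) \order \bound$ and monotonicity of $\function^n$. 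Therefore $\ker_\function(\bound)$ is a post-fixed point below $\function^n(\bound)$, so by the defining join $\ker_\function(\bound) \order \ker_\function(\function^n(\bound))$, closing the loop. It would be cleanest to carry out the base $n=1$ case first as a lemma and then iterate, but the direct argument above handles all $n$ uniformly.

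For point 2, I would set $m = \glb \Iterate{\function}{\bound}$ and proceed similarly. One inequality is immediate: $m \order \bound$ (as $\bound = \function^0(\bound)$ is a member of the set), so monotonicity of $\ker_\function$ gives $\ker_\function(m) \order \ker_\function(\bound)$. For the reverse, I again show that the fixed point $\ker_\function(\bound)$ lies below $m$. Indeed, by point 1 we have $\ker_\function(\bound) = \ker_\function(\function^n(\bound)) \order \function^n(\bound)$ for every $n$, so $\ker_\function(\bound)$ is a lower bound of the whole set $\Iterate{\function}{\bound}$, whence $\ker_\function(\bound) \order \glb \Iterate{\function}{\bound} = m$ by the definition of greatest lower bound. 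Since $\ker_\function(\bound)$ is a post-fixed point (being a fixed point) below $m$, the defining join gives $\ker_\function(\bound) \order \ker_\function(m)$, and combining the two inequalities yields $\ker_\function(\bound) = \ker_\function(m)$.

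The main obstacle, such as it is, lies in being careful about which facts are genuinely available: the argument leans on $\ker_\function(\bound)$ being a \emph{fixed point} rather than merely a post-fixed point, which is exactly what \refToProposition{ck-fp} provides under the hypothesis that $\bound$ is pre-fixed. Without that upgrade one only knows $\ker_\function(\bound) \order \bound$ and post-fixedness, which is enough to place $\ker_\function(\bound)$ in the relevant joins but makes the chain of equalities $\ker_\function(\bound) = \function^n(\ker_\function(\bound))$ unavailable. So I would make sure to invoke \refToProposition{ck-fp} early. Everything else is routine manipulation of monotonicity and the universal properties of $\glb$ and $\lub$.
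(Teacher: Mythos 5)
Your proof is correct, and for point 2 it essentially coincides with the paper's; the interesting divergence is in point 1. The paper proves point 1 by induction on $n$, establishing the consecutive equalities $\ker_\function(\function^n(\bound)) = \ker_\function(\function^{n+1}(\bound))$: the key inequality $\ker_\function(\function^n(\bound)) \order \function^{n+1}(\bound)$ is obtained in one step from the mere \emph{post-fixedness} of the kernel (which holds by definition for any argument, as $\Post{\function}$ is a kernel system) together with $\ker_\function(\function^n(\bound)) \order \function^n(\bound)$ and monotonicity of $\function$; then \CoIndPrinciple and monotonicity of $\ker_\function$ close the inductive step. You instead argue uniformly in $n$, comparing everything to $\ker_\function(\bound)$ itself: you invoke \refToProposition{ck-fp} to upgrade $\ker_\function(\bound)$ from a post-fixed point to a genuine fixed point, so that $\ker_\function(\bound) = \function^n(\ker_\function(\bound)) \order \function^n(\bound)$, and then conclude by \CoIndPrinciple. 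Both arguments are sound --- \refToProposition{ck-fp} is established before this proposition, so there is no circularity --- and yours is arguably shorter, handling all $n$ at once without an induction on the statement. What the paper's route buys is economy of hypotheses: it never needs the fixed-point upgrade, only the definitional post-fixedness of kernels applied one step of $\function$ at a time. This also means your closing remark is slightly too pessimistic: post-fixedness alone \emph{is} enough to prove the statement --- not via your uniform comparison to $\bound$, which genuinely needs the fixed-point property, but via the paper's step-by-step induction on consecutive iterates.
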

\begin{proof}
Note that, since $\bound$ is pre-fixed, $\Iterate{\function}{\bound}$ is a descending chain, hence for all $n \in \N$ we have $\function^{n+1} (\bound) \order \function^n(\bound)$, that is, $\function^n(\bound)$ is a pre-fixed point for all $n \in \N$.
\begin{enumerate}
\item We prove the statement by induction on $n$. If $n=0$ there is nothing to prove.
Now, assume the thesis for $n$.
By definition, $\ker_\function(\function^n(\bound))$ is a post-fixed point, hence $\ker_\function(\function^n(\bound)) \order \function(\ker_\function(\function^n(\bound)))$.
Since $\ker_\function$ is a kernel operator, by \refToDefinition{ck-op}, we have $\ker_\function(\function^n(\bound)) \order \function^n(\bound)$, hence by the monotonicity of $\function$, we get $\function(\ker_\function(\function^n(\bound))) \order \function^{n+1}(\bound)$.
Now, by transitivity of $\order$ we get $\ker_\function(\function^n(\bound)) \order \function^{n+1}(\bound)$.
Therefore,  by \CoIndPrinciple we conclude $\ker_\function(\function^n(\bound)) \order \ker_\function(\function^{n+1}(\bound))$.\\
On the other hand, since $\function^n(\bound)$ is pre-fixed, we have $\function^{n+1}(\bound) \order \function^n(\bound)$.
Thus, by the monotonicity of $\ker_\function$ we get the other inequality, and this implies  $\ker_\function(\function^n(\bound)) = \ker_\function(\function^{n+1}(\bound))$.
Finally, thanks to the induction hypothesis we get the thesis.
\item By point 1 we have $\ker_\function(\bound) \order \function^n(\bound)$ for all $n \in \N$, hence $\ker_\function(\bound) \order \glb \Iterate{\function}{\bound}$.
Therefore, by \CoIndPrinciple we get $\ker_\function(\bound) \order \ker_\function(\glb \Iterate{\function}{\bound})$.
On the other hand, we have $\glb \Iterate{\function}{\bound} \order \bound$, hence, by monotonicity of $\ker_\function$, we get the other inequality, and this implies the thesis.
\qedhere
\end{enumerate}
\end{proof}

\noindent
Another way to read the above proposition is that, given a bound $\bound$ which is pre-fixed, we obtain the same greatest fixed point below $\bound$ if we take as bound
any element $\function^n(\bound)$ of the descending chain $\Iterate{\function}{\bound}$.
Moreover, \refToProposition{ker-iterate-bounds} says also that we obtain the same greatest fixed point  induced by $\bound$ if we take as bound  the greatest lower bound of that chain, namely, $\glb \Iterate{\function}{\bound}$.

We conclude this part of the section with a result that characterizes the closure and the kernel of respectively a post-fixed and a pre-fixed point using chains in analogy with the Kleene theorem~\cite{LassezKS82}

\begin{prop}\label{prop:ck-chain}
Let $\bound, \coaxioms \in \lattice$ be a pre-fixed and a post-fixed point respectively. Then
\begin{enumerate}
\item if $\function$ is downward continuous, then $\ker_\function(\bound)  = \glb \Iterate{\function}{\bound}$
\item if $\function$ is upward continuous, then $\closure_\function(\coaxioms) = \lub \Iterate{\function}{\coaxioms}$
\end{enumerate}
\end{prop}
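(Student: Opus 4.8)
The plan is to prove point~1 and obtain point~2 by the usual duality (post-fixed generator, ascending chain, upward continuity, and the closure $\closure_\function$ in place of the kernel $\ker_\function$). So fix a pre-fixed point $\bound$ and set $c = \glb \Iterate{\function}{\bound}$. Recalling the remark preceding \refToProposition{ker-iterate-bounds}, since $\bound$ is pre-fixed the set $\Iterate{\function}{\bound}$ is a descending chain, so downward continuity of $\function$ applies to it.

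The key step is to show that $c$ is a fixed point of $\function$, and this is exactly where downward continuity is needed. Using it on the descending chain $\Iterate{\function}{\bound}$ gives $\function(c) = \function(\glb \Iterate{\function}{\bound}) = \glb \function(\Iterate{\function}{\bound})$. Now $\function(\Iterate{\function}{\bound}) = \{\function^{n+1}(\bound) \mid n \in \N\}$ is just $\Iterate{\function}{\bound}$ with its greatest element $\function^0(\bound) = \bound$ removed; dropping the top element of a descending chain does not change the meet, so $\glb \function(\Iterate{\function}{\bound}) = c$ and hence $\function(c) = c$. I expect this to be the main obstacle, since mere monotonicity yields only $\function(c) \order c$ (each $\function^{n+1}(\bound)$ bounds $\function(c)$ from above, so $\function(c)$ is a lower bound of the chain), i.e.\ that $c$ is merely pre-fixed; downward continuity is precisely what supplies the reverse inequality, and thus post-fixedness.

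To conclude, I would invoke \refToProposition{ker-iterate-bounds}, point~2, which already gives $\ker_\function(\bound) = \ker_\function(c)$. Since $c$ is a fixed point it is in particular post-fixed, and by \refToTheorem{ck-sys-op}, point~2, the post-fixed points are exactly the fixed points of the kernel operator $\ker_\function = \ker_{\Post{\function}}$, so $\ker_\function(c) = c$. Combining, $\ker_\function(\bound) = c = \glb \Iterate{\function}{\bound}$, as required. Equivalently, one can argue directly: $c$ is a fixed point below $\bound$, hence $c \order \ker_\function(\bound)$ because $\ker_\function(\bound)$ is the greatest fixed point below $\bound$ by \refToProposition{ck-fp}, point~1; conversely $\ker_\function(\bound) \order \function^n(\bound)$ for every $n$ by \refToProposition{ker-iterate-bounds}, point~1, together with $\ker_\function(x) \order x$, so $\ker_\function(\bound)$ is a lower bound of the chain and thus $\ker_\function(\bound) \order c$.
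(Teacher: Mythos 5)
Your proof is correct, but it follows a different route from the paper's. The paper's proof does not compute anything directly: it observes that $\LowSet{\bound}$ is a complete lattice with top element $\bound$, that $\function$ restricts to a well-defined, monotone, downward continuous function on $\LowSet{\bound}$ (well-definedness uses that $\bound$ is pre-fixed), that $\ker_\function(\bound)$ is the greatest fixed point of this restriction by \refToProposition{ck-fp}, and then invokes the Kleene theorem on $\LowSet{\bound}$ to identify this greatest fixed point with $\glb\Iterate{\function}{\bound}$. You instead inline the substance of the Kleene argument: you prove by hand that $c = \glb \Iterate{\function}{\bound}$ is a fixed point (downward continuity applied to the descending chain, plus the observation that dropping the top of a descending chain does not change the meet), and then you conclude either via \refToProposition{ker-iterate-bounds} together with \refToTheorem{ck-sys-op} ($c$ post-fixed implies $\ker_\function(c)=c$), or via the characterization of $\ker_\function(\bound)$ as the greatest fixed point below $\bound$ sandwiched between the two inequalities. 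Both conclusions are valid, and your identification of where continuity is genuinely needed (monotonicity alone gives only $\function(c)\order c$) is accurate. What the paper's route buys is brevity and modularity—it delegates the iteration argument entirely to a cited theorem; what your route buys is self-containedness (no appeal to Kleene as a black box) and a nice reuse of \refToProposition{ker-iterate-bounds}, which the paper proves but does not use in its own proof of this proposition.
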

\begin{proof}
We prove only point 1, the point 2 can by obtained by duality.\\
Note that $\LowSet{\bound}$ is a complete lattice with top element $\bound$ and
 the function $\fun{\function}{\LowSet{\bound}}{\LowSet{\bound}}$ (obtained by restricting $\function$ to $\LowSet{\bound}$)  is well-defined and monotone, since $\bound$ is a pre-fixed point.
In this case it is also downward continuous, because  so is $\function$.
Therefore, by \refToProposition{ck-fp}, $\ker_\function(\bound)$ is the greatest fixed point of $\function$ in the complete lattice $\LowSet{\bound}$, hence, since $\function$ is downward continuous, we get the thesis by the Kleene theorem.
\end{proof}

Note that the above proposition requires an additional hypothesis on $\function$, that is required to be continuous, as happens for the Kleene theorem~\cite{LassezKS82}.
Under this assumption the above result immediately applies to the bounded fixed point, providing us with an iterative characterization of it, as the following corollary shows.

\begin{cor}\label{cor:bfp-chain}
Let $\coaxioms \in \lattice$ and set $\bound = \closure_\function(\coaxioms)$.
If $\function$ is downward continuous, then $\Generated{\function}{\coaxioms} = \glb \Iterate{\function}{\bound}$.
\end{cor}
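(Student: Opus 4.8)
The plan is to unfold \refToDefinition{bfp} and then recognize the statement as a direct instance of the first point of \refToProposition{ck-chain}. By definition of the bounded fixed point, $\Generated{\function}{\coaxioms} = \ker_\function(\closure_\function(\coaxioms)) = \ker_\function(\bound)$, so the goal immediately reduces to proving $\ker_\function(\bound) = \glb \Iterate{\function}{\bound}$.

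The one fact I would make explicit is that $\bound = \closure_\function(\coaxioms)$ is a pre-fixed point of $\function$, since this is precisely the hypothesis demanded by \refToProposition{ck-chain}. This is immediate from the construction of the closure operator: by \refToProposition{pre-post-ck} the set $\Pre{\function}$ is a closure system, and by \refToTheorem{ck-sys-op} its induced closure operator $\closure_\function = \closure_{\Pre{\function}}$ satisfies $\closure_\function(x) \in \Pre{\function}$ for every $x \in \lattice$. Hence $\bound \in \Pre{\function}$, i.e., $\bound$ is pre-fixed.

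With this in hand, both hypotheses of the first point of \refToProposition{ck-chain} are satisfied, $\bound$ being pre-fixed and $\function$ being downward continuous by assumption, so the proposition yields $\ker_\function(\bound) = \glb \Iterate{\function}{\bound}$. Chaining this with the opening equality gives $\Generated{\function}{\coaxioms} = \glb \Iterate{\function}{\bound}$, as required.

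I do not expect any genuine obstacle here: the corollary is essentially a repackaging of \refToDefinition{bfp} together with \refToProposition{ck-chain}. The only step that is not purely mechanical is the observation that the closure of an arbitrary element is automatically a pre-fixed point, which is exactly what licenses the continuity-based iterative characterization to apply.
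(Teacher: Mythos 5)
Your proof is correct and follows essentially the same route as the paper: unfold \refToDefinition{bfp} to get $\Generated{\function}{\coaxioms} = \ker_\function(\bound)$, then apply point 1 of \refToProposition{ck-chain}. The only difference is that you explicitly verify that $\bound = \closure_\function(\coaxioms)$ is a pre-fixed point (via \refToProposition{pre-post-ck} and \refToTheorem{ck-sys-op}), a hypothesis of \refToProposition{ck-chain} that the paper's proof leaves implicit; this is a harmless and indeed welcome addition.
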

\begin{proof}
By \refToDefinition{bfp} we have $\Generated{\function}{\coaxioms} = \ker_\function(\bound)$.
Since $\function$ is downward continuous, by \refToProposition{ck-chain}  we get the thesis.
\end{proof}

\subsection{Coaxioms as generators}\label{sect:correspondence}
In this part of the section we come back to inference systems and we show that the interpretation generated by coaxioms of an inference system is indeed a fixed point of the inference operator.
In \refToSection{coaxioms} we have described two steps to construct $\Generated{\is}{\coaxioms}$, the interpretation generated by coaxioms $\coaxioms$ of an inference system $\is$:
\begin{enumerate}
\item First, we consider the inference system $\Extended{\is}{\coaxioms}$ obtained enriching $\is$ by judgements in $\coaxioms$ considered as axioms, and we take its inductive interpretation $\Ind{\Extended{\is}{\coaxioms}}$.
\item Then, we take the coinductive interpretation of the inference system obtained from $\is$ by keeping only rules with consequence in $\Ind{\Extended{\is}{\coaxioms}}$, that is, we define
\[
\Generated{\is}{\coaxioms}=\CoInd{\Restricted{\is}{\Ind{\Extended{\is}{\coaxioms}}}}
\]
\end{enumerate}
The definition of {the} bounded fixed point is the formulation of these two steps in the general setting of complete lattices.
Indeed, the inference operator $\Op{\is}$ is a monotone function on the complete lattice $\Pair{\wp(\universe)}{\subseteq}$ obtained by taking set inclusion as order,
and specifying the coaxioms $\coaxioms$ corresponds to fixing an arbitrary element of the lattice as generator.
Then:
\begin{enumerate}
\item First, we construct the closure of $\coaxioms$, that is, the best closed approximation of $\coaxioms$.
This closure plays the role of the bound for the next step.
\item Then we construct the greatest fixed point below such bound.
\end{enumerate}
To show the correspondence in a precise way, we give an alternative and equivalent characterization of both the closure and the kernel of an element in $\lattice$.

\begin{prop}\label{prop:ck-alt}
Let $\coaxioms, \bound \in \lattice$.
\begin{enumerate}
\item Consider the function $\fun{\Extended{\function}{\coaxioms}}{\lattice}{\lattice}$ defined by $\Extended{\function}{\coaxioms}(x) = \function(x) \join \coaxioms$.
Then, $\closure_\function(\coaxioms) = \lfp \Extended{\function}{\coaxioms}${.}
\item Consider the function $\fun{\Restricted{\function}{\bound}}{\lattice}{\lattice}$ defined by $\Restricted{\function}{\bound}(x) = \function(x) \meet \bound$.
Then, $\ker_\function(\bound) = \gfp \Restricted{\function}{\bound}$.
\end{enumerate}
\end{prop}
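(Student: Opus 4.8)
The plan is to prove point~1 in full and to obtain point~2 by the usual order-theoretic duality exploited throughout this section. The key idea for point~1 is to characterize the \emph{pre-fixed} points of $\Extended{\function}{\coaxioms}$, rather than to reason about its fixed points directly: the fixed points of $\Extended{\function}{\coaxioms}$ (solutions of $x = \function(x) \join \coaxioms$) are not simply the fixed points of $\function$ lying above $\coaxioms$, so a direct argument on $\Fix{\Extended{\function}{\coaxioms}}$ would be awkward, whereas the pre-fixed points behave very cleanly.

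First I would note that $\Extended{\function}{\coaxioms}$ is monotone, being the join of the monotone function $\function$ with the constant $\coaxioms$; hence by the Knaster-Tarski theorem~\cite{Tarski55} it admits a least fixed point, which coincides with the meet of all its pre-fixed points, i.e. $\lfp \Extended{\function}{\coaxioms} = \glb \Pre{\Extended{\function}{\coaxioms}}$ (this is exactly the form of Knaster-Tarski already used for $\closure_\function(\bot) = \glb \Pre{\function} = \lfp\function$ earlier in the section).

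Next I would compute $\Pre{\Extended{\function}{\coaxioms}}$ explicitly. For any $x \in \lattice$, the inequality $\Extended{\function}{\coaxioms}(x) \order x$ unfolds to $\function(x) \join \coaxioms \order x$, which by the universal property of the join holds if and only if both $\function(x) \order x$ and $\coaxioms \order x$. Thus $x$ is a pre-fixed point of $\Extended{\function}{\coaxioms}$ exactly when $x \in \Pre{\function}$ and $\coaxioms \order x$, that is, $\Pre{\Extended{\function}{\coaxioms}} = \{y \in \Pre{\function} \mid \coaxioms \order y\}$. Combining the two observations yields $\lfp \Extended{\function}{\coaxioms} = \glb \{y \in \Pre{\function} \mid \coaxioms \order y\}$, and the right-hand side is precisely the definition of $\closure_\function(\coaxioms)$ recalled just before \refToProposition{ck-fp}. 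Point~2 then follows dually: the post-fixed points of $\Restricted{\function}{\bound}(x) = \function(x) \meet \bound$ are exactly $\{y \in \Post{\function} \mid y \order \bound\}$, and Knaster-Tarski in its dual form gives $\gfp \Restricted{\function}{\bound} = \lub \Post{\Restricted{\function}{\bound}} = \ker_\function(\bound)$.

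I do not expect any substantial obstacle; the only point needing a little care is invoking the correct form of Knaster-Tarski, namely that the least fixed point of a monotone map on a complete lattice is the meet of its pre-fixed points (dually, the greatest fixed point is the join of its post-fixed points). It is exactly this form that converts the elementary pre-fixed/post-fixed point computation into the desired closure/kernel expression.
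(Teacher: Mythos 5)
Your proof is correct and takes essentially the same route as the paper's: both hinge on the observation that, by the universal property of join, the pre-fixed points of $\Extended{\function}{\coaxioms}$ are exactly the pre-fixed points of $\function$ lying above $\coaxioms$, and both conclude via the Knaster--Tarski identification of the least fixed point with the least pre-fixed point. The only difference is packaging: the paper shows that $\closure_\function(\coaxioms)$ is itself a pre-fixed point of $\Extended{\function}{\coaxioms}$ and then invokes \IndPrinciple for minimality, whereas you state the set equality $\Pre{\Extended{\function}{\coaxioms}} = \{y \in \Pre{\function} \mid \coaxioms \order y\}$ and apply $\glb$ to both sides, which amounts to the same computation.
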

\begin{proof}
We prove only point 1, point 2 can be obtained by duality.
{First of all note that $\Extended{\function}{\coaxioms}$ is a monotone function.
By definition of $\closure_\function$, we have that $\function(\closure_\function(\coaxioms)) \order \closure_\function(\coaxioms)$ and $\coaxioms \order \closure_\function(\coaxioms)$, hence $\closure_\function(\coaxioms)$ is a pre-fixed point of $\Extended{\function}{\coaxioms}$.
Then, by \IndPrinciple, $\closure_\function(\coaxioms)$ is the least pre-fixed point of $\Extended{\function}{\coaxioms}$, hence, by Knaster-Tarski, $\closure_\function(\coaxioms) = \lfp \Extended{\function}{\coaxioms}$.  }
\end{proof}

By this alternative characterization we can formally state the correspondence with the two steps for defining $\Generated{\is}{\coaxioms}$.

\begin{thm}\label{theo:correspondence}
Let $\is$ be an inference system and $\coaxioms, \bound \in \wp(\universe)$, then the following facts hold:
\begin{enumerate}
\item $\Extended{(\Op{\is})}{\coaxioms} = \Op{(\Extended{\is}{\coaxioms})}$ (so we can safely omit brackets)
\item $\Restricted{(\Op{\is})}{\bound} = \Op{(\Restricted{\is}{\bound})}$ (so we can safely omit brackets)
\item $\closure_\Op{\is}(\coaxioms) = \Ind{\Extended{\is}{\coaxioms}}$
\item $\ker_\Op{\is}(\bound) = \CoInd{\Restricted{\is}{\bound}}$
\end{enumerate}
\end{thm}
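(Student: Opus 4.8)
This is essentially a dictionary-translation result: the four clauses say that the abstract lattice-theoretic operations of Section~\ref{sect:bfp}, when instantiated with $\lattice = \wp(\universe)$ and $\function = \Op{\is}$, coincide exactly with the concrete inference-system constructions of Section~\ref{sect:sem-coaxioms}. The overall strategy is to prove the two purely operator-level identities (1) and (2) first, purely by unfolding the definition of $\Op{\is}$, and then to use those two identities together with the alternative characterizations from \refToProposition{ck-alt} to derive (3) and (4). I would prove only 1 and 3, noting that 2 and 4 follow by duality.

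\textbf{Clauses 1 and 2.} These are set-theoretic equalities between inference operators, proved by showing both sides produce the same consequences from a given $S \subseteq \universe$. For clause~1, I would compute $\Extended{(\Op{\is})}{\coaxioms}(S) = \Op{\is}(S) \cup \coaxioms$ by the definition of $\Extended{\function}{\coaxioms}$ in \refToProposition{ck-alt}, where the join $\join$ in $\Pair{\wp(\universe)}{\subseteq}$ is set union. On the other side, $\Op{(\Extended{\is}{\coaxioms})}(S)$ is the set of consequences derivable in one step using rules of $\Extended{\is}{\coaxioms}$; since $\Extended{\is}{\coaxioms}$ adds exactly the judgements of $\coaxioms$ as axioms (rules with empty premises), its one-step consequences are precisely those of $\is$ together with all of $\coaxioms$ (axioms are always included, as noted after the definition of $\Op{\is}$). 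Hence both sides equal $\Op{\is}(S) \cup \coaxioms$. Clause~2 is the dual computation: $\Restricted{(\Op{\is})}{\bound}(S) = \Op{\is}(S) \cap \bound$, while $\Op{(\Restricted{\is}{\bound})}(S)$ uses only rules whose consequence lies in $\bound$, so a judgement is inferred iff it is both inferable in $\is$ and a member of $\bound$; again the two sides agree.

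\textbf{Clauses 3 and 4.} Here I would combine the operator identities with \refToProposition{ck-alt}. For clause~3, that proposition gives $\closure_{\Op{\is}}(\coaxioms) = \lfp \Extended{(\Op{\is})}{\coaxioms}$; by clause~1 this equals $\lfp \Op{(\Extended{\is}{\coaxioms})}$. It now remains to recall the standard fact, stated in \refToSection{sem-coaxioms}, that the inductive interpretation of any inference system is the least fixed point of its inference operator, i.e. $\Ind{\Extended{\is}{\coaxioms}} = \lfp \Op{(\Extended{\is}{\coaxioms})}$. Chaining these gives $\closure_{\Op{\is}}(\coaxioms) = \Ind{\Extended{\is}{\coaxioms}}$. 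Clause~4 is dual: \refToProposition{ck-alt} gives $\ker_{\Op{\is}}(\bound) = \gfp \Restricted{(\Op{\is})}{\bound}$, clause~2 rewrites this as $\gfp \Op{(\Restricted{\is}{\bound})}$, and the characterization of the coinductive interpretation as the greatest fixed point yields $\CoInd{\Restricted{\is}{\bound}}$.

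\textbf{Main obstacle.} None of the steps is technically deep; the only point requiring care is clause~1 (and its dual~2), specifically the bookkeeping around empty-premise rules. One must verify that enriching $\is$ with the judgements of $\coaxioms$ as axioms contributes \emph{exactly} $\coaxioms$ to every one-step image $\Op{\is}(S)$, independently of $S$ — this uses the observation that axioms are unconditionally included by $\Op{\is}$, so adding them as new axioms simply unions $\coaxioms$ into the result. Once clauses~1 and~2 are nailed down, clauses~3 and~4 are immediate corollaries of \refToProposition{ck-alt} together with the already-recalled fixed-point characterizations of the inductive and coinductive interpretations, so the bulk of the work — and the only place where a subtle misstep could occur — is the careful unfolding of $\Op{\is}$ in the first two clauses.
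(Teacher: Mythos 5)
Your proposal is correct and follows essentially the same route as the paper: clauses~1 and~2 by direct unfolding of $\Op{\is}$ (the paper phrases this as a double inclusion, you as a computation of both sides to $\Op{\is}(S)\cup\coaxioms$ and $\Op{\is}(S)\cap\bound$, which is the same content), and clauses~3 and~4 by chaining \refToProposition{ck-alt} with clauses~1--2 and the standard lfp/gfp characterizations of $\Ind{-}$ and $\CoInd{-}$. One small caution: your opening claim that 2 and 4 ``follow by duality'' from 1 and 3 is loose, since adding axioms and restricting conclusions are not formal lattice duals of one another (the paper proves both explicitly), but this is harmless here because you in fact sketch the direct computations for 2 and 4 as well.
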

\begin{proof}\hfill
\begin{enumerate}
\item We have to show that, for $S \subseteq \universe$, $\Extended{(\Op{\is})}{\coaxioms}(S) = \Op{(\Extended{\is}{\coaxioms})}(S)$.
If $\cons \in \Extended{(\Op{\is})}{\coaxioms}(S)$, then either $\cons \in \coaxioms$ or $\cons \in \Op{\is}(S)$;
in the former case, there exists $\frac{\phantom{xx}}{\cons} \in \Extended{\is}{\coaxioms}$ by definition of $\Extended{\is}{\coaxioms}$,
in the latter, there exists $\frac{\prem}{\cons} \in \is$ such that $\prem \subseteq S$, and this  implies $\frac{\prem}{\cons} \in \Extended{\is}{\coaxioms}$.
Therefore, in both cases $\cons \in \Op{(\Extended{\is}{\coaxioms})}(S)$. \\
Conversely, if $\cons \in \Op{(\Extended{\is}{\coaxioms})}(S)$, then there exists $\frac{\prem}{\cons} \in \Extended{\is}{\coaxioms}$ such that $\prem \subseteq S$.
By definition of $\Extended{\is}{\coaxioms}$, either $\frac{\prem}{\cons} \in \is$ or $\cons \in \coaxioms$ and $\prem = \emptyset$,
therefore, in the former case $\cons \in \Op{\is}(S)$ and in the latter $\cons \in \coaxioms$, thus in both cases $\cons \in \Extended{(\Op{\is})}{\coaxioms}(S)$.

\item We have to show that, for $S \subseteq \universe$, $\Restricted{(\Op{\is})}{\bound}(S) = \Op{(\Restricted{\is}{\bound})}(S)$.
If $\cons \in \Restricted{(\Op{\is})}{\bound}(S)$, then we have $\cons \in \bound$ and $\cons \in \Op{\is}(S)$, hence there is $\frac{\prem}{\cons} \in \is$ such that $\prem \subseteq S$;
therefore, by definition of $\Restricted{\is}{\bound}$, we get $\frac{\prem}{\cons} \in \Restricted{\is}{\bound}$, and this implies that $\cons \in \Op{(\Restricted{\is}{\bound})}(S)$.\\
Conversely, if $\cons \in \Op{(\Restricted{\is}{\bound})}(S)$, then there exists $\frac{\prem}{\cons}  \in \Restricted{\is}{\bound}$ such that $\prem \subseteq S$.
By definition of $\Restricted{\is}{\bound}$, we have that $\frac{\prem}{\cons} \in \is$ and $\cons \in \bound$,
therefore $\cons \in \Op{\is}(S)$ and  $\cons \in \bound$, thus $\cons \in \Restricted{(\Op{\is})}{\bound}(S)$.

\item By \refToProposition{ck-alt} we get that $\closure_\Op{\is}(\coaxioms) = \lfp \Extended{\Op{\is}}{\coaxioms}$, that corresponds to the \emph{inductive interpretation} of $\Extended{\is}{\coaxioms}$, $\Ind{\Extended{\is}{\coaxioms}}$, by point 1 of this theorem.

\item By \refToProposition{ck-alt} we get that $\ker_\Op{\is}(\bound) = \gfp \Restricted{\Op{\is}}{\bound}$, that corresponds to the \emph{coinductive interpretation} of $\Restricted{\is}{\bound}$, $\CoInd{\Restricted{\is}{\bound}}$, by point 2 of this theorem{.}
\qedhere
\end{enumerate}
\end{proof}

\noindent
Thanks to \refToTheorem{correspondence}, we can conclude that, given an inference system with coaxioms $\Pair{\is}{\coaxioms}$:
\[
\Generated{\is}{\coaxioms}=
\CoInd{\Restricted{\is}{\Ind{\Extended{\is}{\coaxioms}}}} =
\ker_\Op{\is} (\closure_\Op{\is}(\coaxioms)) =
\Generated{\Op{\is}}{\coaxioms}
\]
that is, the  interpretation generated by coaxioms $\coaxioms$ of the inference system $\is$ is exactly the bounded fixed point of $\Op{\is}$ generated by $\coaxioms$.

Finally, applying \refToProposition{bfp-lfp-gfp} we get that the inductive and the coinductive interpretations of $\is$ are particular cases of the interpretation generated by coaxioms.
Indeed, we get the inductive interpretation when $\coaxioms = \emptyset$ and we get the coinductive interpretation when $\coaxioms = \universe$, as shown below.
\begin{align*}
\Generated{\is}{\emptyset} &= \Generated{\Op{\is}}{\emptyset} = \lfp \Op{\is} = \Ind{\is} \\
\Generated{\is}{\universe} &= \Generated{\Op{\is}}{\universe} = \gfp \Op{\is} = \CoInd{\is}
\end{align*}


\section{Proof trees for coaxioms}\label{sect:coaxioms-trees}

In this section we formalize several proof-theoretic characterizations {of}  the semantics of inference systems with coaxioms, {and prove} their equivalence with the fixed point semantics presented in \refToSection{coaxioms-model}.
In order to discuss such proof-theoretic semantics in a rigorous way, we need a more explicit and mathematically precise notion of proof tree than the one we introduced in \refToSection{coaxioms};
therefore, we start by fixing some concepts on trees.

\subsection{A digression on graphs and trees}\label{sect:trees-graphs}

Here we report some results about trees and graphs.
{We essentially follow the approach adopted in~\cite{AczelAV01,AczelAMV03,AdamekEtAl15}, with few differences in the definition of trees: for us a tree will be labelled and unordered \Rev{as in~\cite{MoerdijkP00,BergM07}.}
The main theorem of this subsection (\refToTheorem{graph-tree}) is a weaker form of results presented in~\cite{AczelAMV03,AdamekEtAl15}, which, however, require additional conditions\footnote{These conditions are needed since they want a final coalgebra for suitable power-set functors.}  on trees, \EZ{which} we can ignore. }

Along this section we denote by $\String{A}$ the set of finite \emph{strings}  on the \emph{alphabet} $A$, which is an arbitrary set of \emph{symbols}.
We use Greek letters $\alpha, \beta, \ldots$ to range over strings and Roman letters $a, b, \ldots$  to range over symbols in $A$ and we implicitly identify strings of length one and symbols.
Moreover, we denote by juxtaposition \emph{string concatenation}, and by $\Len{\alpha}$ the \emph{length} of the string $\alpha$.
Finally, $\EString$ is the empty string.
{We also extend string concatenation to sets of {strings}, denoting, for $X, Y \subseteq \String{A}$, by $XY$ the set $\{\alpha \beta \in \String{A} \mid \alpha \in X, \beta \in Y\}$; moreover if either $X$ or $Y$ are singleton{s} we will omit curly braces, namely $\alpha Y = \{\alpha\}Y$. }

On the set $\String{A}$ we can define the \emph{prefixing relation} $\prec$ as follows: for any $\alpha, \beta \in \String{A}$, $\alpha\prec\beta$ if and only if there exists $\gamma \in \String{A}$ such that $\alpha\gamma = \beta$.
It can be shown that $\prec$ is a partial order and thus, for any $X\subseteq \String{A}${,} the restriction of $\prec$ to $X$ is well-defined and still a partial order.
We say that a subset $X\subseteq \String{A}$ is \emph{well-founded} with respect to prefixing if any chain $C \subseteq X$ is finite.

A non-empty subset $L \subseteq \String{A}$ is a \emph{tree language} if it satisfies the \emph{prefix property}, that is, if $\alpha a \in L$ then $\alpha \in L$.
In particular, $\EString \in L$ for any tree language $L\subseteq \String{A}$.
Now we are able to define trees following~\cite{Courcelle83}.

\begin{defi}\label{def:tree}
Let $A$ be an alphabet, $L \subseteq \String{A}$ a tree language and $\Labels$ a set.
A \emph{tree} labelled in $\Labels$ is a function $\fun{t}{L}{\Labels}$.
The element $t(\EString)$ is called the \emph{root} of $t$.
\end{defi}

The notion of tree in \refToDefinition{tree} is {essentially the same as standard  ones, see, e.g.,~\cite{Courcelle83,AczelAV01}.
The main difference is that we allow an arbitrary set to be taken as alphabet.
This is important because, as we will see, the branching of the tree is bounded by the cardinality of the alphabet, and,
since we have to use trees in the  context of inference systems, this cardinality cannot be bounded a priori. }

If $\fun{t}{L}{\Labels}$ is a tree, {then,} for any $\alpha \in L$, the \emph{subtree} rooted at $\alpha$ is the function $\fun{t\Restrict{\alpha}}{L\Restrict{\alpha}}{\Labels}$, where $L\Restrict{\alpha}=\{\beta \in \String{A} \mid \alpha\beta \in L\}$ and $t\Restrict{\alpha}(\beta) = t(\alpha\beta)$.
This notion is well-defined since $L\Restrict{\alpha}$ is a tree language, hence $t\Restrict{\alpha}$ is a tree.
Note that $t$ is itself a subtree, rooted at $\EString$.
Subtrees rooted at $\alpha$ with $\Len{\alpha}=1$ are called \emph{direct subtrees} of $t$.
Finally, a tree $t$ is \emph{well-founded} if $\dom(t)$ is well-founded with respect to $\prec$.

{The notion of tree introduced in \refToDefinition{tree} is mathematically precise, but  not very intuitive.
A usual, and perhaps more natural,} way to {introduce} trees {is} as particular graphs.
{Intuitively, using a graph-like terminology, that we will make precise {below}, we can see the elements in the tree language $\dom(t)$ as nodes.
Actually, thanks to the {prefix} property, a node $\alpha \in \dom(t)$  represents also all nodes (its prefixes) we have to traverse to reach $\alpha$ starting from the root $\EString$.
For instance, if $\alpha = abc$, we know that $\EString, a, ab, abc \in dom(t)$, hence they are nodes of $t$ and they form the path from the root to $\alpha$.
Therefore, requiring $t$ to be well-founded is equivalent to require that any sequence of prefixes is finite, hence it is equivalent to require that all paths in $t$ are finite. }

{To formally show that} indeed trees can be {seen} as particular graphs, {we} start {by} giving a definition of graph{.}

\begin{defi}\label{def:graph}
A \emph{graph} is a pair $\Pair{\Nodes}{\adj}$ where $\Nodes$ is the set of \emph{nodes} and $\fun{\adj}{\Nodes}{\wp(\Nodes)}$ is the \emph{adjacency function}.
\end{defi}


With this definition it is easy to assign a graph structure to (the domain of) a tree.
Let $\fun{t}{L}{\Labels}$ be a tree, we can represent it as a graph with set of nodes $L$ and adjacency function $\chl_t(\alpha) = \{\beta \in L \mid \exists a \in A. \beta = \alpha a\}$ returning the \emph{children} of a node $\alpha$. 
In the following we will omit the reference to $t$ when it is clear from the context. 
Thanks to this graph structure we justify terminology {like} node and {adjacent} for trees: a node is a string $\alpha \in \dom(t)$ and\EZ{,} given a node $\alpha$, {the set of its adjacents is $\chl(\alpha)$}.
{Furthermore, the \emph{depth} of a node $\alpha \in \dom(t)$ is its distance, in a graph-theoretic sense, from the root, that is, $\Len{\alpha}$, hence it is alway finite;
the depth of the tree $t$ is the least upper bound of the depth of all its nodes, hence it can be infinite. }

We now analyse the role of the alphabet $A$ in the definition of tree (\refToDefinition{tree}).
First, note that its elements are essentially not relevant.
What actually {matters} {is the cardinality of $A$}, that determines the maximum \emph{branching} of the tree, that is, the {maximum number of children (hence subtrees) for each node $\alpha$.
In other words, we have $|\chl(\alpha)| \le |A|$ for all $\alpha \in L$.
For instance, {we can build essentially the same trees if $A$ is either $\{1, 2, 3\}$ or $\{a, b, c\}$.}
However{,} the fact that they have both cardinality {3 is relevant, since trees built on $A$ have for each node at most 3 children.} }
{Therefore, each cardinality $\lambda$ determines a class of trees, called \emph{$\lambda$-branching trees}, that is, trees built on an alphabet of cardinality $\lambda$.

In the following, we will denote {by} $\Tree[\lambda]{\Labels}$ the set of all $\lambda$-branching trees, where $\lambda$ is a given cardinal.
We will omit $\lambda$ when not relevant.
Since, as we have noticed, the elements of the alphabet are irrelevant, we will say that two $\lambda$-branching trees are equal if, considering them as functions, they are equal up to isomorphism on the alphabet\footnote{Formally, we should define an equivalence relation on trees and then work in the quotient.
Two trees $t, t' \in \Tree[\lambda]{\Labels}$ are equivalent iff $t = t' \circ b$ where $b$ is a bijection between the alphabet of $t$ and \EZ{that of} $t'$.}.
Furthermore, a tree $t \in \Tree[\kappa]{\Labels}$ with $\kappa \le \lambda$ can be regarded as an element of $\Tree[\lambda]{\Labels}$ up to an inclusion of the alphabet in a set of cardinality $\lambda$.
We will leave implicit this inclusion and hence write $\Tree[\kappa]{\Labels} \subseteq \Tree[\lambda]{\Labels}$. }

{We now consider a special class of trees, suitable to model proof trees.
As we will see, proof trees are labelled by judgements, {notably} nodes are (labelled by) consequences of rules and their children correspond to  sets of premises, hence each child has a distinct label.
Trees of this kind can be represented in a more compact way, and enjoy an important property.

Let us introduce these trees formally. }
We say that a tree $\fun{t}{L}{\Labels}$ is \emph{children injective} if, for all $\alpha \in \dom(t)$, the restriction of $t$ to the set $\chl(\alpha)$ is injective;
more explicitly, for all $\alpha \in \dom(t)$, if $\alpha a, \alpha b \in \dom(t)$ and $t(\alpha a) = t(\alpha b)$, then $a = b$.
In other words, all children of a node must have different labels.
Note that all subtrees of a children injective tree are themselves children injective.

{The first property we observe is that a children injective tree is completely  determined by the label of its root and by the set of all paths of labels in it.
Indeed, if $\fun{t}{L}{\Labels}$ is children injective, then we can define the following function:
\[
\fun{f_t}{L}{\String{\Labels}}
\quad
\begin{cases}
f_t(\EString) &= \EString\\
f_t(\alpha a) &= f_t(\alpha) t(\alpha a)
\end{cases}
\]
Intuitively, the function $f$ maps each node $\alpha \in L$ to the string of labels encountered in the path from the root to $\alpha$.
It is easy to see that $f$ is injective and  $f_t(L)$ is a tree language.
Hence, the pair $\Pair{t(\EString)}{f_t(L)}$ conveys a complete description of $t$, that is, starting from it, we can reconstruct $t$, up to a change of the alphabet.
More precisely, we can define a tree $\fun{t_\Labels}{f_t(L)}{\Labels}$ such that $t(\alpha) = t_\Labels(f_t(\alpha))$ as follows:
\[
\begin{cases}
t_\Labels(\EString) &= t(\EString)\\
t_\Labels(\alpha a) &= a
\end{cases}
\]
As a consequence, a children injective tree $t$ can always be equivalently represented as $t_\Labels$.
Finally, note that the subtree of $t$ rooted at $\alpha$, $t\Restrict{\alpha}$, is represented by $\Pair{t(\alpha)}{f(L)\Restrict{f(\alpha)}}$. }

We denote by $\CTree{\Labels}$ the set of children injective trees labelled in $\Labels$.
{From the construction just presented, we also get that the branching of a children injective tree $t$ is bounded by the cardinality $\lambda$ of $\Labels$, hence  we have that $\CTree{\Labels} \subseteq \Tree[\lambda]{\Labels}$.}

The main result of this subsection concerning children injective trees  is \refToTheorem{graph-tree}.
Before {stating} it we need to briefly say {something} about \emph{paths} in a graph.
Let $G=\Pair{\Nodes}{\adj}$ be a graph, a path in $G$ is a {non-empty} string $\node_0 \cdots \node_n \in \String{\Nodes}$ such that{, for all} $i \in \{0, \ldots, n-1\}$, $\node_{i+1} \in \adj(\node_i)$, that is, {for all pairs of subsequent nodes the latter is adjacent to the former}.
We say that $\node_0\cdots \node_n$ is a path \emph{from $\node_0$ to $\node_n$}.
Note that the string $\node_0$ of length 1 is also a path, from $\node_0$ to $\node_0$, that does not traverse any edge.
We denote {by} $\Paths{G}$ the set of paths in $G$.

{Note that $\Paths{G}$ is closed under {non-empty} prefixes, that is, if $\alpha a $ is a path and $\alpha$ is not empty, then $\alpha$ is a path too, and more generally, if $\alpha \beta \in \Paths{G}$ and $\alpha$ and $\beta$ are not empty, then $\alpha, \beta \in \Paths{G}$.
Therefore we can easily lift $\Paths{G}$ to a tree language, by adding to it the empty string.
From these observations immediately follows that{,  for each $\alpha \in \Paths{G}$,} the set $\{\beta \in \String{\Nodes} \mid \alpha \beta \in \Paths{G}\} \subseteq \Paths{G} \cup \{\EString\}$ is a tree language. }

{It is also important to note} that the sets $\Tree[\lambda]{\Labels}$ and $\CTree{\Labels}$  both carry a graph structure with the following adjacency function{:}
\[\dsub(t) = \{ t\Restrict{\alpha} \mid \alpha \in \dom(t),\, \Len{\alpha} = 1\}\]
which returns the direct subtrees of $t$.

{Thanks to this observation, we can now prove the following theorem, that will be essential to give our proof of equivalence between the proof-theoretic and the fixed point semantics for coaxioms (\refToTheorem{ker-trees}).
Intuitively, this result allows us to associate with any node in a graph, in a canonical way,   a tree rooted in it, preserving the graph structure. }

\begin{thm}\label{theo:graph-tree}
Let $G=\Pair{\Nodes}{\adj}$ be a graph, then there exists a {function} $\fun{\Path}{\Nodes}{\CTree{\Nodes}}$ such that the  following diagram commutes:
\begin{center}
\begin{tikzcd}[column sep=huge, row sep=huge]
\Nodes \ar[d, "\adj"] \ar[r, "\Path"] & \CTree{\Nodes} \ar[d, "\dsub"] \\ 
\wp(\Nodes) \ar[r, "\wp(\Path)"] & \wp(\CTree{\Nodes}) 
\end{tikzcd}
\end{center}
\end{thm}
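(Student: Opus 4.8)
The plan is to construct $\Path$ as the \emph{unfolding} of the graph $G$ from each node, realized concretely as a children injective tree whose nodes are the graph paths issuing from the root. Working in the canonical representation of children injective trees described above (taking $\Nodes$ itself as alphabet), I would define, for each $\node \in \Nodes$, the tree $\Path(\node)$ to have domain
\[
L_\node = \{\beta \in \String{\Nodes} \mid \node\beta \in \Paths{G}\}
\]
and labelling sending $\EString$ to $\node$ and each non-empty $\beta$ to its last symbol; equivalently, every node $\beta \in L_\node$ is labelled by the last node of the path $\node\beta$. Intuitively $\Path(\node)$ records all the ways of walking through $G$ starting at $\node$, with a fresh tree node for each such walk.

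First I would check that this is a legitimate element of $\CTree{\Nodes}$. Since the single node $\node$ is a (trivial) path, $L_\node$ is exactly a set of the form $\{\beta \mid \node\beta \in \Paths{G}\}$, which has already been observed to be a tree language; hence $\Path(\node)$ is a well-defined tree labelled in $\Nodes$, and it is moreover in canonical form. It is children injective because two children $\beta\anode, \beta\anode'$ of a node $\beta$ carry the labels $\anode$ and $\anode'$, which coincide only if $\anode = \anode'$; thus siblings always have distinct labels and $\Path(\node) \in \CTree{\Nodes}$.

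The core of the argument is the commutativity $\dsub \circ \Path = \wp(\Path) \circ \adj$. Unfolding the definition of $\dsub$, the direct subtrees of $\Path(\node)$ are the subtrees $\Path(\node)\Restrict{\anode}$ indexed by the length-one nodes $\anode \in L_\node$, and these are precisely the $\anode$ with $\node\anode \in \Paths{G}$, i.e.\ the elements of $\adj(\node)$. It therefore remains to identify each such subtree, and I claim $\Path(\node)\Restrict{\anode} = \Path(\anode)$ for every $\anode \in \adj(\node)$. Its domain is $\{\gamma \mid \node\anode\gamma \in \Paths{G}\}$, and here the key observation — that membership in $\Paths{G}$ constrains only consecutive pairs of symbols — gives, since $\anode \in \adj(\node)$, that $\node\anode\gamma \in \Paths{G}$ iff $\anode\gamma \in \Paths{G}$; thus the domain is exactly $L_\anode$. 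The labellings agree as well, because the label of $\gamma$ in $\Path(\node)\Restrict{\anode}$ is the last node of $\node\anode\gamma$, which is the last node of $\anode\gamma$, namely the label of $\gamma$ in $\Path(\anode)$. Hence $\dsub(\Path(\node)) = \{\Path(\anode) \mid \anode \in \adj(\node)\} = \wp(\Path)(\adj(\node))$, as required.

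I expect the only delicate point to be this last identification $\Path(\node)\Restrict{\anode} = \Path(\anode)$, which is exactly the self-similarity making the unfolding coalgebraic: it rests entirely on the ``local'' nature of the path condition, so that prepending an admissible edge $\node\anode$ leaves the set of admissible continuations unchanged. Everything else is a routine verification that $L_\node$ is a tree language and that distinct siblings receive distinct labels.
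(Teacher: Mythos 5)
Your proof is correct and follows essentially the same route as the paper's: both define $\Path(\node)$ as the path expansion with domain $L_\node = \{\beta \in \String{\Nodes} \mid \node\beta \in \Paths{G}\}$ in the canonical children injective representation, and both establish commutativity by identifying the direct subtrees of $\Path(\node)$ with $\Path(\anode)$ for $\anode \in \adj(\node)$. Your write-up is in fact slightly more explicit than the paper's on the key self-similarity step, spelling out why $\node\anode\gamma \in \Paths{G}$ iff $\anode\gamma \in \Paths{G}$ when $\anode \in \adj(\node)$.
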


\begin{proof}
The function $\Path$ computes for each node the \emph{path expansion} starting from this node, that is, it maps each node $\node$ to the set of all paths starting with $\node$.
More precisely, the set of paths we compute for each node $\node$ is the following:
\[L_\node = \{\alpha \in \String{\Nodes} \mid \node\alpha\in \Paths{G}\}\]
Hence, using the representation of children injective trees as pairs $\Pair{r}{L}$ where $r$ is a label and $L$ is a tree language, using nodes as alphabet, we have that
\[\Path(\node) = \Pair{\node}{L_\node}\]
Now we have to show that, for each node $\node$, $\wp(\Path)(\adj(\node)) = \dsub(\Path(\node))$, that is, $\Pair{\anode}{L} \in \dsub(\Path(\node))$ if and only if $\anode \in \adj(\node)$ and $\Path(\anode) = \Pair{\anode}{L_\anode} = \Pair{\anode}{L}$.
The implication $\Rightarrow$ holds by construction.
On the other hand,
if $\Pair{\anode}{L} \in \dsub(\Path(\node))$ then $L = \{\alpha \in \String{\Nodes} \mid \anode\alpha \in L_\node\}$, hence $L = L_\anode$,
that is, $\Pair{\anode}{L} = \Path(\anode)$.
Moreover, for all $\alpha \in L$,  $\anode \alpha \in L_\node$ implies that $\node\anode\alpha$ is a path in $G$, hence $\anode \in \adj(\node)$\EZ{,} and this shows the equality.
\end{proof}
In the end, note that, if $t_\node = \Path(\node)$ for each $\node \in \Nodes$, then $\Path$ is the unique map making the diagram commute and such that $t_\node(\EString) = \node$.


\subsection{Combining non-well-founded and well-founded proof trees}\label{sect:trees1}
Before providing the first proof-theoretic characterization, we give a more precise definition of proof tree,
{which is a generalization of the notion of rule graph proposed in~\cite{Brotherston05}. }

\begin{defi}\label{def:proof-tree}
{Let $\is$ be an inference system, a \emph{proof tree} in $\is$ is a children injective tree $\fun{t}{L}{\universe}$ such that,
 for all $\alpha \in L$, there is a rule $\frac{\prem}{\cons} \in \is$ such that $t(\alpha) = \cons$ and $t(\chl(\alpha)) = \prem$. }
\end{defi}
In other words, a proof tree $t$ is a tree labelled in $\universe$ where each node $\alpha \in \dom(t)$ is labelled by  the conclusion of a rule $\frac{\prem}{\cons} \in \is$ and children of $\alpha$ are {bijectively labelled by judgements in} $\prem$.
{Since a proof tree $t$ is children injective by definition, we can also represent it as $\Pair{t(\EString)}{f_t(\dom(t))}$. }

In the following, we will often represent proof trees using stacks of rules, that is, if $\frac{\prem}{\cons} \in \is$ and $\T$ is a set of proof trees in bijection with $\prem$ and  such that for all $t \in \T$, $t(\EString) \in \prem$, we denote by $\frac{\T}{\cons}$ the proof tree $t_\cons$ given by
\[
\dom(t_\cons) = \{\EString\} \cup \bigcup_{t\in \T} t(\EString) f_t(\dom(t))
\quad
\begin{cases}
t_\cons(\EString) &= \cons\\
t_\cons(\alpha \judg) &= \judg
\end{cases}
\]

We say that a tree $t$ is a proof tree for a judgement $\judg \in \universe$ if it is a proof tree rooted in $\judg$.
Finally, note that all subtrees of a proof tree $t$ are proof trees themselves for their roots.
With this terminology we can define our proof-theoretic semantics.

The first {proof-theoretic characterization of the semantics of inference systems with coaxioms} is based on the following theorem which slightly generalizes  the standard result about the correspondence between the fixed point and the proof-theoretic semantics of inference systems in the coinductive case (see~\cite{LeroyGrall09}).
We choose to do the proof from scratch, even if it can be done relying on the standard equivalence (see~\cite{AnconaDZ17esop}), since we have not found in literature a detailed enough proof for the standard equivalence, and, in addition, the proof helps us to understand what happens on the proof-theoretic side, in particular when we prove that a set is consistent.
We begin proving a lemma\EZ{.}

\begin{lem}\label{lem:coind-trees}
Let $\is$ be an inference system and $\Spec$ a consistent subset of $\universe$, then for each $\judg \in \Spec$ there is a proof tree $t$ for $\judg$ such that\EZ{,} for all $\alpha \in \dom(t)$, $t(\alpha) \in \Spec$
\end{lem}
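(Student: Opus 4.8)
The plan is to realise the required proof tree as the canonical graph unfolding produced by \refToTheorem{graph-tree}. First I would use the consistency hypothesis: since $\Spec \subseteq \Op{\is}(\Spec)$, for every $\judg \in \Spec$ there is at least one rule $\frac{\prem}{\judg} \in \is$ with $\prem \subseteq \Spec$. Using the axiom of choice (needed when $\Spec$ is infinite), I select one such rule for each $\judg$, writing its premise set as $\prem_\judg \subseteq \Spec$, and I turn $\Spec$ into a graph $G = \Pair{\Spec}{\adj}$ by setting $\adj(\judg) = \prem_\judg$. Note that $\prem_\judg = \emptyset$ is allowed, corresponding to $\judg$ being (the consequence of) an axiom.

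Next I would apply \refToTheorem{graph-tree} to $G$, obtaining $\fun{\Path}{\Spec}{\CTree{\Spec}}$ with $\dsub(\Path(\node)) = \{\Path(\anode) \mid \anode \in \adj(\node)\}$ and $\Path(\node)(\EString) = \node$ for all $\node \in \Spec$. For a fixed $\judg$ I claim that $t = \Path(\judg)$ is the desired tree. Two of the conditions are immediate: $t \in \CTree{\Spec}$ is children injective with all labels in $\Spec$, so $t(\alpha) \in \Spec$ for every $\alpha \in \dom(t)$, and $t(\EString) = \judg$ means $t$ is rooted in $\judg$.

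It remains to check that $t$ is a proof tree in $\is$, i.e. that each node is the consequence of some rule whose premises are exactly the labels of its children. The key auxiliary fact is that $t\Restrict{\alpha} = \Path(t(\alpha))$ for every $\alpha \in \dom(t)$, which I would establish by induction on $\Len{\alpha}$, peeling off one symbol at a time via $t\Restrict{\alpha a} = (t\Restrict{\alpha})\Restrict{a}$: the base case is $t\Restrict{\EString} = t = \Path(\judg)$, and the inductive step uses the commutation of the diagram, which identifies every direct subtree of $\Path(\node)$ as $\Path(\anode)$ for the appropriate $\anode \in \adj(\node)$. Granting this, at any node $\alpha$ the direct subtrees of $t\Restrict{\alpha} = \Path(t(\alpha))$ have roots exactly $\adj(t(\alpha)) = \prem_{t(\alpha)}$, so the labels of the children of $\alpha$ form the set $\prem_{t(\alpha)}$; children injectivity makes this labelling a bijection, and the chosen rule $\frac{\prem_{t(\alpha)}}{t(\alpha)} \in \is$ witnesses the proof-tree condition at $\alpha$.

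The construction is transparent once \refToTheorem{graph-tree} is invoked; the only real work, and hence the main obstacle, is the inductive identification $t\Restrict{\alpha} = \Path(t(\alpha))$, since this is what lets us read off the correct rule uniformly at every node, including those lying at infinite depth. In effect, \refToTheorem{graph-tree} packages the potentially non-well-founded corecursive unfolding of $G$, leaving us with only finitary, local checks.
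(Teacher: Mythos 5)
Your proposal is correct and takes essentially the same route as the paper: both use consistency (with choice) to select a rule $\frac{\prem_\judg}{\judg}$ for each $\judg \in \Spec$, turn $\Spec$ into a graph via $\adj(\judg) = \prem_\judg$, and invoke \refToTheorem{graph-tree} to obtain $\Path(\judg)$ as the desired proof tree. The only inessential difference is in the final verification: you derive the proof-tree condition at every node from the statement of \refToTheorem{graph-tree} alone, via the induction $t\Restrict{\alpha} = \Path(t(\alpha))$, whereas the paper reads the same fact off directly from the explicit path-expansion construction inside that theorem's proof.
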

\begin{proof}
By hypothesis $\Spec$ is consistent, so for each judgement $\judg \in \Spec$ we can choose a rule $\frac{\prem_\judg}{\judg} \in \is$ such that $\prem_\judg \subseteq \Spec$.
In other words, we can define the map $\fun{\adj}{\Spec}{\wp(\Spec)}$ given by $\adj(\judg) = \prem_\judg$, that turns $\Spec$ into a graph as in \refToDefinition{graph}.
By \refToTheorem{graph-tree}, there exists a map $\fun{\Path}{\Spec}{\CTree{\Spec}}$ making the following diagram commute.
\begin{center}
\begin{tikzcd}
\Spec\ar[d, "\adj"] \ar[r, "\Path"] & \CTree{\Spec} \ar[d, "\dsub"] \\ 
\wp(\Spec) \ar[r, "\wp(\Path)"] & \wp(\CTree{\Spec}) 
\end{tikzcd}
\end{center}
Therefore{,} for each $\judg \in \Spec$, $\Path(\judg)$ is a tree rooted in $\judg$ and labelled in $\Spec$, that is\EZ{,} for all $\alpha \in \dom(\Path(\judg))$, $\Path(\judg)(\alpha) \in \Spec \subseteq \universe$.
Set $t_\judg = \Path(\judg)$ and note that by construction\EZ{,} for all $\alpha \in \dom(t_\judg)$, we have $\chl(\alpha) = \alpha \adj(t_\judg(\alpha)) = \alpha \prem_{t_\judg(\alpha)}$ and
\[
    \Rule{\prem_{t_\judg(\alpha)}}{t_\judg(\alpha)} \in \is
\]
hence $t_\judg$ is a proof tree in $\is$ as needed.
\end{proof}

This lemma essentially ensures that all judgements in a consistent set $\Spec$ have an arbitrary proof tree whose nodes are all (labelled) in $\Spec$.
The next theorem is a slight generalization \EZ{of} the standard equivalence between proof-theoretic and fixed point semantics.

\begin{thm}\label{theo:ker-trees}
Let $\is$ be an inference system and $\bound \subseteq \universe$ a set of judgements.
Then for all $\judg \in \universe$ the following are equivalent{:}
\begin{enumerate}
\item $\judg \in \ker_\Op{\is}(\bound)$
\item there exists a proof tree $t$ for $\judg$ in $\is$ such that each node of $t$ is (labelled) in $\bound$
\end{enumerate}
\end{thm}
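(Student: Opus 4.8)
The plan is to prove the two implications separately, using the characterization of $\ker_\Op{\is}(\bound)$ as the greatest consistent subset of $\bound$. Recall that $\ker_\Op{\is}(\bound) = \lub\{S \in \Post{\Op{\is}} \mid S \subseteq \bound\}$; since $\Post{\Op{\is}}$ is a kernel system by \refToProposition{pre-post-ck}, this join is itself a post-fixed point of $\Op{\is}$, that is, a consistent set, and it is clearly contained in $\bound$. Hence $\ker_\Op{\is}(\bound)$ is the largest consistent set below $\bound$, a fact I will use in both directions. Note that no assumption on $\bound$ (such as being pre-fixed) is needed for this.

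For the implication $1 \Rightarrow 2$, I would simply observe that $\ker_\Op{\is}(\bound)$ is consistent and apply \refToLemma{coind-trees} to it: every $\judg \in \ker_\Op{\is}(\bound)$ then has a proof tree $t$ all of whose nodes are labelled in $\ker_\Op{\is}(\bound)$. Since $\ker_\Op{\is}(\bound) \subseteq \bound$, all nodes of $t$ are in particular in $\bound$, which is exactly condition 2. Essentially all the work of this direction, namely the canonical construction of a possibly non-well-founded proof tree out of a consistent set via \refToTheorem{graph-tree}, has already been carried out in the lemma.

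For the converse $2 \Rightarrow 1$, I would introduce the set $\Spec = \{\judg' \in \universe \mid \text{there is a proof tree for } \judg' \text{ with all nodes in } \bound\}$ and show that it is a consistent subset of $\bound$ containing $\judg$. Containment in $\bound$ is immediate, since the root of any witnessing proof tree is itself a node labelled by $\judg'$. For consistency, given $\judg' \in \Spec$ with witnessing tree $t$, the root is the consequence of some rule $\frac{\prem}{\judg'} \in \is$ with $\prem = t(\chl(\EString))$, and each direct subtree of $t$ is again a proof tree with all nodes in $\bound$ (subtrees of a proof tree are proof trees, and the labels of a subtree form a subset of those of $t$); hence $\prem \subseteq \Spec$, giving $\judg' \in \Op{\is}(\Spec)$. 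Since $\Spec$ is consistent and contained in $\bound$, maximality of $\ker_\Op{\is}(\bound)$ among such sets yields $\Spec \subseteq \ker_\Op{\is}(\bound)$, and from $\judg \in \Spec$ we conclude $\judg \in \ker_\Op{\is}(\bound)$.

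I do not expect a serious obstacle, because the genuinely delicate part, turning a consistent set into actual (infinite) proof trees in a canonical way, has been isolated in \refToLemma{coind-trees} and \refToTheorem{graph-tree}; the theorem then reduces to the two bookkeeping arguments above. The only point requiring a little care is the claim that the labels of a subtree are among those of the whole tree, which is where I rely on the earlier observation that every subtree of a proof tree is itself a proof tree for its root.
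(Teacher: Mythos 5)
Your proposal is correct and follows essentially the same route as the paper: the forward direction invokes \refToLemma{coind-trees} on the consistent set $\ker_{\Op{\is}}(\bound)$, and the converse defines the same set $\Spec$ of judgements with a witnessing proof tree, shows it is consistent and below $\bound$, and concludes by \CoIndPrinciple (your ``maximality'' of $\ker_{\Op{\is}}(\bound)$ among consistent subsets of $\bound$ is exactly that principle). The only cosmetic difference is that you unfold the justification of this maximality from \refToProposition{pre-post-ck}, which the paper leaves to the earlier discussion of the kernel operator.
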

\begin{proof}
We prove separately the two implications\EZ{.}
\begin{description}
\item[$1\Rightarrow 2$] By construction $\ker_\Op{\is}(\bound)$ is a consistent set and $\ker_\Op{\is}(\bound) \subseteq \bound$, hence by \refToLemma{coind-trees} each judgement $\judg \in \ker_\Op{\is}(\bound)$ has a proof tree in $\is$ whose nodes are all (labelled) in $\ker_\Op{\is}(\bound)$, hence they are (labelled) in $\bound$ as needed.
\item[$2\Rightarrow 1$] Let $\Spec \subseteq \universe$ be the set of all judgements having a proof tree in $\is$ whose nodes are all (labelled) in $\bound$.
As a consequence, we immediately have that $\Spec \subseteq \bound$, hence if we show that $\Spec$ is consistent, we get the thesis by \CoIndPrinciple.
Consider $\judg \in \Spec$, hence there is a proof tree $t_\judg$ for $\judg$ whose nodes are all (labelled) in $\bound$.
Note that  each $t \in \dsub(t_\judg)$ is a proof tree like $t$, hence $t(\EString) \in \Spec$ and, since $t_\judg$ is a proof tree,
\[
    \Rule{\{t(\EString) \mid t \in \dsub(t_\judg)\}}{\judg} \in \is
\]
and this shows that $\Spec$ is consistent as needed.
\qedhere
\end{description}
\end{proof}

\noindent
As a particular case we get our first proof-theoretic characterization of $\Generated{\is}{\coaxioms}$.

\begin{cor}\label{cor:proof-trees-1}
Let $\Pair{\is}{\coaxioms}$ be an inference system with coaxioms.
Then the following are equivalent\EZ{:}
\begin{enumerate}
\item $\judg \in \Generated{\is}{\coaxioms}$
\item there exists a proof tree $t$ for $\judg$ in $\is$ such that each node of $t$ has a well-founded proof tree in $\Extended{\is}{\coaxioms}$
\end{enumerate}
\end{cor}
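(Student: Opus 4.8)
The plan is to derive the statement as a direct specialization of \refToTheorem{ker-trees}, taking as bound the closure of the coaxioms. First I would recall that, by \refToDefinition{bfp} together with the correspondence established at the end of \refToSection{correspondence}, we have
\[
\Generated{\is}{\coaxioms} = \ker_\Op{\is}(\closure_\Op{\is}(\coaxioms)),
\]
so that setting $\bound = \closure_\Op{\is}(\coaxioms)$ immediately rewrites condition~1 as $\judg \in \ker_\Op{\is}(\bound)$. Applying \refToTheorem{ker-trees} with this particular $\bound$, we obtain that $\judg \in \ker_\Op{\is}(\bound)$ holds if and only if there is a proof tree $t$ for $\judg$ in $\is$ all of whose nodes are labelled in $\bound$. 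It then only remains to re-express the side condition ``labelled in $\bound$'' in proof-theoretic terms.

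For this, I would use point~3 of \refToTheorem{correspondence}, which gives $\bound = \closure_\Op{\is}(\coaxioms) = \Ind{\Extended{\is}{\coaxioms}}$, i.e.\ the bound is exactly the inductive interpretation of the inference system $\Extended{\is}{\coaxioms}$ obtained by turning the coaxioms into axioms. Hence a node of $t$ is labelled in $\bound$ precisely when its label belongs to $\Ind{\Extended{\is}{\coaxioms}}$. Substituting this identification into the characterization coming from \refToTheorem{ker-trees} turns condition~1 into the existence of a proof tree $t$ for $\judg$ in $\is$ each of whose nodes lies in $\Ind{\Extended{\is}{\coaxioms}}$.

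The single non-bookkeeping ingredient is then the proof-theoretic reading of the inductive interpretation: a judgement belongs to $\Ind{\Extended{\is}{\coaxioms}}$ exactly when it admits a \emph{well-founded} proof tree in $\Extended{\is}{\coaxioms}$. This is the well-founded counterpart of the equivalence developed from scratch for the coinductive case in \refToLemma{coind-trees} and \refToTheorem{ker-trees}, and coincides with the standard characterization of least-fixed-point semantics by finite derivations recalled in \refToSection{coaxioms} (see \cite{LeroyGrall09,Dagnino17}). Feeding this equivalence into the side condition converts ``each node of $t$ is labelled in $\bound$'' into ``each node of $t$ has a well-founded proof tree in $\Extended{\is}{\coaxioms}$'', which is precisely condition~2, closing the chain of equivalences. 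I do not expect a genuine obstacle here: the argument is essentially a specialization of \refToTheorem{ker-trees} composed with the inductive-interpretation characterization, so the only point requiring care is to invoke the well-founded version of that characterization, rather than the arbitrary-tree (coinductive) one already internal to \refToTheorem{ker-trees}.
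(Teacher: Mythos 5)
Your proposal is correct and follows essentially the same route as the paper's own proof: rewrite $\Generated{\is}{\coaxioms}$ as $\ker_\Op{\is}(\Ind{\Extended{\is}{\coaxioms}})$ via \refToTheorem{correspondence}, apply \refToTheorem{ker-trees} with that bound, and then translate ``node labelled in $\Ind{\Extended{\is}{\coaxioms}}$'' into ``node has a well-founded proof tree in $\Extended{\is}{\coaxioms}$'' by the standard inductive-case equivalence from~\cite{LeroyGrall09}. The only difference is presentational: you make explicit the intermediate identification $\closure_\Op{\is}(\coaxioms) = \Ind{\Extended{\is}{\coaxioms}}$, which the paper folds into a single line.
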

\begin{proof}
We have that $\Generated{\is}{\coaxioms} = \ker_\Op{\is}(\Ind{\Extended{\is}{\coaxioms}})$, hence by \refToTheorem{ker-trees} we get that $\judg \in \Generated{\is}{\coaxioms}$ iff there is a proof tree $t$ for $\judg$ in $\is$ whose nodes are all (labelled) in $\Ind{\Extended{\is}{\coaxioms}}$.
Therefore, all nodes of $t$ have a well-founded proof tree in $\Extended{\is}{\coaxioms}$ by   the standard equivalence for the inductive case (see e.g.,~\cite{LeroyGrall09}).
\end{proof}

\subsection{Approximated proof trees}\label{sect:approx-trees}
For the second proof-theoretic characterization, we need to define \emph{approximated proof trees} in an inference system with coaxioms.

\begin{defi}\label{def:approx-trees}
Let $\Pair{\is}{\coaxioms}$ be an inference system with coaxioms, the sets $\T_n$ of \emph{approximated proof trees of level $n$ in $\Pair{\is}{\coaxioms}$}, for $n\in \N$, are inductively defined as follows:
\begin{center}
\begin{tabular}{ll}
$t\in\T_0$ & if $t$ well-founded proof tree in $\Extended{\is}{\coaxioms}$\\[2ex]
$\Rule{\T}{{\cons}} \in \T_{n+1}$ & if $\Rule{\prem}{\cons}\in\is$ and $\T = {(t_\judg)}_{\judg \in \prem}$ and $\forall \judg \in \prem.\, t_\judg \in \T_n \text{ and }t_\judg(\EString) = \judg$
\end{tabular}
\end{center}
\end{defi}
{Note that an approximated proof tree is a proof tree, since the set $\T$ and the set $\prem$ in the inductive step are in bijection: elements $t_\judg$ in $\T$ are indexed by judgements in $\prem$, hence there is a surjective map from $\prem$ to $\T$; moreover, if $t_\judg = t_{\judg'}$, then $\judg = t_\judg(\EString) = t_{\judg'}(\EString) = {\judg'}$, hence this map is also injective. }

In other words,  an approximated proof tree of level $n$ in $\Pair{\is}{\coaxioms}$ is a well-founded proof tree in $\Extended{\is}{\coaxioms}$ where coaxioms can only be used at depth  $\geq n$.
Therefore, if $t \in \T_n$ is an approximated proof tree of level $n$, {then,} for all $\alpha \in \dom(t)$ with $\Len{\alpha} < n$, $t(\alpha)$ is the consequence of a rule in $\is$, more precisely
\[
    \Rule{ \{t(\beta) \mid \beta \in \chl(\alpha) \} }{t(\alpha)} \in \is
\]

Another simple property of approximated proof trees is state{d} in the following proposition{.}
\begin{prop}\label{prop:approx-subtree}
If $t \in \T_n$, $\alpha \in \dom(t)$ and $\Len{\alpha} = k \le n$, then $t\Restrict{\alpha} \in \T_{n-k}$.
\end{prop}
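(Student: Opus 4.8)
The plan is to argue by induction on the level $n$, following the inductive definition of $\T_n$ in \refToDefinition{approx-trees}. For the base case $n = 0$, the hypothesis $\Len{\alpha} = k \le 0$ forces $k = 0$, hence $\alpha = \EString$ and $t\Restrict{\EString} = t \in \T_0 = \T_{n-k}$, as required.

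For the inductive step, I would assume the statement for $m$ and take $t \in \T_{m+1}$ together with a node $\alpha$ such that $\Len{\alpha} = k \le m+1$. If $k = 0$ the argument is identical to the base case, so the interesting case is $k \ge 1$. By \refToDefinition{approx-trees}, $t$ then has the form $\Rule{\T}{\cons}$ with $\Rule{\prem}{\cons} \in \is$, $\T = {(t_\judg)}_{\judg \in \prem}$, and each $t_\judg \in \T_m$ with $t_\judg(\EString) = \judg$. Since $k \ge 1$, by the prefix property I may write $\alpha = a\alpha'$ with $\Len{a} = 1$ and $\Len{\alpha'} = k-1$; the length-one prefix $a$ is a child of the root, so the direct subtree $t\Restrict{a}$ is exactly one of the components of the family, say $t_\judg$. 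Using the identity $t\Restrict{\alpha} = (t\Restrict{a})\Restrict{\alpha'} = (t_\judg)\Restrict{\alpha'}$ and observing that $\Len{\alpha'} = k-1 \le m$ (from $k \le m+1$), the induction hypothesis applied to $t_\judg \in \T_m$ gives $(t_\judg)\Restrict{\alpha'} \in \T_{m-(k-1)}$. Since $m - (k-1) = (m+1) - k = n-k$, this is precisely the desired conclusion.

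I expect the only delicate point to be the identification of the direct subtree $t\Restrict{a}$ with the component $t_\judg$ of $\T$: this requires unfolding the stack notation $\Rule{\T}{\cons}$ and checking, against the children-injective representation, that the direct subtrees of $\Rule{\T}{\cons}$ are exactly the $t_\judg$. This bookkeeping relies on the immediate composition identity $(t\Restrict{a})\Restrict{\alpha'} = t\Restrict{a\alpha'}$, which unfolds directly from the definition of subtree given after \refToDefinition{tree}. Once this is settled, the remaining arithmetic on the indices is routine and the induction closes.
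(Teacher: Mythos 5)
Your proof is correct, but it organizes the induction differently from the paper's. The paper argues by induction on the depth $\Len{\alpha}$: it decomposes $\alpha = \beta a$ (peeling the \emph{last} symbol), applies the induction hypothesis to the prefix $\beta$ in the \emph{same} tree $t$, obtaining $t\Restrict{\beta} \in \T_{n-k}$, and only then steps down one level, observing that $t\Restrict{\alpha} = (t\Restrict{\beta})\Restrict{a} \in \dsub(t\Restrict{\beta})$ and invoking \refToDefinition{approx-trees} once at the end. You instead induct on the level $n$, decompose $\alpha = a\alpha'$ (peeling the \emph{first} symbol), step down immediately to the direct subtree $t\Restrict{a} = t_\judg \in \T_m$, and apply the induction hypothesis to that \emph{smaller} tree with the shorter string $\alpha'$. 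The two arguments are dual and rest on exactly the same two facts: the definitional unfolding of $\T_{j+1}$ (the direct subtrees of $\Rule{\T}{\cons}$ are, up to the children-injective representation, precisely the components $t_\judg$ of $\T$, hence lie in $\T_j$) and the composition identity $t\Restrict{a\alpha'} = (t\Restrict{a})\Restrict{\alpha'}$. Your version has the modest virtue that the induction follows the recursive structure of \refToDefinition{approx-trees} itself, and that it makes explicit the bookkeeping identifying $\dsub$ of the stack notation with the family $\T$ --- a point the paper uses silently when it passes from $t\Restrict{\alpha} \in \dsub(t\Restrict{\beta})$ to $t\Restrict{\alpha} \in \T_{n-k-1}$; the paper's version is marginally shorter since it never changes the ambient tree and needs no inner case split on $k$. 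Neither argument is more general than the other, and the index arithmetic closes identically in both.
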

\begin{proof}
We proceed by induction on $\Len{\alpha}$.
If $\Len{\alpha} = 0$, then $\alpha= \EString$, hence $t\Restrict{\EString} = t \in \T_n$.
Assume $\Len{\alpha} = k+1$, hence $\alpha = \beta a$, hence $\beta \in \dom(t)$ and $\Len{\beta}  = k$.
Therefore, by induction hypothesis $t\Restrict{\beta} \in \T_{n-k}$, hence  $t\Restrict{\alpha} = (t\Restrict{\beta})\Restrict{a} \in \dsub(t\Restrict{\beta})$, and this implies, by \refToDefinition{approx-trees}, that $t\Restrict{\alpha} \in \T_{n-k-1}$.
\end{proof}

The following theorem states that approximated proof trees  of level $n$ correspond to the $n$-th element of the descending chain $\Iterate{\Op{\is}}{\bound}=\{ \IterOp{\is}{n}(\bound) \mid n \in \N\}$, with  $\bound=\closure_\Op{\is}(\coaxioms)=\Ind{\Extended{\is}{\coaxioms}}$.

\begin{thm}\label{theo:approx-trees}
Let $\Pair{\is}{\coaxioms}$ be an inference system with coaxioms, and $\judg\in\universe$ a judgement.
We have that, for all $n \in \N${,} the following are equivalent{:}
\begin{enumerate}
\item $\judg\in \IterOp{\is}{n}(\closure_\Op{\is}(\coaxioms))$
\item $\judg$ has an approximated proof tree of level $n$ in $\Pair{\is}{\coaxioms}$
\end{enumerate}
\end{thm}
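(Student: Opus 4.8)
The plan is to argue by induction on $n$, exploiting the fact that the inductive clauses of \refToDefinition{approx-trees} mirror exactly the iterative unfolding of the inference operator $\Op{\is}$. Writing $\bound = \closure_{\Op{\is}}(\coaxioms)$, I would first record that, by point 3 of \refToTheorem{correspondence}, $\bound = \Ind{\Extended{\is}{\coaxioms}}$, so that the claim reduces to showing that $\judg \in \IterOp{\is}{n}(\bound)$ if and only if $\judg$ has an approximated proof tree of level $n$.

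For the base case $n = 0$, I would simply observe that $\IterOp{\is}{0}(\bound) = \bound = \Ind{\Extended{\is}{\coaxioms}}$, while by \refToDefinition{approx-trees} the approximated proof trees of level $0$ are precisely the well-founded proof trees in $\Extended{\is}{\coaxioms}$. Hence the two conditions coincide by the standard equivalence between the inductive interpretation and the existence of a well-founded proof tree (see e.g.~\cite{LeroyGrall09}), applied to the inference system $\Extended{\is}{\coaxioms}$.

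For the inductive step, assume the equivalence for $n$ and consider $\judg \in \IterOp{\is}{n+1}(\bound) = \Op{\is}(\IterOp{\is}{n}(\bound))$. By the definition of the inference operator, this holds exactly when there is a rule $\frac{\prem}{\judg} \in \is$ with $\prem \subseteq \IterOp{\is}{n}(\bound)$; by the induction hypothesis, the latter means that every $\judg' \in \prem$ has an approximated proof tree $t_{\judg'} \in \T_n$ rooted at $\judg'$, i.e. $t_{\judg'}(\EString) = \judg'$. This is precisely the condition in \refToDefinition{approx-trees} for the tree $\frac{\T}{\judg}$, with $\T = (t_{\judg'})_{\judg' \in \prem}$, to be an approximated proof tree of level $n+1$ for $\judg$. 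Reading this chain of equivalences forward gives the construction of a level-$(n+1)$ tree from such a family, and reading it backward shows that any level-$(n+1)$ tree for $\judg$ decomposes into a rule of $\is$ together with level-$n$ trees for its premises; both implications thus follow at once.

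Since each step is essentially a definitional unfolding, I do not anticipate a serious obstacle. The one point that deserves a word of care is that the family $\T$ and the premise set $\prem$ are in bijection, so that $\frac{\T}{\judg}$ is a genuine children-injective proof tree; this is exactly the observation recorded in the remark following \refToDefinition{approx-trees}, since the trees $t_{\judg'}$ are indexed by and rooted at the distinct judgements of $\prem$. The only genuinely external ingredient is the standard inductive equivalence invoked in the base case.
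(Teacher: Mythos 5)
Your proposal is correct and follows essentially the same route as the paper's own proof: induction on $n$, with the base case reduced via \refToTheorem{correspondence} to the standard inductive equivalence of~\cite{LeroyGrall09} for $\Extended{\is}{\coaxioms}$, and the inductive step handled by unfolding the definition of $\Op{\is}$ and of $\T_{n+1}$ in both directions (including the same care about the bijection between $\T$ and $\prem$). No gaps to report.
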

\begin{proof}
Let $\bound$ be $\closure_\Op{\is}(\coaxioms)$.
We prove the thesis by induction on $n$.
\begin{description}

\item[Base] If $n=0$, then, by \refToTheorem{correspondence}, $\bound = \closure_\Op{\is}(\coaxioms)$ corresponds to the inductive interpretation of $\Extended{\is}{\coaxioms}$, hence the the thesis reduces to the standard equivalence between proof-theoretic and fixed point semantics in the inductive case (see~\cite{LeroyGrall09}).

\item[Induction] We assume the equivalence for $n$ and prove it for $n+1$.
We prove separately the two implications.
\begin{description}

\item[$1\Rightarrow 2$] If $\cons \in \IterOp{\is}{n+1}(\bound)$, then there exists $\frac{\prem}{\cons} \in \is$ such that $\prem \subseteq \IterOp{\is}{n}(\bound)$.
Hence, by induction hypothesis, each judgement in $\prem$ has an approximated proof tree of level $n$, that is, for all $\judg \in \prem$ there is an approximated proof tree $t_\judg \in \T_n$ rooted in $\judg$.
Set $\T = \{t_\judg \in \T_n \mid \judg \in \prem\}$.
Hence, $t = \frac{\T}{\cons}$ is a proof tree for $\cons$, and by \refToDefinition{approx-trees}, $t\in \T_{n+1}$.

\item[$2\Rightarrow 1$] If $t \in \T_{n+1}$ is an approximated proof tree for $\cons \in \universe$, then, by definition, there exists $\frac{\prem}{\cons} \in \is$ such that $t = \frac{\T}{\cons}$, $\T = {(t_\judg)}_{\judg \in \prem}$ and for all $\judg \in \prem$, $t_\judg \in \T_n$ and $t_\judg(\EString) = \judg$.
By induction hypothesis we have $\prem\subseteq \IterOp{\is}{n} (\bound)$, and, by definition  of $\Op{\is}$, this implies $\cons \in \IterOp{\is}{n+1}(\bound)$ as needed.
\qedhere
\end{description}
\end{description}
\end{proof}

\noindent
The second proof-theoretic characterization of the interpretation generated by coaxioms  is an immediate consequence of the above theorem.

\begin{cor}\label{cor:tree2}
Let $\Pair{\is}{\coaxioms}$ be an inference system with coaxioms, and $\judg \in \universe$ a judgement.
Then the following are equivalent:
\begin{enumerate}
\item $\judg \in \Generated{\is}{\coaxioms}$
\item there exists a proof tree $t$ for $\judg$ in $\is$ such that each node of $t$ has an approximated proof tree of level $n$ in $\Pair{\is}{\coaxioms}$, for all $n \in \N$.
\end{enumerate}
\end{cor}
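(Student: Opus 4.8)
The plan is to chain together the two proof-theoretic characterizations already established, \refToTheorem{ker-trees} and \refToTheorem{approx-trees}, using \refToProposition{ker-iterate-bounds} to reconcile the two bounds that appear along the way. Throughout I set $\bound = \closure_\Op{\is}(\coaxioms)$, so that by \refToDefinition{bfp} we have $\Generated{\is}{\coaxioms} = \ker_\Op{\is}(\bound)$. I also record at the outset that $\bound$ is a pre-fixed point of $\Op{\is}$: every value of the closure operator lies in $\Pre{\Op{\is}} = \Fix{\closure_\Op{\is}}$, hence is pre-fixed. This is the hypothesis that \refToProposition{ker-iterate-bounds} will require.

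First I would rewrite condition 2 in fixed-point terms. By \refToTheorem{approx-trees}, a judgement has an approximated proof tree of level $n$ in $\Pair{\is}{\coaxioms}$ exactly when it belongs to $\IterOp{\is}{n}(\bound)$. Applying this nodewise, the clause ``each node of $t$ has an approximated proof tree of level $n$ for all $n \in \N$'' says precisely that every node of $t$ belongs to $\bigcap_{n \in \N} \IterOp{\is}{n}(\bound)$. Since the meet in the power-set lattice $\Pair{\wp(\universe)}{\subseteq}$ is intersection, and the chain $\Iterate{\Op{\is}}{\bound}$ is by definition indexed exactly by $\N$, this intersection is $\glb \Iterate{\Op{\is}}{\bound}$. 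Thus condition 2 is equivalent to: there exists a proof tree $t$ for $\judg$ in $\is$ whose nodes are all labelled in $\glb \Iterate{\Op{\is}}{\bound}$.

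Next I would apply \refToTheorem{ker-trees}, this time with $\glb \Iterate{\Op{\is}}{\bound}$ in the role of the bounding set, which turns the reformulated condition 2 into the single statement $\judg \in \ker_\Op{\is}(\glb \Iterate{\Op{\is}}{\bound})$. Finally, because $\bound$ is pre-fixed, point 2 of \refToProposition{ker-iterate-bounds} gives $\ker_\Op{\is}(\bound) = \ker_\Op{\is}(\glb \Iterate{\Op{\is}}{\bound})$, whose left-hand side is exactly $\Generated{\is}{\coaxioms}$. Stringing these equivalences together yields condition 1 iff condition 2.

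The argument is essentially an assembly of cited results, so I do not expect a genuine obstacle, but the one point that needs care is the \emph{choice} to instantiate \refToTheorem{ker-trees} with the intersection $\glb \Iterate{\Op{\is}}{\bound}$ rather than with $\bound$ itself; it is this choice that makes condition 2 fall out directly, while the equality of the two resulting kernels is precisely the content of \refToProposition{ker-iterate-bounds}. Since that proposition holds for any pre-fixed bound without continuity or cardinality assumptions, no extra hypotheses on $\Op{\is}$ are needed, and the whole proof reduces to the quantifier rearrangement of the second paragraph followed by these two invocations.
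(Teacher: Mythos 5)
Your proof is correct and follows essentially the same route as the paper's: both chain \refToTheorem{ker-trees} (instantiated with the intersection $\glb \Iterate{\Op{\is}}{\bound}$ as the bounding set), \refToTheorem{approx-trees} for the nodewise translation, and point 2 of \refToProposition{ker-iterate-bounds} to identify $\ker_\Op{\is}(\glb \Iterate{\Op{\is}}{\bound})$ with $\ker_\Op{\is}(\bound) = \Generated{\is}{\coaxioms}$. The only difference is presentational (you rewrite condition 2 forward rather than starting from condition 1), and your explicit remark that $\bound$ is pre-fixed, which the paper leaves implicit, is a welcome clarification.
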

\begin{proof}
By \refToTheorem{correspondence}, \refToProposition{ker-iterate-bounds}, and \refToTheorem{ker-trees}, we get that,  for all $\judg \in \universe$, $\judg \in \Generated{\is}{\coaxioms}$ iff there exists a proof tree $t$ for $\judg$ in $\is$ such that each node $\judg'$ of $t$ is in $\bigcap \Iterate{\Op{\is}}{\bound}$ with $\bound = \closure_\Op{\is}(\coaxioms)$.
By \refToTheorem{approx-trees}, $\judg' \in \bigcap \Iterate{\Op{\is}}{\bound}$ iff has an approximated proof tree of level $n$, for all $n \in \N$.
\end{proof}

If the hypotheses of \refToCorollary{bfp-chain}  are satisfied, {namely, if the inference operator is downward continuous,} then we get a simpler equivalent proof-theoretic characterization.

\begin{cor}\label{cor:tree3}
Let $\Pair{\is}{\coaxioms}$ be an inference system with coaxioms, and $j \in \universe$ a judgement.
{I}f $\Op{\is}$ is downward continuous, then the following are equivalent:
\begin{enumerate}
\item $j \in \Generated{\is}{\coaxioms}$
\item $\judg$ has an approximated proof tree of level $n$ in $\Pair{\is}{\coaxioms}$, for all $n \in \N$.
\end{enumerate}
\end{cor}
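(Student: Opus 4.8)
The plan is to obtain the statement by chaining together two results already available: the iterative (Kleene-style) characterization of the bounded fixed point from \refToCorollary{bfp-chain}, and the level-by-level correspondence between the iterates of the inference operator and approximated proof trees from \refToTheorem{approx-trees}. First I would set $\bound = \closure_\Op{\is}(\coaxioms)$ and recall, via the correspondence established after \refToTheorem{correspondence} together with \refToDefinition{bfp}, that $\Generated{\is}{\coaxioms} = \Generated{\Op{\is}}{\coaxioms} = \ker_\Op{\is}(\bound)$. Since we are assuming $\Op{\is}$ downward continuous, \refToCorollary{bfp-chain} applies directly with $\function = \Op{\is}$, yielding $\Generated{\is}{\coaxioms} = \glb \Iterate{\Op{\is}}{\bound}$. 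In the concrete lattice $\Pair{\wp(\universe)}{\subseteq}$ the meet is just intersection, so this equals $\bigcap_{n\in\N}\IterOp{\is}{n}(\bound)$; hence $\judg \in \Generated{\is}{\coaxioms}$ if and only if $\judg \in \IterOp{\is}{n}(\bound)$ for every $n\in\N$.

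The second step is to translate each membership $\judg \in \IterOp{\is}{n}(\bound)$ into proof-theoretic terms. By \refToTheorem{approx-trees}, applied pointwise for each $n$, we have $\judg \in \IterOp{\is}{n}(\closure_\Op{\is}(\coaxioms))$ if and only if $\judg$ has an approximated proof tree of level $n$ in $\Pair{\is}{\coaxioms}$. Quantifying this equivalence over all $n\in\N$ and combining with the identification of the previous paragraph gives exactly the desired biconditional: $\judg \in \Generated{\is}{\coaxioms}$ iff $\judg$ has an approximated proof tree of level $n$ for all $n\in\N$. Thus the whole argument is essentially a two-line composition, and the only routine checks are the identification of the lattice meet with set intersection and confirming that the side hypothesis of \refToCorollary{bfp-chain} (downward continuity) is precisely the one assumed here.

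The conceptual point worth emphasizing, rather than a genuine technical obstacle, is \emph{why} this characterization is strictly simpler than the general one in \refToCorollary{cor2-placeholder} (\refToCorollary{tree2}). In the general case one needs an outer proof tree $t$ for $\judg$ whose every node carries approximated proof trees of all levels: that outer tree is what witnesses consistency, i.e. membership in the \emph{greatest} fixed point $\ker_\Op{\is}(\bound)$, which may sit strictly below the chain intersection $\bigcap_{n}\IterOp{\is}{n}(\bound)$. Downward continuity of $\Op{\is}$ is exactly the condition that forces these two sets to coincide (this is the content of \refToCorollary{bfp-chain}), so the two-layer requirement of \refToCorollary{tree2} collapses to the single requirement stated here. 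I expect no real difficulty in the derivation itself; the main thing to get right is invoking the prior results in the correct order and making explicit that continuity is what removes the need for the outer witnessing tree.
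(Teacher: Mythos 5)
Your proposal is correct and follows essentially the same route as the paper's own proof: setting $\bound = \closure_{\Op{\is}}(\coaxioms)$, using \refToTheorem{correspondence} together with \refToCorollary{bfp-chain} (which is where downward continuity enters) to identify $\Generated{\is}{\coaxioms}$ with $\bigcap \Iterate{\Op{\is}}{\bound}$, and then applying \refToTheorem{approx-trees} level by level. The only blemish is the phantom reference in your closing paragraph (the placeholder for \refToCorollary{tree2}), which is in the commentary rather than the argument itself.
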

\begin{proof}
Let $\bound$ be the set $\closure_\Op{\is}(\coaxioms)$.
By \refToTheorem{correspondence} and \refToCorollary{bfp-chain}, we get that $\Generated{\is}{\coaxioms} = \bigcap \Iterate{\Op{\is}}{\bound}$, therefore the thesis follows immediately from \refToTheorem{approx-trees}.
\end{proof}

\EZComm{caratterizzazione nuova, capire come si possa usare come tecnica e se migliora le cose}
\FDComm{è il caso di fare un sottosezione qui?}
In order to define the last proof{-}theoretic characterization (\refToTheorem{approx-sequence}), we need to introduce  a richer structure on trees.
In particular, we will {consider the partial order on trees defined by Courcelle in~\cite{Courcelle83}, adapted to our context. }

Consider trees $t, t' \in \Tree{\Labels}$ we define
\[
t \TOrder t' \Longleftrightarrow \dom(t) \subseteq \dom(t') \mbox{ and } \forall \alpha \in \dom(t).\ t(\alpha) = t'(\alpha)
\]
It is easy to see that $\TOrder$ is a partial order, {actually it is function inclusion. }
Indeed, reflexivity and transitivity follow from the same properties of $\subseteq$ and $=${, and a}ntisymmetry can be proved noting that if $t\TOrder t'$ and $t' \TOrder t$ we have that $\dom(t) = \dom(t')$ (by antisymmetry of $\subseteq$) and $t(\alpha) = t'(\alpha)$ for all $\alpha \in \dom(t)$, hence $t=t'$.

{Intuitively, $t \TOrder t'$ means that $t$ can be obtained from $t'$ by pruning some branches.
Alternatively, considering trees as graphs, $t\TOrder t'$ means that $t$ is a subgraph of $t'$.
In any case, $\TOrder$ {expresses} a very strong relation among trees, actually too strong for our aims, hence we need to relax it a little bit. }

{We relax {the order relation} by considering what we call its \emph{$n$-th approximation},} defined below.
Given a tree $t$, we denote by $\dom_n(t)$ the set $\{\alpha \in \dom(t) \mid \Len{\alpha} \le n\}$.
The $n$-th approximation of $\TOrder$, denoted by $\ApproxTOrder{n}$, is defined as follows:
\[
t \ApproxTOrder{n} t' \Longleftrightarrow
\dom_n(t) \subseteq \dom_n(t') \mbox{ and }
\forall \alpha \in \dom_n(t).\ t(\alpha) = t'(\alpha)
\]
Intuitively $\ApproxTOrder{n}$ is identical to $\TOrder$, but limited to nodes at level $\le n$.
We call it the $n$-th approximation of $\TOrder$ since $\ApproxTOrder{n}$ is coarser than $\TOrder$, namely, if $t\TOrder t'$ then $t \ApproxTOrder{n} t'$ for all $n \in \N$.
Actually we can say even more: $t \TOrder t'$ if and only if $t \ApproxTOrder{n} t'$ for all $n \in \N$.
Moreover, if $t \ApproxTOrder{n} t'$ then for all $k \le n$ we have $t \ApproxTOrder{k} t'$, that is, $\ApproxTOrder{n}$ is a finer approximation than $\ApproxTOrder{k}$.
{Finally, note} that $\ApproxTOrder{n}$ is reflexive and transitive, but it fails to be antisymmetric, because we compare only nodes until level $n$, hence we cannot conclude an equality between the whole trees.

We now state a result that is crucial for our proof{-}theoretic characterization (\refToTheorem{approx-sequence}).
{Indeed, the following theorem shows that a collection of trees, behaving like a sequence of more and more precise approximations, uniquely determines a tree, which can be regarded as the limit of such sequence. }

\begin{thm}\label{theo:limit-tree-sequence}
Let ${(t_n)}_{n \in \N}$ be a sequence of trees,
such that{,} for all $n \in \N$, $t_n \ApproxTOrder{n} t_{n+1}$.
Then, there exists a tree $t$ such that $\forall n \in \N.\ t_n \ApproxTOrder{n} t$, and{,} for any other tree $t'$ such that $\forall n \in \N.\ t_n \ApproxTOrder{n} t'${,} we have $t \TOrder t'$.
\end{thm}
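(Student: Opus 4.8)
The plan is to build $t$ explicitly as the \emph{limit} of the sequence, by gluing together the stable parts of the trees $t_n$ (regarded, as the relations $\TOrder$ and $\ApproxTOrder{n}$ require, as functions on tree languages over a common alphabet), and then to read off both required properties directly from the construction. The whole argument rests on a single compatibility observation: although the hypothesis only relates consecutive trees, it in fact forces $t_n \ApproxTOrder{n} t_m$ for \emph{every} $m \ge n$. I would establish this first, by induction on $m$: the base case $m = n$ is reflexivity of $\ApproxTOrder{n}$, and for the step I would combine the hypothesis $t_m \ApproxTOrder{m} t_{m+1}$ with the fact (noted just before the theorem) that $\ApproxTOrder{m}$ refines to $\ApproxTOrder{n}$ whenever $n \le m$, obtaining $t_m \ApproxTOrder{n} t_{m+1}$; transitivity of $\ApproxTOrder{n}$ then yields $t_n \ApproxTOrder{n} t_{m+1}$. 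Concretely this says that once a node $\alpha$ of length $\le n$ occurs in $t_n$, it occurs with the same label in all later $t_m$.

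With this in hand I would set $L = \bigcup_{n \in \N} \dom_n(t_n)$ and define $t(\alpha) = t_n(\alpha)$ for any $n$ with $\alpha \in \dom_n(t_n)$. Well-definedness of the label is exactly the compatibility lemma: if $\alpha \in \dom_n(t_n)$ and $\alpha \in \dom_{n'}(t_{n'})$ with $n \le n'$, then $t_n \ApproxTOrder{n} t_{n'}$ gives $t_n(\alpha) = t_{n'}(\alpha)$. I would then check that $L$ is a tree language: it is non-empty because $\EString \in \dom_0(t_0)$, and it is prefix-closed because if $\alpha a \in \dom_n(t_n)$ then $\alpha \in \dom(t_n)$ (as $\dom(t_n)$ is a tree language) with $\Len{\alpha} < n$, so $\alpha \in \dom_n(t_n) \subseteq L$. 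Hence $t \colon L \to \Labels$ is a genuine tree.

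It then remains to verify the two clauses of the statement. For $t_n \ApproxTOrder{n} t$: every $\alpha \in \dom_n(t_n)$ has length $\le n$ and lies in $L$, so $\alpha \in \dom_n(t)$, and $t(\alpha) = t_n(\alpha)$ by construction; thus $\dom_n(t_n) \subseteq \dom_n(t)$ with matching labels. For the universal property, let $t'$ satisfy $t_n \ApproxTOrder{n} t'$ for all $n$; I would show $t \TOrder t'$ by taking any $\alpha \in \dom(t) = L$, choosing $n$ with $\alpha \in \dom_n(t_n)$, and using $t_n \ApproxTOrder{n} t'$ to get $\alpha \in \dom_n(t') \subseteq \dom(t')$ and $t(\alpha) = t_n(\alpha) = t'(\alpha)$. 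This gives $\dom(t) \subseteq \dom(t')$ with agreeing labels, i.e.\ $t \TOrder t'$.

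The only real obstacle is the coherence of the construction, which is precisely why the compatibility lemma must be proved up front: the hypothesis is phrased for consecutive indices, and without propagating it to all larger indices one cannot even define the label $t(\alpha)$ unambiguously, let alone match it against an arbitrary competitor $t'$. Once that propagation is secured, every remaining step is a routine unfolding of the definitions of $\dom_n$, $\ApproxTOrder{n}$ and $\TOrder$. Finally, since $\TOrder$ is antisymmetric, the universal property pins $t$ down uniquely, so $t$ is genuinely the limit of the sequence.
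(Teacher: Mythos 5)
Your proof is correct and follows essentially the same route as the paper's: both build $t$ on the domain $L = \bigcup_{n\in\N}\dom_n(t_n)$ with labels inherited from the $t_n$, check that $L$ is a tree language, and then verify the two clauses by unfolding the definitions of $\ApproxTOrder{n}$ and $\TOrder$. The only difference is organizational: you prove the propagation $t_n \ApproxTOrder{n} t_m$ for all $m \ge n$ as an explicit up-front lemma to make the labelling well-defined, whereas the paper labels each $\alpha$ using the minimal index $k(\alpha)$ with $\alpha \in \dom_{k(\alpha)}(t_{k(\alpha)})$ and invokes that same propagation implicitly when verifying $t_n \ApproxTOrder{n} t$.
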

\begin{proof}
We define the function $\fun{t}{L}{\Labels}$ where $L = \bigcup_{n \in \N} \dom_n(t_n)$  and for all $\alpha \in L$, $t(\alpha) = t_{k(\alpha)}(\alpha)$, where $ k(\alpha) = \min D_{\alpha}$ with $D_{\alpha} = \{ n \in \N \mid \alpha \in \dom_n(t_n) \}$.
Note that $k(\alpha)$ is well-defined, because $D_{\alpha} \ne \emptyset$, since $\alpha \in L$ and, by construction of $L$, there is at least an index $n \in \N$ such that $\alpha \in \dom_n(t_n)$.
Moreover, $L$ is a tree language, since if $\alpha a \in L$, then $\alpha a \in \dom_n(t_n)$ for some $n \in \N$, that is a tree language, hence $\alpha \in \dom_n(t_n) \subseteq L$.
Therefore, $t$ is a tree.\\
Fix now $n \in \N$, we have to show that $t_n \ApproxTOrder{n} t$.
By construction we have $\dom_n(t_n) \subseteq \dom_n(t)$, and if $\alpha \in \dom_n(t_n)$, then by construction $k(\alpha) \le n$.
Therefore, we have that $t_{k(\alpha)} \ApproxTOrder{k(\alpha)} t_n$, hence $t_{k(\alpha)}(\alpha) = t_n(\alpha)$, thus $t(\alpha) = t_n(\alpha)$ and this implies $t_n \ApproxTOrder{n} t$.

Consider now a tree $t'$ such that $\forall n \in \N.\ t_n \ApproxTOrder{n} t'$.
Therefore we have that for all $n \in \N$, $\dom_n(t_n) \subseteq \dom_n(t') \subseteq \dom(t')$, hence $\dom(t) \subseteq \dom(t')$.
Then, if $\alpha \in \dom(t)$, there is $n \in \N$ such that $\alpha \in \dom_n(t_n)$ and $t(\alpha) = t_n(\alpha)$.
Since $t_n \ApproxTOrder{n} t'$, we have that $t_n(\alpha) = t'(\alpha)$, hence $t(\alpha) = t'(\alpha)$ and this implies $t \TOrder t'$.
\end{proof}

{It is easy to see that a tree $t$ having the property expressed in  the above theorem is unique.
Indeed, if $t$ and $t'$ have that property for a sequence ${(t_n)}_{n \in \N}$, then  we have  both $t \TOrder t'$ and $t' \TOrder t$, hence $t = t'$.
Therefore we denote such a tree by $\TLub_{n \in \N} t_n$. }

The above theorem ensures the existence of a sort of least upper bound of an ascending chain of trees: {$\TLub_{n \in \N} t_n$  behaves like a least upper bound, but for approximations of a partial order.
However, since $\ApproxTOrder{n}$ is an approximation of $\TOrder$, it can be shown that if ${(t_n)}_{n \in \N}$ is a chain with respect to $\TOrder$, then $\TLub_{n \in \N} t_n$ is indeed the least upper bound of the chain, as {stated} in the following corollary. }

\begin{cor}
Let ${(t_n)}_{n \in \N}$ be a sequence of trees,
such that for all $n \in \N$, $t_n \TOrder t_{n+1}$.
Then, $\TLub_{n \in \N} t_n$ is the least upper bound of the sequence ${(t_n)}_{n \in \N}$ {with respect to $\TOrder$.  }
\end{cor}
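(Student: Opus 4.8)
The plan is to reduce everything to \refToTheorem{limit-tree-sequence} together with the two elementary facts about approximations established just before it: that $t \TOrder t'$ holds exactly when $t \ApproxTOrder{n} t'$ for every $n \in \N$, and that $t \ApproxTOrder{n} t'$ implies $t \ApproxTOrder{k} t'$ for every $k \le n$. First I would observe that the hypothesis of \refToTheorem{limit-tree-sequence} is met: since $t_n \TOrder t_{n+1}$ we immediately get $t_n \ApproxTOrder{n} t_{n+1}$, so the limit tree $t = \TLub_{n \in \N} t_n$ is well-defined and satisfies $t_n \ApproxTOrder{n} t$ for all $n$. It then remains to check the two defining properties of a least upper bound with respect to $\TOrder$: that $t$ is an upper bound, i.e.\ $t_m \TOrder t$ for every $m$, and that it lies below any other upper bound.

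For the upper bound property I would fix $m$ and establish $t_m \TOrder t$ through the characterization above, i.e.\ by proving $t_m \ApproxTOrder{n} t$ for an arbitrary $n$. The idea is to climb high enough in the chain: set $N = \max(m, n)$. On one hand, transitivity of $\TOrder$ along the chain gives $t_m \TOrder t_N$, whence $t_m \ApproxTOrder{n} t_N$. On the other hand, \refToTheorem{limit-tree-sequence} gives $t_N \ApproxTOrder{N} t$, and since $n \le N$ the coarsening property yields $t_N \ApproxTOrder{n} t$. Transitivity of $\ApproxTOrder{n}$ then closes the gap, so $t_m \ApproxTOrder{n} t$. As $n$ was arbitrary, $t_m \TOrder t$.

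The minimality is immediate: given any upper bound $u$, from $t_n \TOrder u$ we obtain $t_n \ApproxTOrder{n} u$ for every $n$, and the universal (least) clause of \refToTheorem{limit-tree-sequence} directly yields $t \TOrder u$. The only genuinely delicate point is the upper bound step, where the theorem hands us merely the approximate relations $t_n \ApproxTOrder{n} t$, whereas we need the full inclusion $t_m \TOrder t$; bridging this gap is precisely what the choice $N = \max(m,n)$ and the interplay between the chain condition and the coarsening of approximations accomplish.
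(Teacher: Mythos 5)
Your proof is correct and follows essentially the same route as the paper's: both reduce to \refToTheorem{limit-tree-sequence}, establish the upper-bound property $t_m \TOrder \TLub_{n\in\N} t_n$ by climbing the chain to a sufficiently high index, and obtain minimality directly from the theorem's universal clause. The only difference is presentational: where you argue abstractly via transitivity and coarsening of the relations $\ApproxTOrder{n}$ together with the characterization $t \TOrder t' \Leftrightarrow \forall n.\ t \ApproxTOrder{n} t'$, the paper unfolds the definitions and checks domains and labels element-wise, with a case analysis on whether the depth $k$ of a node of $t_n$ satisfies $k \le n$ or $k > n$ (climbing to $t_k$ in the latter case, exactly your $\max$ trick).
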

\begin{proof}
{Since $\ApproxTOrder{n}$ is an approximation of $\TOrder$ we have that $t_n \ApproxTOrder{n} t_{n+1}$ for all $n \in \N$. Setting $t = \TLub_{n \in \N} t_n$, by \refToTheorem{limit-tree-sequence}, we get $t_n \ApproxTOrder{n} t$ for all $n \in \N$.
We have to show that $t$ is an upper bound of ${(t_n)}_{n \in \N}$, hence consider $\alpha \in \dom(t_n)$ and suppose $\Len{\alpha} = k$.
We have two cases{:}
\begin{itemize}
\item if $k \le n$, then $\alpha \in \dom_n(t_n)$, hence $\alpha \in \dom_n(t) \subseteq \dom(t)$ and $t_n(\alpha) = t(\alpha)$
\item if $k > n$, then, since $t_n \TOrder t_k$, $\alpha \in \dom_k(t_k) \subseteq \dom_k(t) \subseteq \dom(t)$, and $t_n(\alpha) = t_k(\alpha) = t(\alpha)$
\end{itemize}
Therefore we get $t_n \TOrder t$. }

To show that $t$ is the least upper bound, consider  an upper bound $t'$, hence $t_n \TOrder t'$ for all $n  \in \N$, and this implies that $t_n \ApproxTOrder{n} t'$ for all $n \in \N$.
Therefore, by \refToTheorem{limit-tree-sequence}, we get $t \TOrder t'$.\qedhere
\end{proof}

We now consider the equivalence relations induced by each $\ApproxTOrder{n}$, defined as follows:
\[
t \ApproxEq{n} t' \Longleftrightarrow t \ApproxTOrder{n} t' \mbox{ and } t' \ApproxTOrder{n} t
\]
{or, more explicitly:}
\[
t \ApproxEq{n} t' \Longleftrightarrow \dom_n(t) = \dom_n(t') \mbox{ and } \forall \alpha \in \dom_n(t).\ t(\alpha) = t'(\alpha)
\]
These equivalence relations are an approximation of the equality relation, indeed $t = t'$ if and only if $t \ApproxEq{n} t'$ for all $n \in \N$.

{The relations $\ApproxTOrder{n}$ and $\ApproxEq{n}$ look very similar: they are both an approximation of another relation, they are both reflexive and transitive and they both do not care about levels higher than $n$.
However, the fact that $\ApproxEq{n}$ is an equivalence relation {makes} it different.
Indeed, if $t \ApproxEq{n} t'$, then the first $n$ levels of $t$ and $t'$ are forced to be equal, while{,} if $t \ApproxTOrder{n} t'$, then the first $n$ levels of $t'$ must contain also those of $t$, but can have also  additional branches.
In other words, with $\ApproxEq{n}$ we can change only the depth of the trees, while with $\ApproxTOrder{n}$ we can change both the depth and the breadth. }

As a consequence of \refToTheorem{limit-tree-sequence}, we get the following theorem.

\begin{thm}\label{theo:eq-tree-sequence}
Let ${(t_n)}_{n \in \N}$ be a sequence of trees,
such that, for all $n \in \N$, $t_n \ApproxEq{n} t_{n+1}$.
Then, there exists a unique tree $t$ such that $\forall n \in \N.\ t_n \ApproxEq{n} t$.
\end{thm}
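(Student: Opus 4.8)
The plan is to reduce this statement to \refToTheorem{limit-tree-sequence}, which we have just proved for the approximated order $\ApproxTOrder{n}$, and then to upgrade the resulting relation from $\ApproxTOrder{n}$ to $\ApproxEq{n}$. First I would observe that, since $\ApproxEq{n}$ is by definition the conjunction of $\ApproxTOrder{n}$ and its converse, the hypothesis $t_n \ApproxEq{n} t_{n+1}$ gives in particular $t_n \ApproxTOrder{n} t_{n+1}$ for all $n \in \N$. Hence \refToTheorem{limit-tree-sequence} applies and yields the tree $t = \TLub_{n \in \N} t_n$ together with the relations $t_n \ApproxTOrder{n} t$ for every $n$; in particular $\dom_n(t_n) \subseteq \dom_n(t)$ and the labels of $t_n$ and $t$ agree on $\dom_n(t_n)$.

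It then remains to prove the reverse inclusion $\dom_n(t) \subseteq \dom_n(t_n)$, which together with the label agreement already established upgrades $t_n \ApproxTOrder{n} t$ to $t_n \ApproxEq{n} t$. The key auxiliary fact I would establish first is that $t_n \ApproxEq{n} t_m$ for all $m \ge n$. This follows by induction on $m$ using transitivity of $\ApproxEq{n}$ (it is an equivalence relation, being the symmetrisation of the preorder $\ApproxTOrder{n}$) together with the monotonicity property that $t_m \ApproxEq{m} t_{m+1}$ implies $t_m \ApproxEq{n} t_{m+1}$ whenever $n \le m$. Consequently $\dom_n(t_m) = \dom_n(t_n)$ for all $m \ge n$.

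Now take $\alpha \in \dom_n(t)$, so that $\Len{\alpha} \le n$ and, by the construction of $t = \TLub_{n \in \N} t_n$, $\alpha \in \dom_m(t_m)$ for some $m$. If $m \ge n$ then $\alpha \in \dom_n(t_m) = \dom_n(t_n)$; if $m < n$ then $\Len{\alpha} \le m$, so $\alpha \in \dom_m(t_m) = \dom_m(t_n) \subseteq \dom_n(t_n)$, using $t_m \ApproxEq{m} t_n$ (which holds since $n \ge m$). In either case $\alpha \in \dom_n(t_n)$, giving the desired inclusion and hence $t_n \ApproxEq{n} t$ for all $n$. Uniqueness is then immediate: if $t'$ also satisfies $t_n \ApproxEq{n} t'$ for all $n$, then by symmetry and transitivity $t \ApproxEq{n} t'$ for every $n$, and since $t = t'$ if and only if $t \ApproxEq{n} t'$ for all $n$ (as already remarked after the definition of $\ApproxEq{n}$), we conclude $t = t'$.

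The main obstacle I anticipate is precisely the reverse inclusion $\dom_n(t) \subseteq \dom_n(t_n)$: the order-based limit of \refToTheorem{limit-tree-sequence} a priori only guarantees that $t$ \emph{contains} the first $n$ levels of each $t_n$, and it could in principle carry extra branches at low levels. Ruling this out is exactly where the strength of the equivalence $\ApproxEq{n}$ over the mere order $\ApproxTOrder{n}$ is used, through the stabilisation $\dom_n(t_m) = \dom_n(t_n)$ for $m \ge n$; everything else is a routine transfer from the already established limit theorem.
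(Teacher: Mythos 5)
Your proof is correct and follows essentially the same route as the paper's: apply \refToTheorem{limit-tree-sequence} to get $t = \TLub_{n \in \N} t_n$ with $t_n \ApproxTOrder{n} t$, then establish the reverse inclusion $\dom_n(t) \subseteq \dom_n(t_n)$ by a case analysis on the index at which a node first appears, and conclude uniqueness by transitivity. The only difference is cosmetic: you make explicit (via induction) the auxiliary fact $t_n \ApproxEq{n} t_m$ for $m \ge n$, which the paper invokes implicitly as ``by hypothesis'' in its case analysis.
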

\begin{proof}
By definition of $\ApproxEq{n}$ we have that $t_n \ApproxTOrder{n} t_{n+1}$ and $t_{n+1} \ApproxTOrder{n} t_n$ for all $n \in \N$.
Therefore, by \refToTheorem{limit-tree-sequence}, we get that there is a tree $t = \TLub_{n \in \N} t_n$ such that, for all $n \in \N$, $t_n \ApproxTOrder{n} t$, hence we have only to prove that $t \ApproxTOrder{n} t_n$ {and that $t$ is unique. }
Since we know that $\dom_n(t_n) \subseteq \dom_n(t)$ and for all $\alpha \in \dom_n(t_n)$, $t_n(\alpha) = t(\alpha)$, it is enough to show that $\dom_n(t) \subseteq \dom_n(t_n)$.
Thus, consider $\alpha \in \dom_n(t)$.
By construction of $t$ (see the proof of \refToTheorem{limit-tree-sequence}), there is an  index $k\in \N$ such that $\alpha \in \dom_{k}(t_k)$.
We have two cases:
\begin{itemize}
\item If $k\le n$, then by hypothesis $\dom_k(t_k) \subseteq\dom_k(t_n) \subseteq \dom_n(t_n)$, hence $\alpha \in \dom_n(t_n)$.
\item Otherwise, {that is,} if $n<k$, since $\alpha\in \dom_n(t)$, we have $\Len{\alpha} \le n<k$, hence $\alpha \in \dom_n(t_k) \subseteq \dom_n(t_n)$, because, by hypothesis $t_n \ApproxEq{n} t_k$.
\end{itemize}
Therefore we have $\dom_n(t) \subseteq \dom_n(t_n)$ as needed.

{To prove that $t$ is unique,} consider a tree $t'$ such that $t_n \ApproxEq{n} t'$ for all $n\in \N$, by transitivity we get $t \ApproxEq{n} t'$ for all $n \in \N$, and this implies $t=t'$.
\end{proof}

{It is well known that trees carry a complete metric space structure~\cite{ArnoldNivat80, Courcelle83} and, even if our notion of tree is more general than that adopted in these works, we can recover the same metric on our trees, using the equivalence relations introduced earlier.
The metric is defined as follows:
\begin{mathpar}
d(t, t') = 2^{-h} \and h = \min\{n \in \N \mid t \not\ApproxEq{n} t'\}
\end{mathpar}
with assumptions $\min \emptyset = \infty$ and $2^{-\infty} = 0$.
It is easy to see that a sequence ${(t_n)}_{n \in \N}$ such that $t_n \ApproxEq{n} t_{n+1}$, like that considered in \refToTheorem{eq-tree-sequence}, is a Cauchy sequence in the metric space; indeed $d(t_n, t_{n+1}) \le 2^{-n}$.
Therefore, such sequences converge also in the metric space, and the limit is the same.
However, our notion of convergence seems to be more general: sequences like those considered in \refToTheorem{limit-tree-sequence} are not necessarily Cauchy sequences, but they admit a limit in our framework.
For instance, consider the sequence ${(t_n)}_{n \in \N}$ of children injective trees labelled on $\N$ and rooted in $0$ defined\footnote{It is enough to provide a definition for the domains since trees are children injective.} by
\begin{mathpar}
\dom(t_0) = \{\EString\} \and \dom(t_{n+1}) = \dom(t_n) \cup  \{n\}
\end{mathpar}
It is easy to check that $t_n \ApproxTOrder{n} t_{n+1}$, hence, by \refToTheorem{limit-tree-sequence}, it converges to $\TLub_{n\in \N} t_n$.
However, it is not a Cauchy sequence, since $d(t_n, t_{n+1}) = 2^{-1}$ for all $n \in \N$, and $\TLub_{n \in \N} t_n$ is not a limit of the sequence in the metric space.
A deeper comparison between these relation{s} and the standard metric structure on trees will be matter of further work. }

We can now introduce the concept that will allow the last proof-theoretic characterization.

\begin{defi}\label{def:approx-sequence}
Let $\Pair{\is}{\coaxioms}$ be an inference system with coaxioms and $\judg \in \universe$ a judgement. Then{:}
\begin{enumerate}
\item An \emph{approximating proof sequence} for $\judg$ is a sequence of proof trees ${(t_n)}_{n \in \N}$ for $\judg$ such that $t_n \in \T_n$ and  $t_n \ApproxTOrder{n} t_{n+1}$ for all $n \in \N$.
\item A \emph{{strongly approximating} proof sequence} for $\judg$ is a sequence of proof trees ${(t_n)}_{n \in \N}$ for $\judg$ such that $t_n \in \T_n$ and $t_n \ApproxEq{n} t_{n+1}$ for all $n \in \N$.
\end{enumerate}
\end{defi}

\noindent
Obviously every {strongly approximating} proof  sequence is also an approximating proof sequence.
Note also that all trees in these sequences are well-founded proof trees in $\Extended{\is}{\coaxioms}$.
Intuitively, both notions represent the growth of a proof for $\judg$ in $\is$ approximated using coaxioms.
The difference is that trees in an approximating proof sequence can grow both in depth {and} in breadth, while in a {strongly approximating} proof sequence they can grow only in depth.
We now prove our last theorem, characterizing $\Generated{\is}{\coaxioms}$ in terms of (strongly) approximating proof sequences. 

\begin{thm}\label{theo:approx-sequence}
Let $\Pair{\is}{\coaxioms}$ be an inference system with coaxioms and $\judg\in \universe$ a judgement.
Then the following are equivalent
\begin{enumerate}
\item $\judg \in \Generated{\is}{\coaxioms}$
\item $\judg$ has a {strongly approximating} proof sequence
\item $\judg$ has an approximating proof sequence $(t_n)_{n \in \N}$ such that $\TLub_{n \in \N} t_n$ is a proof tree in $\is$. 
\end{enumerate}
\end{thm}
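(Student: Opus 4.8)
The plan is to establish the cycle $1 \Rightarrow 2 \Rightarrow 3 \Rightarrow 1$, using as the main bridge \refToCorollary{tree2}, which already characterizes $\Generated{\is}{\coaxioms}$ as the set of roots of proof trees $t$ in $\is$ all of whose nodes admit approximated proof trees of every level. I would also use the limit constructions \refToTheorem{limit-tree-sequence} and \refToTheorem{eq-tree-sequence}, the subtree property \refToProposition{approx-subtree}, and one elementary monotonicity remark: since an approximated proof tree of level $j$ is, by \refToDefinition{approx-trees}, one where coaxioms occur only at depth $\ge j$, we have $\T_j \subseteq \T_m$ whenever $m \le j$ (a tree in $\T_j$ uses coaxioms only at depth $\ge j \ge m$; formally, an easy induction shows $\T_{j+1}\subseteq\T_j$).

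For $1 \Rightarrow 2$, by \refToCorollary{tree2} I fix a proof tree $t$ for $\judg$ in $\is$ such that every node of $t$ has an approximated proof tree of level $0$, that is, a well-founded proof tree in $\Extended{\is}{\coaxioms}$. I then build $t_n$ by tree surgery on $t$: keep the top $n$ levels of $t$ unchanged (these use only rules of $\is$, since $t$ is a proof tree in $\is$), and at each frontier node $\alpha\in\dom(t)$ with $\Len{\alpha}=n$ graft a chosen well-founded proof tree $s_\alpha$ in $\Extended{\is}{\coaxioms}$ rooted in $t(\alpha)$, leaving untouched any branch of $t$ that ends above depth $n$. Then $t_n$ is a well-founded proof tree in $\Extended{\is}{\coaxioms}$ rooted in $\judg$ using coaxioms only at depth $\ge n$, hence $t_n \in \T_n$. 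Since both $t_n$ and $t_{n+1}$ coincide with $t$ on all nodes of depth $\le n$, we obtain $\dom_n(t_n)=\dom_n(t)=\dom_n(t_{n+1})$ with matching labels, i.e.\ $t_n \ApproxEq{n} t_{n+1}$; thus ${(t_n)}_{n\in\N}$ is a strongly approximating proof sequence for $\judg$.

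For $2 \Rightarrow 3$, a strongly approximating sequence is in particular approximating, so $\TLub_{n\in\N}t_n$ exists by \refToTheorem{limit-tree-sequence}, and by \refToTheorem{eq-tree-sequence} it is the unique tree $t$ with $t_n \ApproxEq{n} t$ for all $n$. I claim $t$ is a proof tree in $\is$: for any node $\alpha$ with $\Len{\alpha}=k$, choosing $n>k$ and using $t_n \ApproxEq{n} t$, the node $\alpha$ and its children (all at depth $\le n$) agree in $t$ and $t_n$; since $t_n\in\T_n$ uses a rule of $\is$ at the depth-$k$ node $\alpha$, the same rule witnesses the proof-tree condition for $t$ at $\alpha$. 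For $3 \Rightarrow 1$, let $t=\TLub_{n\in\N}t_n$ be the given proof tree in $\is$. Since $\dom(t)=\bigcup_{n}\dom_n(t_n)$ and the $\ApproxTOrder{n}$-chain fixes the top levels, each $\alpha\in\dom(t)$ eventually lies in $\dom(t_n)$ with $t_n(\alpha)=t(\alpha)$ for all large $n$; by \refToProposition{approx-subtree}, $(t_n)\Restrict{\alpha}\in\T_{n-\Len{\alpha}}$ is an approximated proof tree for $t(\alpha)$ of arbitrarily large level, hence, by the monotonicity remark, of every level. Therefore $t$ witnesses condition 2 of \refToCorollary{tree2}, giving $\judg\in\Generated{\is}{\coaxioms}$.

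The routine bookkeeping (alphabet identifications for the grafting and for the label-string representation of children-injective trees) is harmless and I would suppress it. The main obstacle is the construction in $1 \Rightarrow 2$: one must glue the finite top of $t$ with the grafted well-founded trees so that the result genuinely lands in $\T_n$ and, crucially, so that consecutive trees agree \emph{exactly} on their first $n$ levels, which is precisely what upgrades an approximating sequence to a strongly approximating one. The delicate point in $3 \Rightarrow 1$ is that, under $\ApproxTOrder{n}$ alone, a node of the limit may appear only at larger indices, so one must first argue that every node eventually stabilizes before invoking \refToProposition{approx-subtree}.
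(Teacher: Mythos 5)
Your proof is correct, and your steps $2 \Rightarrow 3$ and $3 \Rightarrow 1$ coincide with the paper's argument almost line by line (the paper, too, silently uses the inclusion $\T_{n+k-\Len{\alpha}} \subseteq \T_n$ in $3 \Rightarrow 1$; your explicit remark that $\T_{j+1} \subseteq \T_j$ by an easy induction is exactly the missing justification). The genuine difference is in $1 \Rightarrow 2$. The paper never fixes a single non-well-founded proof tree: it uses the fact that $\Generated{\is}{\coaxioms}$ is a post-fixed point of $\Op{\is}$ to select, for \emph{every} judgement $\judg'$ in that set, one rule $\frac{\prem_{\judg'}}{\judg'} \in \is$ with $\prem_{\judg'} \subseteq \Generated{\is}{\coaxioms}$ and one tree $t_{\judg',0} \in \T_0$ (via \refToCorollary{proof-trees-1}); it then defines $t_{\judg,n+1}$ by stacking the trees $t_{\judg',n}$, $\judg' \in \prem_\judg$, under the chosen rule, and proves $t_{\judg,n} \ApproxEq{n} t_{\judg,n+1}$ by an induction on $n$ that computes the domains explicitly. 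You instead fix one coinductive proof tree $t$ in $\is$ (from \refToCorollary{tree2}, though \refToCorollary{proof-trees-1} already suffices, since only level-$0$ approximations are needed) and obtain $t_n$ by truncating $t$ at depth $n$ and grafting level-$0$ trees on the frontier. What your route buys: the relation $t_n \ApproxEq{n} t_{n+1}$ is immediate, since both trees agree with $t$ up to depth $n$ whatever grafts are chosen, so the paper's domain computation disappears. What it costs: the grafting operation must be spelled out on the canonical string representation, and $t_n \in \T_n$ requires its own (easy) induction, whereas the paper's stacked-rule construction matches \refToDefinition{approx-trees} literally, so membership in $\T_n$ holds by construction. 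The two constructions are essentially the same object built top-down (prune a given tree) versus bottom-up (unfold a choice of rules), and both are sound.
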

\begin{proof}
{We assume the canonical representation for children injective trees. }
\begin{description}

\item[$1\Rightarrow 2$]
{We define trees $t_{\judg, n}$ for $\judg \in \Generated{\is}{\coaxioms}$ and $n \in \N$ such that $t_{\judg, n}(\EString) = \judg$ by induction on $n$.
By \refToCorollary{proof-trees-1}, we know that every judgement $\judg \in \Generated{\is}{\coaxioms}$ has a well-founded proof tree in $\Extended{\is}{\coaxioms}$,  that is, a proof tree in $\T_0$ rooted in $\judg$: we select one of these trees and call it $t_{\judg, 0}$.
Furthermore, since $\Generated{\is}{\coaxioms}$ is a post-fixed point, for any $\judg \in \Generated{\is}{\coaxioms}$ we can select a rule $\frac{\prem_\judg}{\judg} \in \is$ with $\prem_\judg \subseteq \Generated{\is}{\coaxioms}$; hence $t_{\judg, n+1}$ {can be} defined as follows{:} }
\[
t_{\judg, n+1} = \Rule {
	\{ t_{\judg', n} \mid \judg' \in \prem_\judg \}
}{\judg}
\]
Clearly, by construction for all $\judg \in \Generated{\is}{\coaxioms}$ and for all $n \in \N$, $t_{\judg, n} \in \T_n$.
We show by induction on $n$ that for all $n \in \N$ and for all $\judg \in \Generated{\is}{\coaxioms}$,  $t_{\judg, n} \ApproxEq{n} t_{\judg, n+1}$.
\EZComm{controllare: usare lo stile sotto con Base e Induction ovunque nelle prove per induzione}
\begin{description}
\item[Base] If $n=0$, then $\dom_0(t_{\judg, 0}) = \dom_0(t_{\judg, 1}) = \{\EString\}$ and by construction $ t_{\judg, 0}(\EString) = t_{\judg, 1}(\EString) = \judg$, hence $t_{\judg, 0} \ApproxEq{0} t_{\judg, 1}$.
\item[Induction] We assume the thesis for $n-1$ and prove it for $n$, hence we have to show that $t_{\judg, n} \ApproxEq{n} t_{\judg, n+1}$.
By construction, we have
\begin{mathpar}
t_{\judg, n} = \Rule{ \{ t_{\judg', n-1} \mid \judg' \in \prem_\judg \} }{\judg}
\and
t_{\judg, n+1} = \Rule{ \{ t_{\judg', n} \mid \judg' \in \prem_\judg \} }{\judg}
\end{mathpar}
By induction hypothesis, we get $t_{\judg', n-1} \ApproxEq{n-1} t_{\judg', n}$ for all $\judg' \in \prem_\judg$.
Therefore we have
\[\begin{split}
\dom_n(t_{\judg, n}) &= \{\EString\} \cup \bigcup_{\judg' \in \prem_\judg} \judg' \dom_{n-1}(t_{\judg', n-1})  \\
					&= \{\EString\} \cup \bigcup_{\judg' \in \prem_\judg} \judg' \dom_{n-1}(t_{\judg', n}) \\
					&= \dom_n(t_{\judg, n+1})
\end{split}\]
Consider now $\alpha \in \dom_n(t_{\judg, n})$, we have two cases{:}
\begin{align*}
\alpha &= \EString 								& & t_{\judg, n}(\alpha) = \judg = t_{\judg, n+1}(\alpha) \\
\alpha &= \judg' \beta,\ \judg' \in \prem_\judg	& & t_{\judg, n}(\alpha) = t_{\judg', n-1}(\beta) = t_{\judg', n}(\beta) = t_{\judg, n+1}(\alpha)
\end{align*}
and this shows $t_{\judg, n} \ApproxEq{n} t_{\judg, n+1}$ as needed.
\end{description}

\item[$2\Rightarrow 3$] By hypothesis, $\judg$ has a strongly approximating proof sequence $(t_n)_{n\in\N}$ and by \refToDefinition{approx-sequence} it is also an approximating proof sequence. 
We set $t = \TLub_{n \in \N} t_n$ and prove that $t$ is a proof tree in $\is$ for $\judg$. 
By \refToTheorem{eq-tree-sequence}, we have that $t_n \ApproxEq{n} t$ for all $n \in \N$, 
hence, we get $\judg = t_0(\EString) = t(\EString)$.
Consider $\alpha \in \dom(t)$ and set $n = \Len{\alpha} + 1$. 
By construction of $t$, we have that $\alpha \in \dom_n(t_n)$ and $\chl_t(\alpha) = \chl_{t_n}(\alpha) \subseteq \dom_n(t_n)$, as $t_n \ApproxEq{n} t$. 
Then, since $t_n \in \T_n$ and $\Len{\alpha} < n$ the rule
\[
    \Rule{ \{t(\beta) \mid \beta \in \chl_t(\alpha)\} }{t(\alpha)}
\]
is a rule in $\is$ by \refToDefinition{approx-trees}, thus $t$ is a proof tree in $\is$.

\item[$3\Rightarrow 1$] Set $t = \TLub_{n \in \N} t_n$, which, by hypothesis, is a proof tree in $\is$ for $\judg$. 
Consider a node $\alpha \in \dom(t)$, then there is $k \in \N$ such that $\alpha \in \dom_k(t_k)$, and so $\Len{\alpha} \le k$ and $\alpha \in \dom_m(t_m)$ for all $m \ge k$, by \refToDefinition{approx-sequence}. 
We define the sequence ${(t_n^\alpha)}_{n \in \N}$ such that $t_n^\alpha = {t_{n+k}}\Restrict{\alpha}$.
By \refToProposition{approx-subtree}, we get $t_n^\alpha \in \T_{n+k-\Len{\alpha}} \subseteq \T_n$.
This observation shows that every node in $t$ has an approximated proof tree of level $n$ for all $n \in \N$, hence by \refToTheorem{approx-trees} we get $\judg \in \Generated{\is}{\coaxioms}$.
\qedhere
\end{description}
\end{proof}

The last condition in \refToTheorem{approx-sequence} requires to build an approximating proof sequence $(t_n)_{n \in \N}$ and to check its limit $\TLub_{n \in \N} t_n$  is a proof tree. 
However, for a large class of inference systems,  this last requirement is always true, as formally proved below. 
 
We say an inference system $\is$ is \emph{bounded} if there exists a natural number $b \in \N$ such that  for each rule $\myrule \in \is$, the cardinality of $\prem$ is less than or equal to $b$. 
Note that most common inference systems are bounded, since they are specified by a finite set of finitary meta-rules. 
An inference system with coaxioms $\Pair{\is}{\coaxioms}$ with $\is$ bounded enjoys the following key property: 
\begin{prop} \label{prop:approx-bounded}
Let $\Pair{\is}{\coaxioms}$ be an inference system with coaxioms where  $\is$ is bounded, and consider an approximating proof sequence $(t_n)_{n \in \N}$ in $\Pair{\is}{\coaxioms}$. 
Then $\TLub_{n \in \N} t_n$ is a proof tree. 
\end{prop}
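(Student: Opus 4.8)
The plan is to feed the approximating proof sequence $(t_n)_{n \in \N}$ into \refToTheorem{limit-tree-sequence}: since $t_n \ApproxTOrder{n} t_{n+1}$ for all $n$, the limit $t = \TLub_{n \in \N} t_n$ exists, satisfies $t_n \ApproxTOrder{n} t$ for all $n$, and (reading off the construction in that proof) has $\dom(t) = \bigcup_{n \in \N} \dom_n(t_n)$ with $t(\alpha) = t_{k(\alpha)}(\alpha)$. Working with the canonical representation for children injective trees, $t$ is automatically children injective, so to show that $t$ is a proof tree in $\is$ it suffices, by \refToDefinition{proof-tree}, to check that for every $\alpha \in \dom(t)$ the rule $\Rule{\{t(\beta) \mid \beta \in \chl_t(\alpha)\}}{t(\alpha)}$ belongs to $\is$.

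First I would fix $\alpha \in \dom(t)$ and set $k = \Len{\alpha}$. Because $\dom(t)$ is the union of the $\dom_n(t_n)$ and the inclusions $\dom_n(t_n) \subseteq \dom_n(t_{n+1})$ propagate to every level below $n$, there is an index $n_1 > k$ from which on $\alpha \in \dom(t_n)$, the label is stable ($t_n(\alpha) = t(\alpha)$), and the children sets grow monotonically, i.e.\ $\chl_{t_n}(\alpha) \subseteq \chl_{t_{n+1}}(\alpha)$. For each such $n$ we have $\Len{\alpha} < n$ and $t_n \in \T_n$, so by \refToDefinition{approx-trees} the set $P_n = \{t_n(\beta) \mid \beta \in \chl_{t_n}(\alpha)\}$ is exactly the premise set of a rule of $\is$ with conclusion $t(\alpha)$. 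Since labels are stable at level $k+1$ as well, $P_n = \{t(\beta) \mid \beta \in \chl_{t_n}(\alpha)\}$, and these premise sets form an increasing chain whose union is precisely $\{t(\beta) \mid \beta \in \chl_t(\alpha)\}$.

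The key step, and the only place where boundedness of $\is$ is used, is to turn this increasing union back into a single rule. Each $P_n$ is a premise set of a rule, hence $|P_n| \le b$; an ascending chain of sets of cardinality bounded by $b$ must stabilize, so $P_n = P_{n_0}$ for some $n_0 \ge n_1$ and all $n \ge n_0$. By children injectivity the children $\chl_{t_n}(\alpha)$ stabilize together with their label images, so $\chl_t(\alpha) = \chl_{t_{n_0}}(\alpha)$ and therefore $\{t(\beta) \mid \beta \in \chl_t(\alpha)\} = P_{n_0}$, which is the premise set of a rule with conclusion $t(\alpha)$. This yields $\Rule{\{t(\beta) \mid \beta \in \chl_t(\alpha)\}}{t(\alpha)} \in \is$ for every node $\alpha$, so $t$ is a proof tree. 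The main obstacle is exactly this stabilization argument: without the uniform bound $b$ the children of $\alpha$ in the limit could be an infinite increasing union of premise sets that is not itself the premise set of any rule (indeed $\alpha$ could acquire infinitely many children), so the limit would fail to be a proof tree, consistently with the third condition of \refToTheorem{approx-sequence} being a genuine extra hypothesis in the unbounded case.
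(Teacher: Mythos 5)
Your proof is correct and follows essentially the same route as the paper's: fix a node $\alpha$ in the limit, observe that the children sets $\chl_{t_n}(\alpha)$ form an increasing chain, use boundedness of $\is$ to force this chain to stabilize, and then invoke $t_k \in \T_k$ with $\Len{\alpha} < k$ to conclude the rule at $\alpha$ is in $\is$. The only difference is presentational — you phrase the stabilization at the level of premise sets $P_n$ and pull it back to children via children injectivity, while the paper argues directly that $\chl_t(\alpha)$ is finite and hence equals some $\chl_{t_k}(\alpha)$ — which amounts to the same argument.
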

\begin{proof}
Set $t = \TLub_{n \in \N}$ and consider $\alpha \in \dom(t)$. 
By definition of $t$, there exists $n > \Len{\alpha}$ such that $\alpha \in \dom_n(t_n)$ and so $\chl_t(\alpha) = \bigcup_{k\ge n} \chl_{t_k}(\alpha)$. 
Since $\dom_k(t_k) \subseteq \dom_k(t_{k+1})$ we have $\chl_{t_k}(\alpha) \subseteq \chl_{t_{k+1}}(\alpha)$, and, since all $t_k$ are proof trees and $\is$ is bounded, we get that $\chl_t(\alpha)$ is finite. 
As a consequence, we have that $\chl_t(\alpha) = \chl_{t_k}(\alpha)$ for some $k\ge n$, hence, since $\Len{\alpha} < k$, the rule 
\[ \Rule{ \{ t(\beta) \mid \beta \in \chl_t(\alpha) \} }{ t(\alpha)} \] 
is a rule in $\is$, as $t_k \in \T_k$, thus $t$ is a proof tree. 
\end{proof}

Therefore, thanks to this property, we can rephrase  \refToTheorem{approx-sequence} under the boundedness hypothesis as follows: 

\begin{cor}\label{cor:bounded-approx-sequence}
Let $\Pair{\is}{\coaxioms}$ be an inference system with coaxioms where $\is$ is bounded,  and $\judg\in \universe$ a judgement.
Then the following are equivalent
\begin{enumerate}
\item $\judg \in \Generated{\is}{\coaxioms}$
\item $\judg$ has a {strongly approximating} proof sequence
\item $\judg$ has an approximating proof sequence  
\end{enumerate}
\end{cor}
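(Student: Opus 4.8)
The plan is to reduce this corollary entirely to \refToTheorem{approx-sequence} together with \refToProposition{approx-bounded}, observing that conditions 1 and 2 here are \emph{verbatim} conditions 1 and 2 of \refToTheorem{approx-sequence}. Thus their mutual equivalence, and their equivalence with the theorem's third condition, is already established. The only genuinely new content lies in condition 3: whereas \refToTheorem{approx-sequence} requires an approximating proof sequence $(t_n)_{n\in\N}$ \emph{whose limit $\TLub_{n\in\N} t_n$ is a proof tree in $\is$}, the present corollary asks merely for the existence of such a sequence, dropping the side condition. So all that remains is to show that, under the boundedness hypothesis, these two versions of condition 3 coincide.

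First I would handle the easy direction. Condition 3 of \refToTheorem{approx-sequence} trivially implies condition 3 of this corollary, since the latter is obtained by discarding a hypothesis: any approximating proof sequence with the extra property is in particular an approximating proof sequence. For the converse, suppose $\judg$ has an approximating proof sequence $(t_n)_{n\in\N}$ in $\Pair{\is}{\coaxioms}$. Since $\is$ is bounded, \refToProposition{approx-bounded} applies directly and yields that $\TLub_{n\in\N} t_n$ is a proof tree. Hence the very same sequence $(t_n)_{n\in\N}$ already witnesses condition 3 of \refToTheorem{approx-sequence}.

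Combining the two directions, condition 3 of this corollary is equivalent to condition 3 of \refToTheorem{approx-sequence}; by that theorem the latter is equivalent to conditions 1 and 2, and therefore so are all three statements here. This closes the loop and establishes the claimed equivalence.

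The main obstacle, such as it is, has already been discharged inside \refToProposition{approx-bounded}: it is precisely the boundedness of $\is$ that forces, at each node $\alpha$, the increasing family $\chl_{t_k}(\alpha)$ to stabilize at a finite set, so that $\chl_t(\alpha)=\chl_{t_k}(\alpha)$ for some $k$ large enough and the limit genuinely obeys a rule of $\is$ at $\alpha$. With that proposition available, the corollary is a mechanical repackaging of \refToTheorem{approx-sequence} and carries no further difficulty.
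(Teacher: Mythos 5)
Your proof is correct and matches the paper's own argument: the corollary is stated there precisely as a rephrasing of \refToTheorem{approx-sequence} made possible by \refToProposition{approx-bounded}, which is exactly the reduction you carry out (with the trivial direction spelled out explicitly).
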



\section{Reasoning with coaxioms}\label{sect:coaxioms-reasoning}

In this section we discuss proof techniques for inference systems with coaxioms.

Assume that $\DefSet=\Generated{\is}{\coaxioms}$ \FD{(for ``defined'')}  is the interpretation generated by coaxioms for some $\Pair{\is}{\coaxioms}$, and that $\Spec$ (for ``specification'') is the intended set of judgements, called \emph{valid} in the following.

Typically, we are interested in proving $\Spec\subseteq\DefSet$ (\emph{completeness}, that is, each valid judgement can be derived) and/or  $\DefSet\subseteq\Spec$ (\emph{soundness}, that is, each derivable judgement is valid). Proving both properties amounts to say that the inference system with coaxioms actually defines the intended set of judgements.

For what follows, recall that, given an inference system with coaxioms $\Pair{\is}{\coaxioms}$, the following identities hold:
\[
\Generated{\is}{\coaxioms} = \CoInd{\Restricted{\is}{\Ind{\Extended{\is}{\coaxioms}}}} =
\ker_{\Op{\is}}(\Ind{\Extended{\is}{\coaxioms}})
\]

\subsubsection*{Completeness proofs}
To show completeness, we can use \CoIndPrinciple.
Indeed, since $\DefSet= \ker_\Op{\is} (\Ind{\Extended{\is}{\coaxioms}})$, if $\Spec \subseteq  \Ind{\Extended{\is}{\coaxioms}}$ and $\Spec$ is a post-fixed point of $\Op{\is}$, by \CoIndPrinciple we get that $\Spec \subseteq \DefSet$.
That is, using the notations of inference systems, to prove completeness it is enough to show that:
\begin{itemize}
\item $\Spec \subseteq \Ind{\Extended{\is}{\coaxioms}}$ {and}
\item $\Spec \subseteq \Op{\is}(\Spec)$
\end{itemize}
We call this principle the \emph{bounded coinduction principle}.

We illustrate the technique on the inference system with coaxioms $\Pair{\is}{\coaxioms}$ which defines the judgement $\allPos{l}{b}$.
{We report here the definition from \refToSection{coaxioms}, for the reader's convenience.
\begin{small}
\begin{mathpar}
\Rule{}{ \allPos{\EList}{\True} }
\and
\Rule{}{ \allPos{\Cons{x}{l}}{\False} } x\le 0
\and
\Rule{ \allPos{l}{b} }{ \allPos{\Cons{x}{l}}{b} }x>0
\and
\CoAxiom{ \allPos{l}{\True} }
\end{mathpar}
\end{small}  }
Let $\SpecAllPos$ be the set of judgements $\allPos{l}{b}$ where $b$ is $\True$ if all the elements in $l$ are positive, $\False$ otherwise.
Completeness means that the judgement  $\allPos{l}{b}$ can be proved, for all $\allPos{l}{b}\in\SpecAllPos$. By the {bounded coinduction} principle, it is enough to show that
\begin{itemize}
\item $\SpecAllPos \subseteq \Ind{\Extended{\is}{\coaxioms}}$
\item $\SpecAllPos \subseteq \Op{\is}(\SpecAllPos)$
\end{itemize}
To prove the first condition, we have to show that, for each $\allPos{l}{b}\in\SpecAllPos$, $\allPos{l}{b}$ has a \emph{finite} proof tree in $\Extended{\is}{\coaxioms}$.
This can be easily shown, indeed:
\begin{itemize}
\item If $l$ contains a (first) non-positive element, hence\\ $l=\Cons{x_1}{ \Cons{\ldots}{ \Cons{x_n}{ \Cons{x}{l'} } } }$ with $x_i>0$ for $i\in [1..n]$, $x \leq 0$, and $b=F$\\
then we can reason by arithmetic induction on $n$. Indeed, for $n=0$, $\allPos{l}{b}$ is the consequence of the second rule with no premises, and for $n>0$ it is the consequence of the third rule where we can apply the induction hypothesis to the premise.
\item If $l$ contains only positive elements, hence $b=T$, then $\allPos{l}{b}$ is a coaxiom, hence it is the consequence of a rule with no premises in $\Extended{\is}{\coaxioms}$.
\end{itemize}
To prove the second condition, we have to show that, for each $\allPos{l}{b}\in\SpecAllPos$, $\allPos{l}{b}$ is the consequence of a rule with premises in $\SpecAllPos$. This can be easily shown, indeed:
\begin{itemize}
\item If $l=\Lambda$, hence $b=T$, then $\allPos{\Lambda}{\True}$ is the consequence of the first rule with no premises.
\item If $l=\Cons{x}{l'}$ with $x\leq 0$, hence $b=F$, then $\allPos{l}{\False}$ is the consequence of the  second rule with no premises.
\item If $l=\Cons{x}{l'}$ with $x > 0$, and $b=T$, hence $\allPos{l'}{\True}\in\SpecAllPos$, then $\allPos{l}{\True}$ is the consequence of the  third rule with premise $\allPos{l'}{\True}$, and analogously if $b=F$.
\end{itemize}

\subsubsection*{Soundness proofs}
To show soundness, it is convenient to use the alternative characterization in terms of approximated proof trees given in \refToSection{coaxioms-trees}, as detailed below.
First of all, from \refToProposition{ker-iterate-bounds}, $\DefSet\subseteq\bigcap \{\IterOp{\is}{n}(\Ind{\Extended{\is}{\coaxioms}})\mid n \geq 0\}$.
Hence, to prove $\DefSet\subseteq\Spec$, it is enough to show that $\bigcap \{\IterOp{\is}{n}(\Ind{\Extended{\is}{\coaxioms}})\mid n \geq 0\}\subseteq\Spec$.
Moreover, by \refToTheorem{approx-trees}, for all $n\in \N$, judgements in $\IterOp{\is}{n}(\Ind{\Extended{\is}{\coaxioms}})$ are those which have an approximated proof tree of level  $n$.
Hence, to prove {the above inclusion}, we can show that all judgements, which have an approximated proof tree of level $n$ for each $n$, are in $\Spec$ or equivalently, by {contraposition},  that
judgements, which are not in $\Spec$, that is, non-valid judgements, fail to have an approximated proof tree of level $n$ for some $n$.

We illustrate the technique again on the example of \allPosName.
We have to show that, for each $\allPos{l}{b}\not\in\SpecAllPos$, there exists $n\geq 0$ such that $\allPos{l}{b}$ cannot be proved by using coaxioms at level greater than $n$. By cases:

\begin{itemize}
\item If $l$ contains a (first) non-positive element, hence\\ $l=\Cons{x_1}{ \Cons{\ldots}{ \Cons{x_n}{ \Cons{x}{l'} } } }$ with $x_i>0$ for $i\in [1..n]$, $x \leq 0$, then, assuming that coaxioms can only be used at a level greater  than $n+1$, $\allPos{l}{b}$ can only be derived by instantiating $n$ times the third rule, and once the second rule, hence $b$ cannot be $\True$.
\item If $l$ contains only positive elements, then it is immediate to see that there is no finite proof tree for $\allPos{l}{\False}$.
\end{itemize}


\section{Taming coaxioms: advanced examples}\label{sect:coaxioms-examples}

In this section we will present some {more} examples of situations where coaxioms can help to {define} judgements on non-well-founded structures.
These more \EZ{involved} examples will serve for explaining how to use coaxioms, which kind of problems they can cope with, {and how complex can be the interaction between coaxioms and standard rules. }

\subsection{Mutual recursion}
Circular definitions involving inductive and coinductive judgements have no semantics in standard inference systems,
because all judgements have to be interpreted either inductively, or coinductively.
The same problem arises in the context of coinductive logic programming~\cite{SimonBMG07},
where a logic program has a well-defined semantics only if inductive and coinductive predicates can be stratified:
each stratum defines only inductive or coinductive predicates (possibly defined in a mutually recursive way),
and cannot depend on predicates defined in upper strata.
Hence, it is possible to define the semantics of a logic program only if there are no mutually recursive definitions involving both inductive and coinductive predicates.

We have already seen that an inductive inference system corresponds to an inference system with coaxioms where there are no coaxioms,
while a coinductive one corresponds to the case where coaxioms consist of all judgements in $\universe$;
however, between these two extremes, coaxioms offer many other possibilities thus allowing a finer control on the
semantics of the defined judgements.

There exist cases where two or more related judgements need to be defined recursively,
but for some of them the correct interpretation is inductive, while for others is coinductive~\cite{SimonMBG06,SimonBMG07,Ancona13,BasoldK16}.
In such cases, coaxioms may be employed to
provide the correct definition in terms of a single inference system with no stratification.
{However, the interaction between coaxioms and standard rules is not that easy, hence special care is required
to get from the inference system the intended meaning of judgements. }
In order to see this, let us consider the judgement $\pathZero(t)$, where $t$ is an infinite tree\footnote{For the purpose of this example,
we only consider trees with infinite depth and branching.} over $\{0,1\}$, which
holds iff there exists a path starting from the root of $t$  and containing just $0$s.
Trees are represented as infinite terms of shape $\tree(n,l)$, where $n\in\{0,1\}$ is the
root of the tree, and $l$ is the infinite list of its direct subtrees.
For instance, if $t_1$ and $t_2$ are the trees defined by the
syntactic equations
\begin{mathpar}
t_1=\tree(0,l_1)
\and
l_1=\Cons{t_2}{ \Cons{t_1}{l_1} }
\and
t_2=\tree(0,l_2)
\and
l_2=\Cons{\tree(1,l_1)}{l_2}
\end{mathpar}
 then we expect
$\pathZero(t_1)$ to hold, but not $\pathZero(t_2)$.

To define $\pathZero$, we introduce an auxiliary judgement $\isinZero(l)$ testing whether an infinite list $l$ of trees contains a tree
$t$ such that $\pathZero(t)$ holds.
Intuitively, we expect $\pathZero$ and $\isinZero$ to be interpreted coinductively and inductively, respectively;
this reflects the fact that $\pathZero$ checks a property universally quantified over an infinite sequence
(a \emph{safety} property in the terminology of concurrent systems): all the elements of the path must {be equal to} $0$;
on the contrary, $\isinZero$ checks a property existentially
quantified over an infinite sequence (a \emph{liveness} property in the terminology of concurrent systems): the list must contain a
tree $t$ with a specific property (that is, $\pathZero(t)$ must hold).
Driven by this intuition, one could be tempted to define the following inference system with coaxioms for all judgements
of shape $\pathZero(t)$, and no coaxioms for judgements of shape $\isinZero(l)$:
\begin{mathpar}
\Rule{\isinZero(l)}{\pathZero(\tree(0,l))}{}
\and
\CoAxiom{\pathZero(t)}
\and
\Rule{\pathZero(t)}{\isinZero(\Cons{t}{l})}
\and
\Rule{\isinZero(l)}{\isinZero(\Cons{t}{l})}
\end{mathpar}
Unfortunately, because of the mutual recursion between $\isinZero$ and $\pathZero$, the inference system above does not capture the intended behaviour:
$\isinZero(l)$ is derivable for every infinite list of trees $l$, even when $l$ does not contain a tree $t$ with an infinite path starting from its root
and containing just $0$s.
{Indeed, the coaxiom we added is not really restrictive, because it allows the predicate $\pathZero$ to be coinductive, but, since $\isinZero$ directly depends on $\pathZero$, it is allowed to be coinductive as well. }

To overcome this problem, we can break the mutual dependency between judgements,
replacing the judgement $\isinZero$ with the more general one $\isin$, such that $\isin(t,l)$ holds iff the infinite list $l$ contains the tree $t$.
Consequently, we can define the following inference system with coaxioms:
\begin{mathpar}
\Rule{\isin(t,l) \Space \pathZero(t)}{\pathZero(\tree(0,l))}{}
\and
\CoAxiom{\pathZero(t)}
\and
\Rule{}{\isin(t,\Cons{t}{l})}
\and
\Rule{\isin(t,l)}{\isin(t,\Cons{t'}{l})}
\end{mathpar}

Now the semantics of the system corresponds to the intended one, {since now $\isin$ does not depend on $\pathZero$, hence the coaxioms do not influence the semantics of $\isin$, which remains inductive as expected.
Nevertheless, the semantics is well-defined without the need of stratifying the definitions into two separate inference systems. }

Following the characterization in terms of proof trees and the proof techniques provided in \refToSection{coaxioms-trees} and
\refToSection{coaxioms-reasoning}, we can sketch a proof of correctness.
Let $\Spec$ be the set  where
elements have either shape $\pathZero(t)$, where $t$ represents a tree with an infinite path of just $0$s starting from its root, or $\isin(t,l)$, where $l$ represents an infinite list containing the tree $t$; then a judgement belongs to $\Spec$ iff it can be derived in the
inference system with coaxioms defined above.

\subsubsection*{Completeness} We first show that the set $\Spec$ is a post-fixed point, that is, it is consistent w.r.t.
the inference rules, coaxioms excluded.
Indeed, if $t$ has an infinite path of $0$s, then it has necessarily  shape
$\tree(0,l)$, where $l$ must contain a tree $t'$ with an infinite path of $0$s.
Hence, the inference rule for $\pathZero$ can be applied with premises
$\isin(t',l)\in \Spec$, and $\pathZero(t')\in \Spec$.
If an infinite list contains a tree $t$, then
it has necessarily shape $\Cons{t'}{l}$ where, either $t=t'$, and hence the axiom for $\isin$ can be applied,
or $t\neq t'$ and $t$ is contained in $l$, and hence the inference rule for $\isin$ can be applied
with premise $\isin(t,l)\in \Spec$.

We then show that $\Spec$ is bounded by the closure of the coaxioms.
For the elements of shape  $\pathZero(t)$ it suffices to directly apply the corresponding coaxiom; for the elements
of shape $\isin(t,l)$ we can show that there exists a finite proof tree built possibly also with the coaxioms by induction on the
first position (where the head of the list corresponds to {$0$}) in the list where $t$ occurs.
If the position is {$0$} (base case), then $l=\Cons{t}{l'}$, and the axiom can be applied;
if the position is $n>0$ (inductive step), then $l=\Cons{t'}{l'}$ and $t$ occurs in $l'$ at position $n-1$,
therefore, by induction hypothesis, there exists a finite proof tree for $\isin(t,l')$, therefore
we can build a finite proof tree for $\isin(t,l)$ by applying the inference rule for $\isin$.

\subsubsection*{Soundness} We first observe that the
only finite proof trees that can be derived for $\isin(t,l)$ are obtained by
application of the axiom for $\isin$, hence $\isin(t,l)$ holds iff there
exists a finite proof tree for $\isin(t, l)$ built with the inference rules
for $\isin$.
Then, we can prove that, if $\isin(t,l)$ holds, then $t$ is contained in
$l$ by induction on the inference rules for $\isin$.
For the axiom (base case) the claim trivially holds, while for the other inference rule we
have that if $t$ belongs to $l$, then trivially $t$ belongs to $\Cons{t'}{l}$.

For the elements of shape $\pathZero(t)$ we first observe that by the coaxioms they all trivially belong to the closure
of the coaxioms.
Then, any proof tree for $\pathZero(t)$ must be infinite, because there are no axioms but only one inference rule
for $\pathZero$ where $\pathZero$ is referred in the premises; furthermore, such a rule is applicable only if the
root of the tree is {$0$}. We have already proved that if $\isin(t,l)$ is derivable, then $t$ belongs to $l$,
therefore we can conclude that if $\pathZero(t)$ is derivable, then $t$ contains an infinite path starting from its root,
and containing just $0$s.

\subsection{A numerical example}
It is well{-}known that real numbers in the closed interval $[0,1]$ can be represented
by infinite sequences ${(d_i)}_{i\in\N^+}$ of decimal\footnote{Of course the example can be generalized to any base $B\geq 2$.} digits,
where $\N^+$ denotes the set of all positive natural numbers.
Indeed, ${(d_i)}_{i\in\N^+}$ represents the real number which is the limit of the series $\sum_{i=1}^{\infty}10^{-i}d_i$ in the standard
complete metric space of real numbers (such a limit always exists by completeness,
because the associated sequence of partial sums is always a Cauchy sequence).
Such a representation is not unique for all rational numbers in $[0,1]$ (except for the bounds $0$ and $1$) that can be represented by a finite sequence of digits followed by an infinite sequence of $0$s;
for instance, $0.42$ can be represented either by the sequence $42\bar{0}$, or by the sequence $41\bar{9}$, where
$\bar{d}$ denotes the infinite sequence containing just the digit $d$.

For brevity, for $r={(d_i)}_{i\in\N^+}$, $\sem{r}$ denotes $\sum_{i=1}^{\infty}10^{-i}d_i$ (that is, the real number represented by $r$).
We want to define the judgement $\add(r_1,r_2,r,c)$ which holds iff $\sem{r_1}+\sem{r_2}=\sem{r}+c$ with $c$ an integer number; that is,
$\add(r_1,r_2,r,c)$ holds iff the addition of the two real numbers represented by the sequences $r_1$ and $r_2$ yields the real number
represented by the sequence $r$ with carry $c$.
We will soon discover that, to get a complete definition for $\add$,
$c$ is required to range over a proper superset of the set $\{0,1\}$, differently from what one could initially expect.

{We can define the judgement $\add$ by an  inference system with coaxioms as follows.
We represent a real number in $[0,1]$ by an infinite list of decimal digits, which, therefore,
can always be decomposed as $\Cons{d}{r}$, where $d$ is the first digit (corresponding to the exponent $-1$),
and $r$ is the rest of the list of digits.
Hence, in the definition below, $r, r_1, r_2$ range over infinite lists of digits, $d_1, d_2$ range over decimal digits (between $0$ and $9$), $c$ is an integer and
 $\div$ and $\bmod$ denote
the integer division, and the remainder operator, respectively. }

\begin{mathpar}
\Rule{\add(r_1,r_2,r,c)}{\add(\Cons{d_1}{r_1},\Cons{d_2}{r_2},\Cons{(s \bmod 10)}{r},s \div 10)}{s=d_1+d_2+c}
\and
\CoAxiom{\add(r_1,r_2,r,c)} c \in \{-1, 0, 1, 2\}
\end{mathpar}

As clearly emerges from the proof of completeness provided
below, besides the obvious values $0$ and $1$, the values $-1$ and $2$
have to be considered for the carry to ensure a complete definition of $\add$
because both $\add(\bar{0},\bar{0},\bar{9},-1)$  and $\add(\bar{9},\bar{9},\bar{0},2)$
hold, and, hence, should be derivable; these two judgements allow the derivation
of an infinite number of other valid judgements, as, for instance, $\add(1\bar{0},1\bar{0},1\bar{9},0)$ and
$\add(1\bar{9},1\bar{9},4\bar{0},0)$, respectively, {as \EZ{shown} by the following infinite derivations:
\begin{mathpar}
\Rule{
	\Rule{
		\Rule{
			\vdots
		}{ \add(\bar{0}, \bar{0}, \bar{9}, -1)  }
	}{ \add(\bar{0}, \bar{0}, \bar{9}, -1)  }
}{ \add(1\bar{0}, 1\bar{0}, 1\bar{9}, 0) }
\and
\Rule{
	\Rule{
		\Rule{
			\vdots
		}{ \add(\bar{9}, \bar{9}, \bar{0}, 2)  }
	}{ \add(\bar{9}, \bar{9}, \bar{0}, 2)  }
}{ \add(1\bar{9}, 1\bar{9}, 4\bar{0}, 0) }
\end{mathpar}
}

Also in this case we can sketch a proof of correctness: for all infinite sequences of decimal digits $r_1$, $r_2$ and
$r$, and all $c\in\{-1,0,1,2\}$, $\add(r_1,r_2,r,c)$ is derivable iff
$\sem{r_1}+\sem{r_2}=\sem{r}+c$.

\subsubsection*{Completeness}
By the coaxioms we trivially have that each element of shape
$\add(r_1,r_2,r,c)$ such that $\sem{r_1}+\sem{r_2}=\sem{r}+c$ with $c\in\{-1,0,1,2\}$ belongs
to the closure of the coaxioms.

To show that the unique inference rule of the system is consistent with the set of all valid  judgements,
let us assume that $\sem{r'_1}+\sem{r'_2}=\sem{r'}+c'$ with $r'_1=\Cons{d_1}{r_1}$, $r'_2=\Cons{d_2}{r_2}$, $r'=\Cons{d}{r}$ and $c'\in\{-1,0,1,2\}$.
Let us set $s=10c'+d$, and $c=s-d_1-d_2$, then $s\bmod 10=d$ and $s\div 10=c'$, and we get the desired conclusion of the
inference rule, and the side condition holds;
it remains to show that $\sem{r_1}+\sem{r_2}=\sem{r}+c$ with $c\in\{-1,0,1,2\}$.

We first observe that by the properties of limits w.r.t.\ the usual arithmetic operations, and
by definition of $\sem{-}$, for all infinite sequence $r$ of decimal digits, if $r=\Cons{d}{r'}$, then
$\sem{r}=10^{-1}(d+\sem{r'})$; then, from the hypotheses we get the equality $d_1+\sem{r_1}+d_2+\sem{r_2}=d+\sem{r}+10c'$,
hence $d_1+\sem{r_1}+d_2+\sem{r_2}=\sem{r}+s$,
and, therefore, $\sem{r_1}+\sem{r_2}=\sem{r}+c$; finally,
$c$ is an integer because it is an algebraic sum of integers, and, since $c=\sem{r_1}+\sem{r_2}-\sem{r}$, and $0\leq\sem{r_1},\sem{r_2},\sem{r}\leq1$, we get $c\in\{-1,0,1,2\}$.

\subsubsection*{Soundness}
Let $r'_1=\Cons{d_1}{r_1}$, $r'_2=\Cons{d_2}{r_2}$, and $r'=\Cons{d}{r}$ be infinite sequences of decimal digits, and $c'\in\{-1,0,1,2\}$;
we note that the judgement $\add(r'_1,r'_2,r',c')$ can only be derived from the unique inference rule where the premise is the judgement
$\add(r_1,r_2,r,c)$ with  $c$  equal to  $10c'+d-d_1-d_2$ and must range over $\{-1,0,1,2\}$.

To prove soundness we show that if $\sem{r'_1}+\sem{r'_2}\neq\sem{r'}+c'$, then  the judgement
$\add(r'_1,r'_2,r',c')$ cannot be derived in the inference system.
Let us set $\delta'=|\sem{r'}+c'-\sem{r'_1}-\sem{r'_2}|$; obviously, under the hypothesis
$\sem{r'_1}+\sem{r'_2}\neq\sem{r'}+c'$, we get $\delta' >0$.
In particular, the following fact holds:  if $\delta'\geq 4\cdot10^{-1}$, then
$10c'+d-d_1-d_2\not\in\{-1,0,1,2\}$.
Indeed, by the identity $\sem{r}=10^{-1}(d+\sem{r'})$ already
used for the proof of completeness, we have that
$\delta'=10^{-1}\delta$ with $\delta = |\sem{r}+c-\sem{r_1}-\sem{r_2}|$, with $c=10c'+d-d_1-d_2$;
$10^{-1}(\sem{r}+c-\sem{r_1}-\sem{r_2}) \geq 4\cdot10^{-1}$
implies $c\geq 3$
($\sem{r_1},\sem{r_2},\sem{r}\in[0,1]$), and, hence, $c=10c'+d-d_1-d_2\not\in\{-1,0,1,2\}$. On the other hand,
$10^{-1}(\sem{r}+c-\sem{r_1}-\sem{r_2})\leq -4\cdot10^{-1}$ implies
$c\leq -2$, hence $c=10c'+d-d_1-d_2\not\in\{-1,0,1,2\}$.

By virtue of this fact, and thanks to the hypotheses, we can prove by arithmetic induction over $n$ that for all $n\geq 1$, if
 $\delta'\geq 4\cdot10^{-n}$, then there exist only finite proof trees for $\add(r'_1,r'_2,r',c')$ where the coaxioms are applied at most at depth $n-1$.
The base case is directly derived from the previously proven fact. Indeed, for $n=1$, we can only derive $\add(r'_1,r'_2,r',c')$ by directly applying the coaxiom.
For the inductive step we observe that all derivation of depth greater than $1$ are built applying the inference rule as first step.
If such rule is applicable for deriving the conclusion $\add(r'_1,r'_2,r',c')$, then  we can apply the inductive
hypothesis for the premise $\add(r_1,r_2,r,c)$ since we have already shown that
$\delta'=10^{-1}\delta$, therefore
$\delta\geq 4\cdot10^{-(n-1)}$.

We can now conclude by observing that if $\sem{r'_1}+\sem{r'_2}\neq\sem{r'}+c'$, then there exists
$n$ such that $\delta'\geq 4\cdot10^{-n}$, therefore, by the previous result, we deduce that it is not possible
to build a finite tree for $\add(r'_1,r'_2,r',c')$ where the coaxioms are applied at arbitrary depth $k$
(in particular, $k$ is bounded by $n-1$); therefore $\add(r'_1,r'_2,r',c')$ cannot be derived in the inference system.

\medskip

From the proof of soundness we can also deduce that if we let $c$ range over $\Z$, then
 the inference system becomes unsound; for instance,
 it would be possible to build the following infinite proof for $\add(\bar{0},\bar{0},\bar{0},1)$
where all nodes clearly belong to the closure of the coaxioms,
and, hence,
$\add(\bar{0},\bar{0},\bar{0},1)$ would be derivable, but $\sem{\bar{0}}+\sem{\bar{0}}\neq\sem{\bar{0}}+1$:
\[
\Rule
    {
      \Rule
          {
            \vdots
          }
          {\add(\bar{0},\bar{0},\bar{0},10^1)}
    }
    {\add(\bar{0},\bar{0},\bar{0},10^0)}
\]

\subsection{Distances and shortest paths on weighted graphs}\label{sect:graph-example}
As we already said, another widespread non-well-founded structure are graphs.
In \refToSection{coaxioms}, we have shown a first examples concerning graphs, defining the judgements $\Visit{\node}{\nodeset}$, stating that  $\nodeset$ is the set of nodes reachable from $\node$ in the graph.
Essentially, the proposed definition performs a visit of the graph, keeping track of all encountered nodes.
The same pattern can be adopted to solve more complex problems.
For instance, in this section we will deal with  distances between nodes in a \emph{weighted graph}.

Let us introduce the notion of weights for graphs.
In a graph $\Pair{\Nodes}{\adj}$ the set of edges is the set $\Edges \subseteq \Nodes \times \Nodes$  defined {by} $\Edges = \{ \Pair{\node}{\anode} \in \Nodes \times \Nodes \mid \anode \in \adj(\node) \}$.
We will often write $\node\anode$ for an edge $\Pair{\node}{\anode} \in \Edges$.
A weight function is a function $\fun{w}{\Edges}{\N}$.
Here we consider natural numbers as codomain, however we could have considered any other set of non-negative numbers.
Hence, a weighted graph is a graph $\Pair{\Nodes}{\adj}$ together with a weight function $w$.

In a weighted graph $G$, the weight of a path $\alpha$ is the sum of the weights of the edges \FD{(counting repetitions)} {determined by} $\alpha$, we denote this by $w(\alpha)$.
Note that in general the weight of a path $\alpha$ is different from its length, defined as  the number of edges \FD{(counting repetitions)}  {determined by} the path and denoted by $\|\alpha\|$.
The distance between nodes $\node$ and $\anode$ is defined as the minimum weight of a path connecting $\node$ to $\anode$, it is infinite if no such path exists.
Below  we {show} the inference system with coaxioms defining the judgement $\dist{\node}{\anode}{\delta}$ on a weighted graph, where $\delta \in \N \cup \{\infty\}$.
\begin{small}
\begin{mathpar}
\Rule{}{ \dist{\node}{\node}{0} }
\and
\Rule{}{ \dist{\node}{\anode}{\infty} }
{\begin{array}{l}
\node \ne \anode \\
\adj(\node) = \emptyset
\end{array}}
\and
\CoAxiom{ \dist{\node}{\anode}{\infty} } \node\ne\anode
\\
\Rule{
	\dist{\node_1}{\anode}{\delta_1}
	\Space \ldots \Space
	\dist{\node_k}{\anode}{\delta_k}
}{ \dist{\node}{\anode}{\delta} }
{\begin{array}{l}
\node \ne \anode \\
\adj(\node) = \{\node_1, \ldots, \node_k\} \ne \emptyset \\
\delta = \min\{w(\node\node_1) + \delta_1, \ldots,  w(\node\node_k) + \delta_k\}
\end{array} }
\end{mathpar}
\end{small}
In order to show that we cannot simply consider the coinductive interpretation of the above inference system, {and therefore} we need coaxioms, let us consider the following weighted graph:
\begin{center}
\begin{tikzcd}[column sep=large, row sep=large]
e             & b \ar[d, bend right, swap, "0"]         &   \\ 
d \ar[r, "2"] & a \ar[u, bend right, swap, "0"] \ar[ul, "5"] & c \ar[l, swap, "1"] 
\end{tikzcd}
\end{center}
If we {would} interpret the inference system coinductively we can derive judgements like $\dist{c}{e}{\delta}$ for any $\delta \in [1..5]$ or $\dist{a}{d}{\delta}$ for any $\delta \in \N \cup \{\infty\}$, as shown in \refToFigure{dist-wrong-trees}.
\begin{figure}
\begin{mathpar}
\Rule{
	\Rule{
		\Rule{}{ \dist{e}{e}{0} }
		\Space
		\Rule{
			\vdots
		}{ \dist{b}{e}{\delta_1} }
	}{ \dist{a}{e}{\delta_1} } { \delta_1\le 5}
}{ \dist{c}{e}{1+\delta_1} }
\and
\Rule{
	\Rule{}{ \dist{e}{d}{\infty} }
	\Space
	\Rule{
		\vdots
	}{ \dist{b}{e}{\delta_2} }
}{ \dist{a}{d}{\delta_2}  }
\end{mathpar}
\caption{{Infinite proof trees} for $\dist{c}{e}{1+\delta_1}$ and $\dist{a}{d}{\delta_2}$}\label{fig:dist-wrong-trees}
\end{figure}
The issue here is the cycle that, having total weight equal to $0$, allows us to build cyclic proofs without increasing the value of $\delta$.
Therefore, the coaxiom is needed to filter out such proofs.
Indeed, it is easy to see that it is not possible to build a finite proof tree  for judgements proved in \refToFigure{dist-wrong-trees}  starting from the coaxiom.

Now we will sketch a proof of correctness.
We can formulate the correctness statement as follows: $\dist{\node}{\anode}{\delta}$ is derivable iff $\delta$ is the minimum of $w(\alpha)$ for all paths $\alpha$ from $\node$ to $\anode$.

\subsubsection*{Completeness}
Let us consider a judgement $\dist{\node}{\anode}{\delta}$ where $\delta$ is the minimum of $w(\alpha)$ for $\alpha$ a path from $\node$ to $\anode$.
{If $\node = \anode$, then $\delta = 0$ and so the judgement is the consequence of the first axiom.
If $\adj(\node) = \emptyset$, then $\delta = \infty$ and so the judgement is the consequence of the second axiom. }
Otherwise, note that $\alpha = \node\beta$ where $\beta$ is a path from a node $\node' \in \adj(\node)$ {to} $\anode$, hence $w(\alpha) = w(\node\node') + w(\beta)$.
Furthermore, if there were another path $\beta'$ from the node $\node'$ to $\anode$ with $w(\beta') < w(\beta)$, then the path $\node\beta'$ would be such that $w(\node\beta') < w(\alpha) = \delta${,} that is absurd, hence $\dist{\node'}{\anode}{w(\beta)}$ is a valid judgement.
Moreover, note that for any other $\node_i \in \adj(\node)$, with $\dist{\node_i}{\anode}{\delta_i}$ a valid judgement, we have $\delta \le w(\node\node_i) + \delta_i$, since, otherwise, we could build a path from $\node$ to $\anode$ with weight smaller than $\delta${,} that is absurd.
Therefore, $\dist{\node}{\anode}{\delta}$ is the consequence of {the} inference rule and {its} premises are valid judgements, and this shows that the specification  is a consistent set.

In order to show the boundedness condition, we have to build a finite proof tree for $\dist{\node}{\anode}{\delta}$ (chosen as before) using coaxioms as axioms.
{If there is no path from $\node$ to $\anode$, then $\node\ne\anode$ and $\delta = \infty$, hence we can apply the coaxiom.
Otherwise, there is a path $\alpha$ from $\node$ to $\anode$ with $w(\alpha) = \delta$.
We proceed by induction on the length of $\alpha$.
If $\|\alpha\| = 0$, then $\node = \anode$ and $\delta = 0$, hence we can apply the first axiom.
If $\|\alpha\| = n+1$, then $\alpha = \node \node' \beta$ with $\|\node'\beta\| = n$, $\node' \in \adj(\node)$, $w(\node'\beta) = \delta'$  and $\delta = w(\node\node') + \delta'$.
By induction hypothesis, we get that $\dist{\node'}{\anode}{\delta'}$ is derivable,
then we get a proof tree for $\dist{\node}{\anode}{\delta}$ by applying
the inference rule with consequence $\dist{\node}{\anode}{\delta}$ and for each $\node'' \in \adj(\node)$ a premise
$\dist{\node''}{\anode}{\infty}$ if $\node'' \ne \node'$ and $\node'' \ne \anode$, which is derivable by the coaxiom,
$\dist{\node''}{\anode}{0}$ if $\node'' = \anode$, which is derivable by the first axiom,
 and $\dist{\node''}{\anode}{\delta'}$ if $\node'' = \node'$, which is derivable by induction hypothesis. }

\subsubsection*{Soundness}
To proove soundness, we first show some useful facts.
\begin{fact}\label{fact:starone}
For all proof trees $t$ for a judgement $\dist{\node}{\anode}{\delta}$, there exists a path $\alpha$ from $\node$ to $\anode$ with $\|\alpha\| = n$ iff there exists a node in $t$ at depth  $n$ labelled by $\dist{\anode}{\anode}{0}$.
\end{fact}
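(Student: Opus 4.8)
The plan is to exploit the very rigid shape that proof trees take in this particular inference system. I would first observe that in any proof tree $t$ for $\dist{\node}{\anode}{\delta}$ the target is fixed: every label has the form $\dist{\node'}{\anode}{\delta'}$ with the same $\anode$, and the rule applicable at a node is entirely determined by its source component $\node'$. Precisely, if $\node' = \anode$ the only applicable rule is the first axiom, so the node is a leaf labelled $\dist{\anode}{\anode}{0}$; if $\node' \ne \anode$ and $\adj(\node') = \emptyset$ it is a leaf labelled $\dist{\node'}{\anode}{\infty}$; and if $\node' \ne \anode$ and $\adj(\node') \ne \emptyset$ the inference rule must be applied, so the children are labelled exactly by $\{\dist{\node''}{\anode}{\delta''} \mid \node'' \in \adj(\node')\}$. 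In particular the domain of $t$ and the source component of each label are completely determined by the graph: $t$ is the unfolding of the graph from $\node$, each branch being cut as soon as it reaches $\anode$ or a sink (a node with empty adjacency).

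From this, reading the source components along the branch from the root $\EString$ to a node $\alpha$ with $\Len{\alpha} = n$ yields a sequence $\node = \node_0, \node_1, \ldots, \node_n$ with $\node_{i+1} \in \adj(\node_i)$, that is, a path of length $n$ starting at $\node$; conversely every such path is realized by a unique branch, since the tree is children injective and at each non-leaf node all neighbours occur among the children. A node at depth $n$ is labelled $\dist{\anode}{\anode}{0}$ exactly when its source is $\anode$, and since the branch would have been cut at any earlier occurrence of $\anode$, the vertex $\anode$ occurs only as the last vertex of the corresponding path. Hence depth-$n$ occurrences of $\dist{\anode}{\anode}{0}$ are in bijection with paths from $\node$ to $\anode$ of length $n$ whose only occurrence of $\anode$ is final.

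I would then phrase the biconditional as an induction on $n$, the statement being universally quantified over all proof trees. For the base $n=0$ both sides reduce to $\node=\anode$: the only node at depth $0$ is the root, whose source is $\node$. For the implication from a node to a path it suffices to read off the branch as above. For the converse, given a path, I would trace it from the root: at each intermediate vertex $\node_i \ne \anode$ one has $\adj(\node_i) \ne \emptyset$, so the inference rule was used and a child with source $\node_{i+1}$ exists, and after $n$ steps we land on a node labelled $\dist{\anode}{\anode}{0}$ at depth $n$. Equivalently, the inductive step relates a depth-$(n{+}1)$ occurrence of $\dist{\anode}{\anode}{0}$ in $t$ to a depth-$n$ occurrence in some child subtree $t\Restrict{\node'}$, a proof tree for $\dist{\node'}{\anode}{\delta'}$ to which the induction hypothesis applies, and then combines the neighbours $\node' \in \adj(\node)$ with prepending the edge $\node\node'$ to the path.

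The step I expect to be the real obstacle is precisely the treatment of $\anode$ as an intermediate vertex. Because a proof tree is truncated as soon as its source reaches $\anode$, the tree only records paths in which $\anode$ appears nowhere before the end; a path revisiting $\anode$ is never continued in $t$. Thus the clean biconditional holds for paths whose only occurrence of $\anode$ is final (equivalently, the paths realized as branches of $t$), which is exactly the relevant notion for distances, since a minimum-weight path to $\anode$ never needs to re-enter $\anode$. Making this reading explicit---rather than allowing arbitrary paths, for which the implication from paths to tree-nodes fails as soon as $\anode$ lies on a nontrivial cycle---is the only delicate ingredient; the remainder is the routine branch/path correspondence forced by the shape of the rules.
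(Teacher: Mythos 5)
You follow essentially the same route as the paper: one induction on the length of the path for the $\Rightarrow$ direction and one on the depth of the occurrence of $\dist{\anode}{\anode}{0}$ for $\Leftarrow$, driven by the observation that the rule applied at a node of the proof tree is forced by the source component of its label. The substantive difference is exactly the point you single out as delicate, and you are right to insist on it: the paper's proof does not handle it. In its $\Rightarrow$ inductive step the paper asserts that $\dist{\node}{\anode}{\delta}$ ``has been derived by applying the inference rule'', which is unjustified when $\node=\anode$: there the first axiom is the only applicable rule, so the tree is a single leaf. Consequently the fact, read literally with the paper's notion of path (an arbitrary walk), fails as soon as $\anode$ lies on a nontrivial cycle --- take $\node=\anode$, so that $t$ is the one-node tree, while paths of positive length from $\anode$ to $\anode$ exist --- and the same failure occurs for paths that revisit $\anode$ internally, since every branch of $t$ is truncated at its first arrival at $\anode$. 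Your restriction to paths whose only occurrence of $\anode$ is final is the correct repair, your two inductions go through verbatim for that class, and, as you observe, the restricted statement still supports the intended application in the soundness proof, because some path from $\node$ to $\anode$ exists iff one exists that does not revisit $\anode$ (truncate at the first occurrence). So your proposal is not merely correct and in line with the paper's method: it patches a genuine, if minor, gap in the paper's own argument.
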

\begin{proof}
Let $t$ be a proof tree for $\dist{\node}{\anode}{\delta}$.
We prove separately the two implications.
\begin{description}
\item[$\Rightarrow$] Let $\alpha$ be a path from $\node$ to $\anode$.
We proceed by induction {on} the length of $\alpha$.
If $\|\alpha\|=0$ (base case), then $\node=\anode$, hence $\dist{\node}{\anode}{\delta}$ has been derived by applying the first axiom, and this implies $\delta = 0$.
Therefore, the root of $t$ (at depth $0$) is labelled by $\dist{\anode}{\anode}{0}$.
If $\|\alpha\| = n+1$ (inductive step), then $\alpha = \node \beta$ where {$\beta$} is a path from a node $\node'$ {to} $\anode$ of length $n$.
Therefore, $\dist{\node}{\anode}{\delta}$ has been derived by applying the inference rule, hence there is a direct subtree of $t$ rooted in $\dist{\node'}{\anode}{\delta'}$, where, by induction hypothesis, $\dist{\anode}{\anode}{0}$ occurs at depth $n$.
Thus, in $t$ that judgement occurs at depth $n+1$ as needed.
\item[$\Leftarrow$] We proceed by induction on the depth $n$.
If $\dist{\anode}{\anode}{0}$ occurs at depth $0$ (base case), then it is the root of $t$, hence $\node = \anode$  and the searched path is the singleton path $\anode$.
If it occurs at depth $n+1$ (inductive step), then the depth of $t$ is greater than $0$, hence $\dist{\node}{\anode}{\delta}$ has been derived by applying the inference rule.
Therefore, $\dist{\anode}{\anode}{0}$ belongs to a direct subtree $t'$ of $t$ rooted in $\dist{\node'}{\anode}{\delta'}$ with $\node' \in \adj(\node)$, and it occurs in $t'$ at depth $n$.
Thus, by induction hypothesis, there is a path $\beta$ from $\node'$ to $\anode$ of length $n$, hence the path $\node \beta$ of length $n+1$ connects $\node$ to $\anode$.
\qedhere
\end{description}
\end{proof}

\begin{fact}\label{fact:startwo}
For all proof trees $t$, $t$ is rooted in $\dist{\node}{\anode}{\infty}$ iff all nodes in $t$ are of shape $\dist{\node'}{\anode}{\infty}$.
\end{fact}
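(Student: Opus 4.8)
The plan is to prove the two implications separately, the reverse one being immediate: if every node of $t$ has shape $\dist{\node'}{\anode}{\infty}$, then in particular the root does, so $t$ is rooted in a judgement $\dist{\node}{\anode}{\infty}$. All the work lies in the forward direction.

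For $\Rightarrow$, suppose $t$ is rooted in $\dist{\node}{\anode}{\infty}$. I would first note that the target node is preserved throughout $t$: among the three rules of $\is$ only the last has premises, and there the conclusion $\dist{\node}{\anode}{\delta}$ and each premise $\dist{\node_i}{\anode}{\delta_i}$ share the same second component; hence every node of $t$ has second component $\anode$. It then remains to show that every node has third component $\infty$. The key local observation is that a node $\alpha$ labelled $\dist{\node}{\anode}{\infty}$ which is not a leaf must be the consequence of the last rule, since the first axiom forces third component $0\ne\infty$ and the second axiom has no premises. In that case $\infty=\delta=\min\{w(\node\node_1)+\delta_1,\ldots,w(\node\node_k)+\delta_k\}$ with $\adj(\node)=\{\node_1,\ldots,\node_k\}\ne\emptyset$, and since each weight $w(\node\node_i)$ is a natural number, this minimum over a non-empty set equals $\infty$ only if every $\delta_i=\infty$. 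Thus the property of having third component $\infty$ propagates from any internal node to all its children.

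I would then conclude by induction on the depth $\Len{\alpha}$ of a node $\alpha\in\dom(t)$: the root (depth $0$) has third component $\infty$ by hypothesis, and if $\alpha=\beta a\in\dom(t)$, then by induction hypothesis $\beta$ has third component $\infty$, and since $\beta$ has a child it is internal, so by the local observation $\alpha$ inherits third component $\infty$. The point requiring care --- rather than a deep obstacle --- is that $t$ may be an infinite, non-well-founded proof tree, so structural induction over $t$ is unavailable; inducting instead on the node depth $\Len{\alpha}$, which is finite even for non-well-founded trees, circumvents this, the only arithmetic ingredient being that a minimum of finitely many extended naturals over a non-empty set is $\infty$ iff all arguments are.
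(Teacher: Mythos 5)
Your proof is correct and takes essentially the same route as the paper's: both argue by induction on the (always finite) depth of a node, using the local observation that an internal node labelled with $\infty$ can only be the consequence of the one rule with premises, whose side condition --- a minimum over a non-empty set of finite weights plus the $\delta_i$ --- forces every premise to carry $\infty$ as well. Your explicit remarks on the preservation of the second component and on why depth induction is needed for non-well-founded trees merely spell out what the paper leaves implicit.
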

\begin{proof}
Consider a proof tree $t$.
The implication $\Leftarrow$ is trivial.
Let us prove the other one.
{We can rephrase the thesis as follows: if the root of $t$ is $\dist{\node}{\anode}{\infty}$, then, for all $n \in \N$, all nodes  of $t$ at depth $n$ have shape $\dist{\node'}{\anode}{\infty}$.
Thus, we can proceed by induction on the depth $n$.
If the depth is $0$ (base case), then there is only one node at depth $0$, which is the root $\dist{\node}{\anode}{\infty}$, hence the thesis follows immediately by hypothesis.
If the depth is $n+1$ (inductive step), then consider a node $\dist{\node'}{\anode}{\delta}$ at depth $n+1$.
By definition, it is the child of a node at depth $n$, that, by induction hypothesis, is of shape $\dist{\node''}{\anode}{\infty}$.
Therefore, the inference rule has been applied, and, since the conclusion is $\dist{\node''}{\anode}{\infty}$, all premises $\dist{\node'_i}{\anode}{\delta_i}$ with $i \in \{1, \ldots, k\}$ are such that $\delta_i = \infty$, since $\min\{\delta_1, \ldots, \delta_k\} \ge \infty$.
Then, by construction, we have $\node' = \node'_j$ and $\delta = \delta_j$ for some $j \in \{1, \ldots, k\}$, hence we get the thesis.}
\end{proof}

\begin{fact}\label{fact:starthree}
If $\dist{\node}{\anode}{\delta}$ with $\delta \in \N$ has an approximated proof tree (of any level), then there exists a path $\alpha$ from $\node$ to $\anode$ such that $w(\alpha) = \delta$.
\end{fact}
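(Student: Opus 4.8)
The plan is to argue by well-founded induction on the approximated proof tree $t$ that witnesses $\dist{\node}{\anode}{\delta}$. This is legitimate because, by \refToDefinition{approx-trees}, every approximated proof tree is a well-founded proof tree in $\Extended{\is}{\coaxioms}$, so the direct-subtree relation on $t$ is well-founded and Noetherian induction applies. The first thing I would record is a decisive consequence of the hypothesis $\delta \in \N$: since both the coaxiom $\dist{\node}{\anode}{\infty}$ and the axiom $\dist{\node}{\anode}{\infty}$ (the one with $\adj(\node) = \emptyset$) carry the value $\infty$, the root of $t$ can be an instance of neither. This already prunes the derivation to two possible shapes for the last applied rule.

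Next I would dispatch these two cases. If the root is the first axiom $\dist{\node}{\node}{0}$, then $\node = \anode$ and $\delta = 0$, and the singleton path $\node$ has weight $0 = \delta$; this is the base case. Otherwise the root is derived by the inference rule, so $\node \ne \anode$, $\adj(\node) = \{\node_1, \ldots, \node_k\} \ne \emptyset$, the children carry $\dist{\node_i}{\anode}{\delta_i}$, and $\delta = \min\{w(\node\node_1) + \delta_1, \ldots, w(\node\node_k) + \delta_k\}$. Here I would pick an index $j$ attaining the minimum, so $\delta = w(\node\node_j) + \delta_j$, observe that the direct subtree rooted in $\dist{\node_j}{\anode}{\delta_j}$ is again an approximated proof tree (being a well-founded proof tree in $\Extended{\is}{\coaxioms}$ it lies in $\T_0$; one may instead cite \refToProposition{approx-subtree}), apply the induction hypothesis to obtain a path $\beta$ from $\node_j$ to $\anode$ with $w(\beta) = \delta_j$, and finally prepend the edge $\node\node_j$ to get $\alpha = \node\beta$ from $\node$ to $\anode$ with $w(\alpha) = w(\node\node_j) + \delta_j = \delta$.

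The step I expect to be the real crux — and where the coaxiom earns its keep — is justifying that the minimizing premise $\dist{\node_j}{\anode}{\delta_j}$ again carries a \emph{finite} $\delta_j$, so that the induction hypothesis, whose conclusion is an honest path rather than an $\infty$ witness, can be invoked. This is immediate once I note that $w(\node\node_j) \in \N$ and $\delta = w(\node\node_j) + \delta_j$ is finite, which forces $\delta_j \in \N$; but it is precisely this finiteness, inherited from $\delta \in \N$, that blocks the zero-weight cyclic derivations admitted by the bare coinductive interpretation and makes the induction bottom out at the first axiom.
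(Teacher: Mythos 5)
Your proof is correct and follows essentially the same route as the paper's: induction on the well-founded structure of the approximated proof tree, with the base case forced (by $\delta \in \N$) to be the first axiom and the inductive step applying the induction hypothesis to the minimizing premise of the inference rule. The only difference is that you make explicit the propagation of finiteness ($\delta_j \in \N$ follows from $\delta = w(\node\node_j) + \delta_j \in \N$), a step the paper's proof leaves implicit when it invokes the induction hypothesis.
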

\begin{proof}
Since approximated proof trees are well-founded by \refToDefinition{approx-trees}, we can proceed by induction on the tree structure.
If the tree has a single node (base case), then, since $\delta \in \N$, we have necessarily applied the first axiom, hence $\node = \anode$ and the searched path is the trivial one, which has weight 0.
If the tree is compound (inductive step), we have necessarily applied the inference rule, hence there is a direct subtree rooted in $\dist{\node'}{\anode}{\delta'}$ with $\node' \in \adj(\node)$ and $\delta = w(\node\node') + \delta'$.
Then, by induction hypothesis, there is a path $\alpha'$ from $\node'$ to $\anode$ with $w(\alpha') = \delta'$, hence the path $\alpha = \node \alpha'$ from $\node$ to $\anode$ is such that $w(\alpha) = w(\node\node') + w(\alpha') = \delta$, as needed.
\end{proof}

\begin{fact}\label{fact:starfour}
If $\dist{\node}{\anode}{\delta}$ has an approximated proof tree of level $n$, then $\delta \le w(\alpha)$ for all paths $\alpha$ from $\node$ to $\anode$ with $\|\alpha\| \le n$.
\end{fact}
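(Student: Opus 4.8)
The plan is to argue by induction on the level $n$, keeping a fixed but arbitrary path $\alpha$ from $\node$ to $\anode$ with $\|\alpha\| \le n$ and showing $\delta \le w(\alpha)$. The key structural fact, read off directly from \refToDefinition{approx-trees}, is that whenever $n \ge 1$ the root of an approximated proof tree of level $n$ is the consequence of a genuine rule of $\is$ (coaxioms may occur only at depth $\ge n \ge 1$, hence never at the root), and its direct subtrees are approximated proof trees of level $n-1$, exactly as \refToProposition{approx-subtree} gives for direct subtrees. So the induction proceeds by a case analysis on which rule concludes the root.

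For the base case $n = 0$ the only path with $\|\alpha\| \le 0$ has $\|\alpha\| = 0$, which forces $\node = \anode$ and $w(\alpha) = 0$; since neither the second axiom, nor the inference rule, nor the coaxiom can conclude a judgement of shape $\dist{\node}{\node}{\delta}$ (all three require $\node \ne \anode$), only the first axiom applies and $\delta = 0 \le w(\alpha)$, the statement being vacuous when $\node \ne \anode$. In the inductive step, with the root of the level-$(n+1)$ tree concluded by a rule of $\is$, I would dispose of the two easy rules first. If it is the first axiom then $\delta = 0$, and $\delta \le w(\alpha)$ holds because $w$ takes values in $\N$. If it is the second axiom then $\adj(\node) = \emptyset$ and $\node \ne \anode$, so there is no path from $\node$ to $\anode$ at all and the claim is vacuously true.

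The substantive case is when the root is concluded by the inference rule, so $\adj(\node) = \{\node_1, \ldots, \node_k\} \ne \emptyset$, the premises are $\dist{\node_i}{\anode}{\delta_i}$ with $\delta = \min\{w(\node\node_1) + \delta_1, \ldots, w(\node\node_k) + \delta_k\}$, and each premise is proved by a direct subtree, which is of level $n$. Since the rule requires $\node \ne \anode$, the path satisfies $\|\alpha\| \ge 1$, so I would split off its first edge, writing $\alpha = \node\beta$ with $\beta$ a path from some $\node_j \in \adj(\node)$ to $\anode$ and $\|\beta\| = \|\alpha\| - 1 \le n$. Applying the induction hypothesis to the subtree proving $\dist{\node_j}{\anode}{\delta_j}$ and to the shorter path $\beta$ yields $\delta_j \le w(\beta)$, whence $\delta \le w(\node\node_j) + \delta_j \le w(\node\node_j) + w(\beta) = w(\alpha)$, as required.

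The main point to get right is precisely this last matching step: one must select the branch $\node_j$ dictated by the \emph{first edge} of $\alpha$ — not an arbitrary premise, and in particular not necessarily the branch realizing the minimum in the side condition — so that the induction hypothesis is applied to a path that both starts at the correct adjacent node and is genuinely shorter ($\|\beta\| \le n$). Everything else (the decomposition $w(\alpha) = w(\node\node_j) + w(\beta)$ and the finiteness of the quantities involved, since $\delta_j \le w(\beta) < \infty$) is routine.
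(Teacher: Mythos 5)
Your proof is correct and takes essentially the same route as the paper's: induction on the level, case analysis on the rule concluding the root (the coaxiom being excluded there for level $\ge 1$), decomposition of the path along its first edge, and application of the induction hypothesis to the premise matching that edge, followed by the same min-inequality chain. The only cosmetic difference is that the paper disposes of the $\node = \anode$ case once and for all before starting the induction, whereas you handle it inside the base case and the first-axiom branch of the inductive step.
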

\begin{proof}
First note that, if $\node = \anode$, then the only applicable rule is the first axiom, hence $\delta = 0$ and the thesis trivially holds, since 0 is the least possible weight.
So, let us assume $\node \ne \anode$ and proceed by induction on the level $n$.
If the level is $0$ (base case), then there is no path from $\node$ to $\anode$ with length $0$, hence we have to show  $\delta \le \infty$, which is always true.
If the level is $n+1$ (inductive step), then, since the level is greater than 0, we have applied either the second axiom or the inference rule.
In the former case, there is no path from $\node$ to $\anode$ since $\adj(\node) = \emptyset$, hence the thesis trivially holds.
In the latter case, assume that $\adj(\node) = \{\node_1, \ldots, \node_k\}$, hence the premises of the rule  are $\dist{\node_i}{\anode}{\delta_i}$ for $i \in \{1, \ldots, k\}$ and  $\delta = \min \{w(\node \node_1) + \delta_1, \ldots, w(\node\node_k) + \delta_k\}$.
Now, consider a path $\alpha$ from $\node$ to $\anode$ with $\|\alpha\| \le n+1$, hence $\alpha = \node \alpha_i$ with $\alpha_i$ a path from $\node_i$ to $\anode$ for some $\node_i \in \adj(\node)$ with $\|\alpha_i\| = n$.
Therefore, $w(\alpha) = w(\node\node_i) + w(\alpha_i)$, and, by induction hypothesis, $\delta_i \le w(\alpha_i)$, hence we get $\delta \le w(\node\node_i) + \delta_i \le w(\node\node_i) + w(\alpha_i) = w(\alpha)$ as needed.
\end{proof}

{To prove soundness,  we have to show that each derivable judgement is valid.
For judgements of shape $\dist{\node}{\anode}{\infty}$ the thesis follows immediately from \refToFact{starone} and \refToFact{startwo}.
Hence, let us assume $\delta \in \N$.
By \refToCorollary{tree2}, the judgement has an approximated proof tree for each level $n \in \N$.
Hence, by \refToFact{starfour}, $\delta \le w(\alpha)$ for all paths $\alpha$ from $\node$ to $\anode$ with $\|\alpha\| \le n$ for each $n \in \N$, that is, simply $\delta \le w(\alpha)$ for all paths $\alpha$ from $\node $ to $\anode$.
Furthermore, by \refToFact{starthree}, $\delta = w(\beta)$ for some path $\beta$ from $\node$ to $\anode$, thus $\dist{\node}{\anode}{\delta}$ is valid. }

\medskip

The notion of distance is tightly related to paths in a graph $G$.
Actually, from the above proofs, it {is} easy to see that a proof tree for a judgement $\dist{\node}{\anode}{\delta}$ explores all possible paths from $\node$ to $\anode$ in the graph in order to compute the distance.
Therefore, in some sense, it also {finds} the shortest path from $\node$ to $\anode$.
Hence, with a slight variation of the inference system  for the distance, we can get an inference system for the judgement $\minPath{\node}{\anode}{\alpha}{\delta}$ stating that $\alpha$ is the shortest path from $\node$ to $\anode$ with weight $\delta$.
We add to paths a special value $\bot$ that represents the absence of paths between two nodes, with the assumption that $\node \bot = \bot$.
The definition is reported in \refToFigure{min-path}.
\begin{figure}
\begin{small}
\begin{mathpar}
\Rule{}{ \minPath{\node}{\node}{\node}{0} }
\and
\Rule{}{ \minPath{\node}{\anode}{\bot}{\infty} }
{\begin{array}{l}
\node \ne \anode \\
\adj(\node) = \emptyset
\end{array} }
\and
\CoAxiom{ \minPath{\node}{\anode}{\bot}{\infty} }
\\
\Rule{
	\minPath{\node_1}{\anode}{\alpha_1}{\delta_1}
	\Space \ldots \Space
	\minPath{\node_k}{\anode}{\alpha_k}{\delta_k}
}{ \minPath{\node}{\anode}{\node \alpha_i}{w(\node\node_i) + \delta_i} }
{ \begin{array}{l}
\scriptstyle
\node \ne \anode \\
\scriptstyle
\adj(\node) = \{\node_1, \ldots, \node_k\} \ne \emptyset \\
\scriptstyle
i = \argmin\{w(\node\node_1) + \delta_1, \ldots,  w(\node\node_k) + \delta_k\}
\end{array} }
\end{mathpar}
\end{small}
\caption{Inference system with coaxioms for $\minPath{\node}{\anode}{\alpha}{\delta}$.}\label{fig:min-path}
\end{figure}

\subsection{Big-step operational semantics with divergence}\label{sect:op-sem}
It is well-known that divergence cannot be captured by the big-step operational semantics of a programming language when
semantic rules are interpreted inductively (that is, in the standard way)~\cite{LeroyGrall09,Ancona12,Ancona14}.
When rules are interpreted coinductively some partial result can be obtained under suitable hypotheses,
but
a practical way to capture divergence  with a big-step operational semantics
is to introduce two different forms of judgement~\cite{CousotCousot92,LeroyGrall09}: one corresponds to the standard big-step
evaluation relation, and is defined inductively, while the other one captures divergence, and is defined coinductively
in terms of the inductive judgement, thus requiring stratification.

With coaxioms a unique judgement can be defined in a more direct and compact way.
Here we show  how this is possible for the standard call-by-value operational semantics of the \LambdaCalculus, but other and more complex applications of coaxioms to model infinite behaviour of programs can be found in~\cite{AnconaDZ17oopsla, AnconaDZ18}.
{For soundness and completeness proofs of this example we refer to~\cite{AnconaDZ17oopsla}.}

\begin{figure}
\begin{center}
Syntax of terms and values
\begin{mathpar}
e ::= v \mid x \mid e\ e
\and
v ::= \lambda x.e
\and
\infv ::= v \mid \infty
\end{mathpar}
\\[2ex]
Semantic rules
\begin{mathpar}
\CoAxiomName
{coax}
{\eval{e}{\infty}}
\and
\RuleName
{val}
{}
{\eval{v}{v}}
\and
\RuleName
{app}
{\eval{e_1}{\lambda x.e}\quad\eval{e_2}{v}\quad\eval{\subs{e}{x}{v}}{\infv}}
{\eval{e_1\ e_2}{\infv}}
\\
\RuleName
{l-inf} 
{\eval{e_1}{\infty}}
{\eval{e_1\ e_2}{\infty}}
\and
\RuleName
{r-inf} 
{\eval{e_1}{v}\quad\eval{e_2}{\infty}}
{\eval{e_1\ e_2}{\infty}}
\end{mathpar}
\end{center}
\caption{Call-by-value big-step semantics of $\lambda$-calculus with divergence}\label{fig:lambda}
\end{figure}
\refToFigure{lambda} defines syntax, values, and semantic rules. The meta-variable $v$ ranges over standard values, that is,
lambda abstractions, while $\infv$ includes also divergence, represented by $\infty$.
The evaluation judgement has the general
shape $\eval{e}{\infv}$, meaning that either $e$ evaluates to a value $v$ (when $\infv\neq\infty$) or diverges (when $\infv=\infty$).

For what concerns the semantic rules, only a coaxiom is needed, stating that every expression may diverge.
This ensures that
$\infty$ is the only allowed outcome for the evaluation of an expression which diverges; this can only happen
when the corresponding derivation tree is infinite.
Rule (val) is standard. Rule (app) deals with the evaluation of application when both expressions $e_1$ and $e_2$
do not diverge; the meta-variable $v$ is required for the judgement \eval{e_2}{v} to guarantee
convergence of $e_2$, while $\infv$ is used for the result of the whole application, since
the evaluation of the body of the lambda abstraction could diverge. As usual, $\subs{e}{x}{v}$ denotes capture-avoiding substitution
modulo $\alpha$-renaming. Rules (l-inf) and (r-inf) cover the cases when either $e_1$ or $e_2$ diverges when trying to evaluate application,
assuming that a left-to-right evaluation strategy has been imposed.

As a paradigmatic example, we consider the expression  $e_\Delta=(\lambda x.x\ x) \lambda x.x\ x$ and show that the only judgement derivable for it is  \eval{e_\Delta}{\infty}. 
To this end, we first disregard the coaxiom and exhibit an infinite derivation tree
for the judgement \eval{e_\Delta}{\infv}, which is valid for all $\infv$:

\vspace{-1em}
\begin{scriptsize}
\[
\RuleName
{app}
{
  \RuleName
  {val}
  {}
  {\eval{\lambda x.x\ x}{\lambda x.x\ x}}
  \quad
  \RuleName
  {val}
  {}
  {\eval{\lambda x.x\ x}{\lambda x.x\ x}}
  \quad
  \RuleName
  {app}
  {\vdots}
  {\eval{\subs{(x\ x)}{x}{\lambda x.x\ x}}{\infv}}}
{\eval{\subs{(x\ x)}{x}{\lambda x.x\ x}=e_\Delta}{\infv}}
\]
\end{scriptsize}

In this particular case the derivation tree is also regular, but of course there are examples of divergent computations
whose derivation tree is not regular. The vertical dots indicate that the derivation continues with the same repeated pattern.
This derivation shows that the coinductive interpretation of the rules in \refToFigure{lambda} has a non-deterministic behaviour,
as happens for the coinductive interpretation of the standard big-step semantics rules~\cite{LeroyGrall09,Ancona12}.
However, here the coaxiom plays a crucial role:
it allows us to filter out all undesired values, leaving only the value $\infty$, which represents  divergence. 
Indeed, if we take into acount the coaxiom, 
we have also to construct finite derivations, where the coaxiom can be used as an axiom. 

For the expression $e_\Delta$, we can build such finite derivations only for the judgement \eval{e_\Delta}{\infty}. 
More precisely, we can easily prove by induction that, if \eval{e_\Delta}{\infv} has a finite proof tree, then $\infv = \infty$. 
Indeed, there are only two cases: 
if we have applied the coaxiom, the thesis is immediate, and, if we have applied the rule (app), then there is a premise \eval{e_\Delta}{\infv}, hence $\infv = \infty$ holds by induction hypothesis. 

As a consequence, in the inference system with the coaxiom, the only derivable judgement for $e_\Delta$ is 
\eval{e_\Delta}{\infty}.


\section{From coaxioms to corules}\label{sect:corules}

As already mentioned, the notion of coaxiom presented in this work has been inspired by operational models  for object-oriented and logic programming proposed in~\cite{AnconaZucca12, AnconaZucca13, Ancona13}.
{Intuitions behind such models lead us to develop a theory where rules added to  an inference system in order to control its semantics have no premises.
In addition, this restriction to coaxioms (rules with no premises) is also motivated by the fact that, in all examples we have considered, they are enough to get the intended semantics.}

However, as we will briefly sketch in this section, all the notions presented until now smoothly generalize to the case where  we can add to an inference system arbitrary rules, with a meaning analogous to the one of coaxioms.
For this reason such rules are named \emph{corules}, and are denoted in the same way as coaxioms.
Furthermore, this extension seems to be needed to deal with more complex examples like those we consider in~\cite{AnconaDZ18}.

Let us introduce the concept more formally.

\begin{defi}\label{def:corules}
An \emph{inference system with corules} is a pair $\Pair{\is}{\cois}$ where $\is$ and $\cois$ are inference systems.
Elements of $\cois$ are called \emph{corules}.
\end{defi}

The semantics is defined in two steps in analogy with coaxioms:
\begin{enumerate}
\item first we consider the inference system $\is \cup \cois$  and take its inductive interpretation $\Ind{\is\cup\cois}$
\item then, we take the coinductive interpretation of $\is$  restricted to rules having consequence in $\Ind{\is\cup\cois}$
\end{enumerate}
Using a notation similar to the one used for coaxioms we have that
\[
\Generated{\is}{\cois} = \CoInd{\Restricted{\is}{\Ind{\is\cup\cois}}}
\]

It is easy to see that an inference systems with coaxioms is a inference system with corules where all corules have no premises.

As we have done for coaxioms, in order to characterize $\Generated{\is}{\cois}$ as a fixed point of $\Op{\is}$, we study the analogous construction in the general framework of complete lattices.

Consider two monotone functions $\fun{\function, \afunction}{\lattice}{\lattice}$ defined on a complete lattice $\Pair{\lattice}{\order}$.
We can consider the monotone function $\function\join \afunction$ defined as the pointwise join of $\function$ and $\afunction$.
Then, we define the \emph{bounded fixed point of $\function$ generated by $\afunction$}, as $\Generated{\function}{\afunction} =  \ker_\function(\lfp(\function \join \afunction))$.
This  is a fixed point of $\function$ thanks to \refToProposition{ck-fp}, since $\lfp(\function \join \afunction)$ is  the least (pre-)fixed point of $\function \join \afunction$ and it is easy to check that all pre-fixed points of $\function \join \afunction$ are pre-fixed point of $\function$. 

Note that, in the case where $\afunction$ is  the constant function $x \mapsto \coaxioms $, we get $\function\join \afunction = \Extended{\function}{\coaxioms}$, hence we have $\Generated{\function}{\afunction} = \Generated{\function}{\coaxioms}$, that is, this construction is a generalization of the bounded fixed point generated by an element.

Then, it is easy to see that $\Generated{\is}{\cois} = \Generated{\Op{\is}}{\Op{\cois}}$, because $\Op{\is \cup \cois} = \Op{\is} \cup \Op{\cois}$.
Therefore $\Generated{\is}{\cois}$ is really a fixed point of $\Op{\is}$ as expected.

On the proof-theoretic side all notions smoothly generalize to this case, indeed, we have that $\judg \in \Generated{\is}{\cois}$ if and only if  there is an arbitrary proof tree in $\is$  for $\judg$, whose nodes have a well-founded derivation in $\is\cup\cois$.
Also the construction of approximated proof trees is the same,  only the starting point changes: this time we start from the set of well-founded proof tree in $\is \cup \cois$.

Proof techniques introduced for coaxioms can be applied also to this more general case,
in particular the bounded coinduction principle can be formulated as follows:
if $\Spec \subseteq \universe$ and
\begin{enumerate}
\item $\Spec \subseteq \Ind{\is\cup \cois}$  and
\item $\Spec \subseteq \Op{\is}(\Spec)$
\end{enumerate}
then, $\Spec \subseteq \Generated{\is}{\cois}$.

At this point a natural question arises: are corules more expressive than coaxioms?
Here more expressive means that they are able to capture more fixed points than coaxioms.
However, considering monotone functions $\fun{\function, \afunction}{\lattice}{\lattice}$, we know from \refToProposition{bfp-fun}, that all fixed points  of $\function$ can be expressed as bounded fixed points generated by themselves, that is, if $z \in \lattice$ is a fixed point, then $z = \Generated{\function}{z}$.
Therefore, since $\Generated{\function}{\afunction}$ is a fixed point of $\function$, there must be $z \in \lattice$, such that $\Generated{\function}{\afunction} = \Generated{\function}{z}$, in particular we can choose $z = \lfp(\function \join \afunction)$.

Therefore at this level, adding corules does not change the expressive power of our framework.
However, it seems that there are cases where corules are fundamental for expressing some definitions, as in~\cite{AnconaDZ18}.
We think that this apparently inconsistency is due to the fact that, in common practice, definitions are expressed through a finite set of finitary meta-rules, while the theory is developed for plain rules (with no variables), and the translation from meta-rules to rules is always left implicit.
Hence, in order to better understand the relationship between coaxioms and corules we need a formal treatment of definitions given by meta-rules, like in~\cite{MomiglianoTiu03, BrotherstonSimpson11}, that is matter of further work.


\section{Related work}\label{sect:related}

Inference systems~\cite{Aczel77,Sangiorgi11} are widely adopted to formally define operational semantics, language translations, type systems,
subtyping relations, deduction calculi, and many other relevant judgements.
Although inference systems have been introduced to deal  with inductive definitions,
in the last two decades several authors have focused on their coinductive interpretation.

Cousot and Cousot~\cite{CousotCousot92}  define divergence of programs by coinductive interpretation of an inference system that extends the big-step operational semantics.
The same approach is followed by other authors~\cite{HughesMoran95, Schmidt98, LeroyGrall09}.
Leroy and Grall~\cite{LeroyGrall09} analyse two kinds of coinductive big-step operational semantics for the call-by-value \LambdaCalculus, and study their relationships with the small-step and denotational semantics, and their suitability for compiler correctness proofs.
Coinductive big-step semantics is used as well to {reason about} cyclic objects stored
in memory~\cite{MilnerTofte91,LeroyRouaix98}, and  to prove type soundness in Java-like languages~\cite{Ancona12,Ancona14}.
Coinductive inference systems are also considered in the context of type analysis and subtyping for object-oriented languages~\cite{AnconaLagorio09, AnconaCorradi14}.

On the programming language side, coinduction is adopted to  provide primitives helping the programmer dealing with infinite objects.
Examples can be found  both in logic programming~\cite{SimonMBG06, SimonBMG07,KomendantskayaJ15} and in functional programming~\cite{Hagino87,BirdWadler88}.
Recently, other approaches have been proposed to support coinduction in a more flexible way.
We can find contributions  in all most popular paradigms:  logic~\cite{Ancona13,MantadelisRM14}, functional~\cite{JeanninKS13, JeanninKS17} and object-oriented~\cite{AnconaZucca12, AnconaZucca13}.
As a consequence, these proposals are more focused on operational aspects, and their corresponding implementation issues.
{Here we discuss those approaches most closely related to coaxioms.

The logic paradigm naturally supports coinduction.
Indeed, a logic program, like an inference system, \EZ{has an associated} monotone function on a suitable power-set lattice, and its declarative semantics is defined as a fixed point of such function~\cite{Lloyd87}; hence, considering the greatest fixed point enables coinduction.
To support non-well-founded objects, the semantics is defined in the power-set of the \emph{complete Herbrand basis}, which is the set of all ground atoms built on finite and infinite terms for the program signature~\cite{Lloyd87}.

In order to support at the same time both inductive and coinductive predicates, in~\cite{SimonMBG06,SimonBMG07} a stratified semantics is proposed: essentially the semantics is well-defined only if there is no mutual dependency beween inductive and coinductive predicates.

On the operational side, two sound resolution strategies have been proposed: \emph{CoSLD resolution}~\cite{SimonMBG06,SimonBMG07,AnconaDovier15} and \emph{structural resolution}~\cite{KomendantskayaJ15}.
The former strategy represents infinite objects through regular terms, that is, terms that can be represented, through unification, as a finite set of syntactic equations~\cite{AdamekMV06b}, hence only cyclic objects are supported.
Then, the resolution is essentially based on a loop detection mechanism and accepts all cyclic derivations.
On the other hand, the latter adopts a lazy approach, working with finite approximation of infinite objects, hence it  requires programs to be \emph{productive}, in order to ensure it is able to construct such finite approximation.
Differently from CoSLD, structural resolution can accept also non-cyclic derivations, but cannot deal with non-productive programs, while coSLD can.

Since coSLD aims to capture all coinductively derivable atoms, it accepts all cyclic proofs, but in some cases this behaviour is too rigid.
To allow  more flexible behaviours, in~\cite{Ancona13, MantadelisRM14} other operational models are provided.}
In particular, the notion of \lstinline!finally! clause, introduce by Ancona~\cite{Ancona13}, allows the programmer to specify a fact that {should be} resolved when a cycle is detected, instead of simply accepting the atom.
In this way,  predicates that are neither purely inductive nor purely coinductive can be defined and used  in a logic program.

The notion of \lstinline!finally! clause has inspired coaxioms {as described} in the introduction.
However, despite the existing strong correlation with coaxioms, the semantics of \lstinline!finally! clauses does not always coincide with a fixed point of the inference operator induced by the program. This is a {relevant} difference with coaxioms, that, instead, always generate a fixed point.

\Rev{To overcome this issue of the \lstinline!finally! clause}, we have designed an extension of coinductive logic programming supporting coaxioms, in this context called \emph{cofacts}~\cite{AnconaDZ17coalp}.
Here the declarative semantics is based on the bounded fixed point, and the resolution procedure is a combination of standard SLD and coSLD resolutions: when the latter discovers a loop, then a standard SLD resolution is triggered, which takes into account also cofacts.
We have also implemented a prototype meta-interpreter in SWI-Prolog\footnote{Available at \url{http://www.disi.unige.it/person/AnconaD/Software/co-facts.zip}}.

In the object-oriented paradigm cyclic objects are usually managed relying on imperative features, thus the language does not provide any native support for computing with such objects.
The programmer has to implement ad-hoc machinery to deal with cyclic objects in an appropriate way, and this is often involved and error-prone.

In order to overcome these difficulties, Ancona and Zucca~\cite{AnconaZucca12, AnconaZucca13} have proposed an extension of Featherweight Java (\FJ)~\cite{IgarashiPW99}: corecursive Featherweight Java (\coFJ).
This is a purely functional core calculus for Java-like languages supporting cyclic objects and corecursive methods.

Cyclic objects are represented by syntactic equations.
They cannot be directly written by the programmer,
but only built during the execution by corecursive methods.
Analogously to the coSLD, each corecursive  call is evaluated in an environment associating to already encountered calls a unique label.
If the call is in the environment, then the associated label is returned as result, otherwise a fresh label is associated to the current call, and the method body is evaluated in the extended environment;
finally, an equation for this new label is returned as result.

To make the mechanism more flexible, like in the logic paradigm,  the authors introduce a \lstinline!with! clause, which is an expression that will be evaluated when a cycle is detected, instead of simply returning the label, and this provides support for methods that are neither purely recursive nor purely corecursive.
Again like {in} the logic paradigm, this feature has inspired coaxioms and is strongly related to them, however the semantics of \lstinline!with!  clauses may not always correspond to a fixed point, while coaxioms always generate a fixed point.


\section{Conclusions}\label{sect:conclu}

Inference systems are a general and versatile framework that is well-known and widely used.
{They allow} to define several kinds of judgements{,} from operational semantics to type systems, from deduction calculi to language translations.
They can also serve as theory to reason about recursive definitions, providing a rigorous semantics in a quite simple way.

However, standard inference systems suffer from a strong rigidity: their interpretation is dichotomous, either inductive (the least one) or coinductive (the greatest one), but what can we do if we need something in the middle?
One may {wonder whether} this is a real issue, but the examples we have provided shows that there are {many interesting} cases in which we need a fixed point that is neither the least nor the greatest one, and {standard} inference system{s} are not able to provide such flexibility.
Therefore{,} in this paper we {have proposed} an extension of inference systems, aimed to  provide more flexibility  in {such} cases, without affecting standard behaviour.

The core of this paper is the concept of inference system with coaxioms, introduced in \refToSection{coaxioms}: a generalized notion of inference system, that subsumes the standard one, supporting  more flexible interpretations.

Our work {originates} from the operational models, closely related to each other,  introduced by Ancona and Zucca~\cite{AnconaZucca12, AnconaZucca13} and Ancona~\cite{Ancona13}.
As already discussed, these operational semantics {introduce} some flexibility for interpreting predicates and functions recursively defined on non-well-founded data types.
The initial {objective of our work} was {to provide} a more abstract semantics for such operational models, hence we developed a first model in~\cite{AnconaDZ16}  focused on this aim.
However{,} the result was not satisfactory, since we managed to capture the semantics of a restricted class of definitions, with a model that was quite tricky.

Then, we decided to take a more abstract perspective, considering inference systems as reference framework.
In this context we discovered the notion of coaxioms, that convinced us to be the right one.
We firstly proposed {it} in~\cite{AnconaDZ17esop} and discussed it in more detail in the master thesis~\cite{Dagnino17}, from which this paper is extracted.

In order to finely describe coaxioms, we have generalized the meta-theory of inference systems by
providing two equivalent semantics, one based on fixed points in a complete
lattice, and the other on the notion of proof tree.

{On the model-theoretic side (\refToSection{coaxioms-model}), we have defined the \emph{bounded fixed point} of a monotone function on a complete lattice generated by an element, which is the greatest (post-)fixed point of {the} corresponding one step 
inference operator, below the least pre-fixed point above the generator;
this turned out to capture the semantics of inference systems with coaxioms (\refToTheorem{correspondence}).
An important property is that the bounded fixed point can be obtained as a combination of a least and a greatest fixed point of suitable functions (\refToProposition{ck-alt}).

From the proof-theoretic perspective, we have provided three different  equivalent  semantics.
All of them essentially impose a condition on coinductive proof trees\footnote{Here we mean proof trees valid for the coinductive interpretation, hence both well-founded and non-well-founded proof trees{.}}   to be accepted, induced by coaxioms.
In other words, all these conditions {allow} us to filter out undesired derivations.
Since in literature we have not found {a rigorous enough} (for our aims) treatment of the standard proof-theoretic semantics of inference systems, we have developed our proof-theoretic model  in more detail, starting from a very precise notion of tree (see \refToSection{trees-graphs}).

The first characterization (\refToSection{trees1}) requires that each judgement in the tree is derivable with a well-founded proof tree  in the extended inference system (the inference system where coaxioms are considered as axioms).
The other two characterizations (\refToSection{approx-trees})  are based on the notion of \emph{approximated proof trees of level $n$}, that are well-founded proof trees in the extended inference system where coaxioms can only be used at depth greater than $n$.
We have proved that all these proof-theoretic semantics are equivalent to each other and to the fixed point semantics.

%

We have also developed proof techniques to reason with coaxioms (\refToSection{coaxioms-reasoning}).
For completeness proofs we have generalized the standard coinduction principle, taking into account also coaxioms,
while for soundness proofs we have described a technique based on approximated proof trees and reasoning by contraposition.  }

Finally, in \refToSection{corules} we have defined a further extension of our framework, allowing also corules, that is, rules used in the same way as coaxioms, but that can have non-empty premises.

\subsubsection*{Further work}
Starting from this work, we identify three main directions for further investigations:
\begin{enumerate}
\item {deepening the theory of coaxioms,}
\item defining language constructs to support flexible (co)inductive  definitions of data types, predicates and functions, 
\item {using coaxioms/corules to model and reason about}  infinite behaviours of programs and systems.
\end{enumerate}

\noindent
In the first direction, a compelling \EZ{topic} for further developments is exploring other proof techniques {for coaxioms, trying to extend proof techniques known for coinduction to this generalized framework.
Possible examples are techniques based on parametrization~\cite{HurNDV13}, or up-to techniques~\cite{Pous07}.


Another important goal we would like to pursue is to provide the support for coaxioms in a proof assistant, such as Agda~\cite{Agda} or Coq~\cite{Coq}, to have a tool to mechanize proofs.
In type theories supporting inductive and coinductive types~\cite{Hagino87,AbelPTS13,AbelPientka13,Basold18}, like the one at the basis of Agda, we can implement inference systems with coaxioms, representing proof trees as a coinductive type, where each node is annotated  by a finite proof tree (given by an inductive type).
What would be interesting is to hide this construction, in such a way that the programmer has only to care about specifying  rules and coaxioms, leaving everything else to the engine. }

An open problem concerning the interpretation generated by coaxioms is its computability.
It is quite obvious that in general this set is not decidable, however it could be interesting {to study} conditions and/or restrictions that ensure at least  that {it} is semi-decidable.
To this aim{,} it could be useful trying to provide another proof-theoretic characterization based on partial proof trees, that are proof trees with assumptions, and form a complete partial order.

Another question concerns the expressive power of this framework.
Here for expressive power we mean how many fixed point{s} of the inference operator we manage to capture using coaxioms.
As we noticed in \refToSection{coaxioms-model}, all fixed points can be generated by a set of coaxioms: it is enough to take as generator the fixed point itself (\refToProposition{bfp-fun}).
However, this sounds not very relevant, since we get something that we already have.
Actually inference systems, and hence inference systems with coaxioms, are never used in the form they are regarded in the development of the meta-theory, but, rather, they are expressed using  a \emph{finite set of meta-rules}, leaving implicit the step from meta-rules to plain rules, which, instead, are considered in the meta-theory.
At this level, the above question becomes more interesting,
however, to deal with this problem, we first need to clarify what is an inference system in terms of meta-rules, filling in the gap between them and plain rule.
{To this aim,  interesting starting points could be~\cite{MomiglianoTiu03, BrotherstonSimpson11}, which discuss proof systems for first-order logics with a notion of inductive and/or coinductive definition}.
Then, in that setting, we would be able to reason about the expressive power of the resulting framework.


Another interesting development is to investigate a variant of the model able to directly capture the definition of functions, rather than representing them as functional relations.
This would be relevant to more appropriately model language constructs to support flexible corecursion in  functional languages.
This variant could also imply a change of framework, moving from lattice theory to domain or category theory, where the semantics of (co)recursive definitions of functions is better supported. 
Therefore a deeper comparison  between coaxioms and category-theoretic or type-theoretic models could be useful.

In the second direction, considering language support for flexible coinduction, we have already taken the first steps by prividing a support for coaxioms in  the logic paradigm~\cite{AnconaDZ17coalp}.

{Extending the notion of coaxioms to support more flexible semantics for recursively defined functions in the object-oriented and functional paradigms is more challenging, due to the gap between the underlying theories.
The simplest idea would be to view functions as relations, which are the {entities} managed by inference systems with coaxioms, however we have always to ensure that the generated fixed point is actually a function, and this is not always guaranteed. }

For the object-oriented paradigm  a starting point could be the revision of the operational semantics of \coFJ~\cite{AnconaZucca12, AnconaZucca13} on the basis of the abstract model provided by coaxioms;
in particular{,} to guarantee that the function denoted by a function definition in \coFJ is actually a fixed point of  the induced monotone operator.
{For the functional paradigm the situation is even more challenging, since we have to deal with more complex constructs such as higher order functions and pattern matching.
A similar problem is addressed in~\cite{JeanninKS13,JeanninKS17}, which could be an interesting starting point. }

In the last direction, starting from the example in \refToSection{op-sem}, it could be interesting to better study the capabilities of coaxioms to model  non-termination.
We have already done a first step in this direction in~\cite{AnconaDZ17oopsla}, where we apply the approach sketched in \refToSection{op-sem} to an imperative \FJ-like language, studying in particular application of proof techniques for coaxioms to prove the soundness of predicates (such as typing relations) with respect to the operational semantics.

A further extension in this direction would be applying coaxioms to define trace-based operational semantics~\cite{NakataUustalu09}, that allow to capture finer characterizations of the behaviour of non-terminating programs.
In this context, corules seem to be needed to properly define the semantics, as we started studying in~\cite{AnconaDZ18}.


\section*{Acknowledgements}
Coaxioms have been initially designed together with professors Davide Ancona and Elena Zucca from University of Genova, who have supervised the \EZ{master} thesis from which this paper is extracted; their suggestions have been essential to better structure this work and to make it clearer and more readable.
Special thanks go also to professor Eugenio Moggi, who has reviewed the thesis, providing several useful comments.
I also would like to thank all the reviewers for their remarks, which have been really important to improve this work.

\bibliographystyle{alpha}
\bibliography{refs}

\end{document}